\PassOptionsToPackage{unicode}{hyperref}
\PassOptionsToPackage{hyphens}{url}
\PassOptionsToPackage{dvipsnames,svgnames,x11names}{xcolor}
\documentclass[
  12pt]{article}

\usepackage{amsmath,amssymb,amsthm,amsfonts,bm,mathtools}
\usepackage{booktabs}
\usepackage{enumitem}
\usepackage{algorithm}
\usepackage{algpseudocode}
\usepackage{iftex}
\ifPDFTeX
  \usepackage[T1]{fontenc}
  \usepackage[utf8]{inputenc}
  \usepackage{textcomp} 
\else 
  \usepackage{unicode-math}
  \defaultfontfeatures{Scale=MatchLowercase}
  \defaultfontfeatures[\rmfamily]{Ligatures=TeX,Scale=1}
\fi
\usepackage{lmodern}
\ifPDFTeX\else  
\fi
\IfFileExists{upquote.sty}{\usepackage{upquote}}{}
\IfFileExists{microtype.sty}{
  \usepackage[]{microtype}
  \UseMicrotypeSet[protrusion]{basicmath} 
}{}
\makeatletter
\@ifundefined{KOMAClassName}{
  \IfFileExists{parskip.sty}{%
    \usepackage{parskip}
  }{
    \setlength{\parindent}{0pt}
    \setlength{\parskip}{6pt plus 2pt minus 1pt}}
}{
  \KOMAoptions{parskip=half}}
\makeatother
\usepackage{xcolor}
\makeatletter
\ifx\paragraph\undefined\else
  \let\oldparagraph\paragraph
  \renewcommand{\paragraph}{
    \@ifstar
      \xxxParagraphStar
      \xxxParagraphNoStar
  }
  \newcommand{\xxxParagraphStar}[1]{\oldparagraph*{#1}\mbox{}}
  \newcommand{\xxxParagraphNoStar}[1]{\oldparagraph{#1}\mbox{}}
\fi
\ifx\subparagraph\undefined\else
  \let\oldsubparagraph\subparagraph
  \renewcommand{\subparagraph}{
    \@ifstar
      \xxxSubParagraphStar
      \xxxSubParagraphNoStar
  }
  \newcommand{\xxxSubParagraphStar}[1]{\oldsubparagraph*{#1}\mbox{}}
  \newcommand{\xxxSubParagraphNoStar}[1]{\oldsubparagraph{#1}\mbox{}}
\fi
\makeatother

\usepackage{longtable,booktabs,array,tabularx}
\usepackage{setspace}
\newcolumntype{L}[1]{>{\raggedright\arraybackslash}p{#1}}
\newcommand{\TableNote}[1]{%
  \par\vspace{0.35em}\noindent
  \begin{minipage}{0.96\linewidth}
    \footnotesize\setstretch{1.05}\raggedright
    \textit{Notes:} #1
  \end{minipage}%
  \par\vspace{0.6\baselineskip}%
}

\usepackage{calc} 
\usepackage{etoolbox}
\makeatletter
\patchcmd\longtable{\par}{\if@noskipsec\mbox{}\fi\par}{}{}
\makeatother
\IfFileExists{footnotehyper.sty}{\usepackage{footnotehyper}}{\usepackage{footnote}}
\makesavenoteenv{longtable}
\usepackage{graphicx}
\makeatletter
\def\maxwidth{\ifdim\Gin@nat@width>\linewidth\linewidth\else\Gin@nat@width\fi}
\def\maxheight{\ifdim\Gin@nat@height>\textheight\textheight\else\Gin@nat@height\fi}
\makeatother
\setkeys{Gin}{width=\maxwidth,height=\maxheight,keepaspectratio}
\makeatletter
\def\fps@figure{htbp}
\makeatother

\makeatletter
\@ifpackageloaded{caption}{}{\usepackage{caption}}
\AtBeginDocument{%
\ifdefined\contentsname
  \renewcommand*\contentsname{Table of contents}
\else
  \newcommand\contentsname{Table of contents}
\fi
\ifdefined\listfigurename
  \renewcommand*\listfigurename{List of Figures}
\else
  \newcommand\listfigurename{List of Figures}
\fi
\ifdefined\listtablename
  \renewcommand*\listtablename{List of Tables}
\else
  \newcommand\listtablename{List of Tables}
\fi
\ifdefined\figurename
  \renewcommand*\figurename{Figure}
\else
  \newcommand\figurename{Figure}
\fi
\ifdefined\tablename
  \renewcommand*\tablename{Table}
\else
  \newcommand\tablename{Table}
\fi
}
\@ifpackageloaded{float}{}{\usepackage{float}}
\floatstyle{ruled}
\@ifundefined{c@chapter}{\newfloat{codelisting}{h}{lop}}{\newfloat{codelisting}{h}{lop}[chapter]}
\floatname{codelisting}{Listing}

\makeatother
\makeatletter
\@ifpackageloaded{caption}{}{\usepackage{caption}}
\@ifpackageloaded{subcaption}{}{\usepackage{subcaption}}
\makeatother

\ifLuaTeX
  \usepackage{selnolig}  
\fi
\usepackage[]{natbib}
\usepackage{bookmark}

\IfFileExists{xurl.sty}{\usepackage{xurl}}{} 
\hypersetup{
  pdftitle={Hyperbolic Latent Position Models for Hierarchical Bipartite Data},
  pdfauthor={Kisung You},
  pdfsubject={arXiv version with supplementary material},
  pdfkeywords={Gromov product, identifiability, item response theory, posterior contraction, variational inference},
  colorlinks=true,
  linkcolor={blue},
  filecolor={Maroon},
  citecolor={Blue},
  urlcolor={Blue},
  pdfcreator={LaTeX}}

\newtheorem{theorem}{Theorem}[section]
\newtheorem{proposition}{Proposition}[section]
\newtheorem{corollary}{Corollary}[section]
\newtheorem{lemma}{Lemma}[section]
\theoremstyle{definition}

\newtheorem{assumption}{Assumption}[section]
\theoremstyle{plain}
\newtheorem*{mainthmequivalence}{Theorem 4.1}
\newtheorem*{mainthmpopulation}{Theorem 4.2}
\newtheorem*{mainpropcurvature}{Proposition 4.1}
\newtheorem*{mainpropcollapse}{Proposition 4.2}
\newtheorem*{mainproptree}{Proposition 4.3}
\newtheorem*{mainthmlocalid}{Theorem 4.3}
\newtheorem*{mainthmcontraction}{Theorem 6.1}
\newtheorem*{mainpropobserved}{Proposition 6.1}
\newtheorem*{maincorsignal}{Corollary 6.1}
\newtheorem*{mainthmvariational}{Theorem 6.2}

\newcommand{\R}{\mathbb R}
\newcommand{\Hh}{\mathbb H}
\newcommand{\E}{\mathbb E}
\newcommand{\KL}{\operatorname{KL}}
\newcommand{\Renyi}{\operatorname{D}_{2}}
\newcommand{\TN}{\operatorname{TN}}
\newcommand{\PG}{\operatorname{PG}}
\newcommand{\Bern}{\operatorname{Bernoulli}}
\newcommand{\Normal}{\mathcal N}
\newcommand{\calA}{\mathcal A}
\newcommand{\calC}{\mathcal C}

\newcommand{\calX}{\mathcal X}
\newcommand{\calQ}{\mathcal Q}

\newcommand{\one}{\mathbf 1}
\newcommand{\projA}{\mathsf P_{\calA^\perp}}
\newcommand{\bD}{\mathbf D}
\newcommand{\bG}{\mathbf G}
\newcommand{\bP}{\mathbf P}
\newcommand{\bM}{\mathbf M}
\newcommand{\dd}{\mathrm d}
\newcommand{\eps}{\varepsilon}
\newcommand{\argmin}{\operatorname*{arg\,min}}
\newcommand{\cover}{\mathfrak N}
\newcommand{\hBer}{h_{\mathrm{Ber}}}

\DeclareMathOperator{\arcosh}{arcosh}
\DeclareMathOperator{\Exp}{Exp}
\DeclareMathOperator{\Log}{Log}
\DeclareMathOperator{\LCA}{LCA}
\DeclareMathOperator{\clip}{clip}
\DeclareMathOperator{\logit}{logit}

\newcommand{\anon}{1}

\begin{document}

\def\spacingset#1{\renewcommand{\baselinestretch}%
{#1}\small\normalsize} \spacingset{1}


\if1\anon
{
  \title{\bf Hyperbolic Latent Position Models for Hierarchical Bipartite Data}
  \author{Kisung You\textsuperscript{1,2,3}
  \\[0.5em]
  \textsuperscript{1}Department of Mathematics, Baruch College\\
  \textsuperscript{2}Department of Mathematics, CUNY Graduate Center\\
  \textsuperscript{3}Department of Psychiatry, Yale University}
  \maketitle
} \fi

\if0\anon
{
  \bigskip
  \bigskip
  \bigskip
  \begin{center}
    {\LARGE\bf Hyperbolic Latent Position Models for Hierarchical Bipartite Data}
\end{center}
  \medskip
} \fi

\bigskip
\begin{abstract}
Binary bipartite data often combine row and column heterogeneity with hierarchical interaction, as when students answer exercises organized by prerequisites or legislators vote on nested policy areas.
Fixed-dimensional Euclidean volume grows only polynomially with radius, whereas branching hierarchies expand exponentially.
Hyperbolic space matches this growth, and its rooted Gromov product measures shared ancestry.
We propose the \emph{HypErbolic Latent Position model for bipartite Interaction data} (HELPI), which places both object types in hyperbolic space while separating geometric interaction from additive propensities.
With unrestricted main effects, a hyperbolic-distance predictor is likelihood-equivalent to a rooted Gromov-product predictor.
Radial depth is absorbed by additive terms, leaving a root-invariant projected hierarchy signal as the identified geometric target.
We establish identification and posterior contraction results and develop augmented variational procedures for binary outcomes.
Simulations support recovery of hierarchical interaction and clarify weak-branch regimes.
In Junyi Academy data, curriculum anchors yield an interpretable geometry with temporal prediction close to item-response benchmarks, while distinguishing attempted exercises from unconditional mastery.
In U.S. House roll calls, flexible HELPI variants improve on additive and two-parameter logistic benchmarks and recover party and policy structure without using party labels during fitting.
\end{abstract}

\noindent%
{\it Keywords:} Gromov product; identifiability; item response theory; posterior contraction; variational inference.
\vfill

\newpage
\spacingset{1.8} 

\section{Introduction}
\label{sec:introduction}

Bipartite data record interactions between two kinds of objects. Students answer exercises, consumers select products, citizens follow political accounts, and patients receive diagnoses. For a binary response, the observation is a rectangular array
\[
Y=(Y_{ij})\in\{0,1\}^{n\times m},
\]
where $Y_{ij}=1$ indicates an interaction between row object $i$ and column object $j$. Figure~\ref{fig:bipartite} shows the same data as an array and a bipartite graph. The two views emphasize relational profile and cross-type organization, respectively.

\begin{figure}[ht]
\centering
\includegraphics[width=\textwidth]{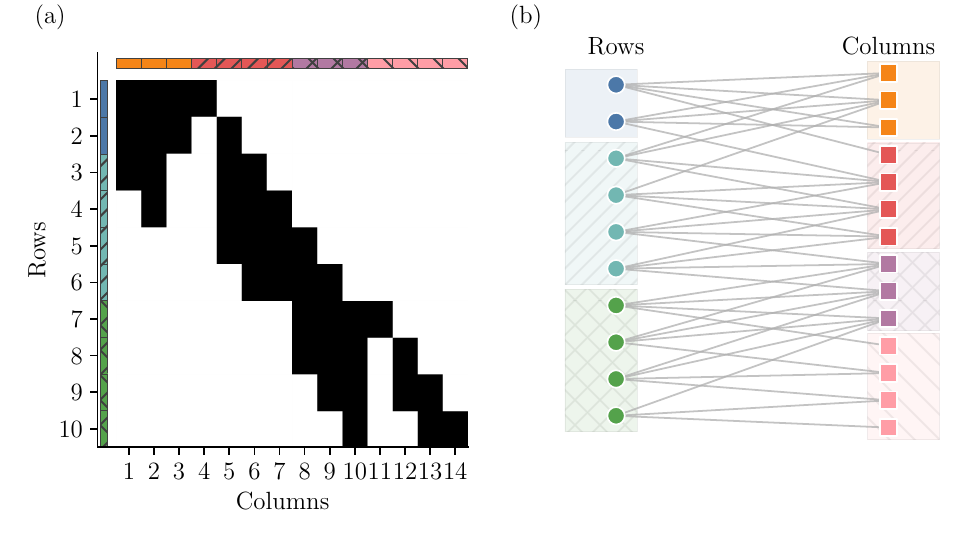}
\caption{A toy bipartite binary dataset in two equivalent forms. Panel (a) is a $10\times14$ row-by-column array with a descending block pattern. Solid cells indicate positive dyads, while filled and hatched marginal bands identify groups. Panel (b) displays the same observations as a graph, with row circles, column squares, and one edge for each solid cell.}
\label{fig:bipartite}
\end{figure}

Many such arrays often reflect hierarchy rather than flat similarity. Educational items follow prerequisite structures, products belong to nested taxonomies, and information sources form topic trees. A hierarchy with branching factor $b$ has $b^h$ descendants at depth $h$. Fixed-dimensional Euclidean volume grows polynomially with radius, whereas hyperbolic volume grows exponentially. This correspondence motivates hyperbolic representations of trees, taxonomies, and complex networks \citep{krioukov_2010_HyperbolicGeometryComplex,nickel_2017_PoincareEmbeddingsLearning,smith_2019_GeometryContinuousLatent}. Figure~\ref{fig:capacity} contrasts the two regimes. Its right panel uses the Poincar\'e disk, while the calculations use the equivalent Lorentz model.

\begin{figure}[t]
\centering
\includegraphics[width=\textwidth]{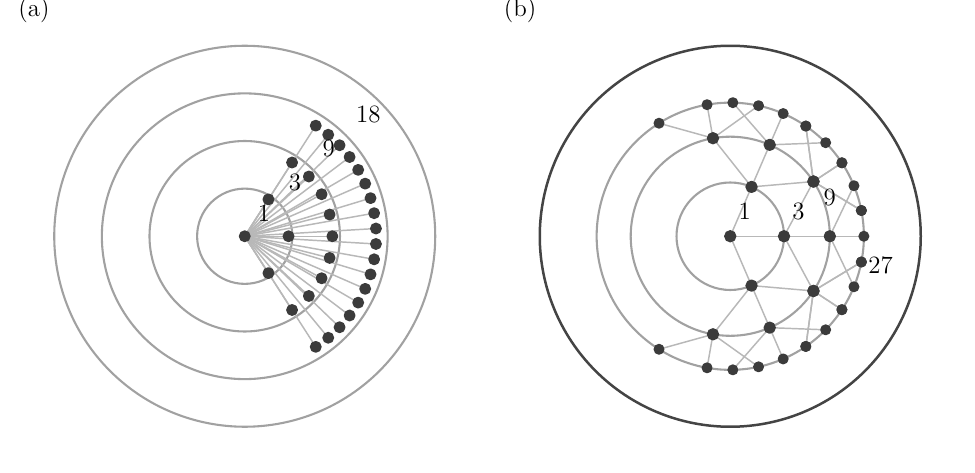}
\caption{Representational capacity in flat and negatively curved spaces. Panel (a) places successively larger node layers on evenly spaced Euclidean circles, illustrating polynomial volume growth. Panel (b) places a three-level branching tree inside a Poincar\'e disk. Its radial spacing increases toward the boundary as available volume grows exponentially with depth.}
\label{fig:capacity}
\end{figure}

Geometry alone, however, does not settle the statistical problem. Bipartite data almost always require row and column main effects: some rows are more active, and some columns are more popular, visible, easy, or prevalent. Once these main effects are admitted, hyperbolic distance no longer identifies absolute depth. Fix a root $o$ and let $z_i,w_j$ be row and column positions. The rooted Gromov product
\[
(z_i\mid w_j)_o=\frac12\{d(o,z_i)+d(o,w_j)-d(z_i,w_j)\}
\]
measures their shared initial path from the root. In a tree it is the depth of the lowest common ancestor. The exact identity
\[
\alpha_i+\beta_j-\gamma d(z_i,w_j)
=
\{\alpha_i-\gamma d(o,z_i)\}
+
\{\beta_j-\gamma d(o,w_j)\}
+
2\gamma(z_i\mid w_j)_o
\]
shows that unrestricted row and column effects absorb the radial terms. The scientifically relevant nonadditive object is therefore shared ancestry, not radial depth by itself.

Built around that observation, we propose HELPI, the \emph{HypErbolic Latent Position model for bipartite Interaction data}. It provides a generalized linear model (GLM) framework that combines geometric interaction with binary, count, ordinal, or other response families, though we primarily focus on the binary case throughout this paper. The binary setting permits transparent logistic and probit computation while keeping the identification argument separate from response-specific details. Other GLMs change the observation model but retain the geometric decomposition.

The exact distance-to-Gromov identity identifies the geometric component that remains after row and column effects are removed. This component is the root-invariant projected hierarchy signal. The accompanying analysis covers curvature-coefficient confounding, ordered single-branch collapse, and local branch resolvability. For computation, we propose  algorithms using P\'olya-Gamma or Albert-Chib augmentation with bounded tangent-space mean-field factors. The theory establishes posterior contraction for the fitted probability matrix and the identified hierarchy signal. Simulations and real-data examples assess prediction, interpretation, and scope across several interaction structures.

The rest of this paper is organized as follows. Section~\ref{sec:background} reviews related work and introduces the required hyperbolic geometry. Sections~\ref{sec:model} through \ref{sec:theory} develop the model, identification analysis, computation, and asymptotic theory. Section~\ref{sec:experiments} reports simulations and real-data applications, and Section~\ref{sec:discussion} concludes. All proofs are contained in the supplementary material, which also gives complete empirical designs, diagnostics, and additional results. Additionally, the code and scripts to reproduce all experiments are available in a publicly accessible repository at \url{https://github.com/kisungyou/HELPI}.

\section{Background}\label{sec:background}

\subsection{Related work}\label{sec:related}

Binary bipartite latent models are established in political methodology, psychometrics, and network statistics. Ideal-point models and spatial-following models infer positions after controlling for row and column heterogeneity \citep{clinton_2004_StatisticalAnalysisRoll,barbera_2015_BirdsSameFeather}. Rasch and two-parameter logistic models separate respondent and item effects \citep{birnbaum_1968_LatentTraitModels,rasch_1980_ProbabilisticModelsIntelligence}, while latent-space item-response and latent-surface models add relational structure \citep{mccormick_2015_LatentSurfaceModels,jeon_2021_MappingUnobservedItem}. These approaches motivate separating marginal propensity from cross-type compatibility.

The closest Euclidean alternatives use latent distances or bilinear terms \citep{hoff_2002_LatentSpaceApproaches,hoff_2005_BilinearMixedEffectsModels}. HELPI instead maps hyperbolic distances or Gromov products into a nonlinear predictor family that need not be low rank. Consequently, matrix-completion and graphon rates do not apply without additional analysis \citep{davenport_2014_1BitMatrixCompletion,gao_2015_RateoptimalGraphonEstimation}. Negative curvature has supported random graph models, hierarchical embeddings, recommender systems, and Bayesian network models \citep{krioukov_2010_HyperbolicGeometryComplex,kitsak_2017_LatentGeometryBipartite,nickel_2017_PoincareEmbeddingsLearning,ganea_2018_HyperbolicNeuralNetworks,chamberlain_2019_ScalableHyperbolicRecommender,smith_2019_GeometryContinuousLatent,vinhtran_2020_HyperMLBoostingMetric,papamichalis_2022_LatentSpaceNetwork}. HELPI determines which feature of this geometry remains identifiable after saturated main effects are included. Its binary algorithms combine standard logistic or probit augmentation with bounded tangent-space factors \citep{albert_1993_BayesianAnalysisBinary,jaakkola_2000_BayesianParameterEstimation,polson_2013_BayesianInferenceLogistic,durante_2019_ConditionallyConjugateMeanField,mathieu_2019_ContinuousHierarchicalRepresentations,nagano_2019_WrappedNormalDistribution}.

\subsection{Hyperbolic geometry and rooted hierarchy}
\label{sec:geometry}

We use the Lorentz hyperboloid of sectional curvature $-1$,
\[
\begin{aligned}
\Hh^p&=\{x\in\R^{p+1}:\langle x,x\rangle_{\mathrm L}=-1,\ x_0>0\},\\
\langle x,y\rangle_{\mathrm L}&=-x_0y_0+\sum_{k=1}^p x_ky_k,
\qquad d(x,y)=\arcosh\{-\langle x,y\rangle_{\mathrm L}\}.
\end{aligned}
\]
Fix the root $o=(1,0,\ldots,0)^\top$. For $\tau\in\R^p$ with $\ell=\|\tau\|$, its exponential map is
\[
\Exp_o(\tau)=(\cosh\ell,\ \sinh\ell\,\tau^\top/\ell)^\top,
\]
with $\Exp_o(0)=o$. The norm $\ell$ represents depth, and the direction $\tau/\ell$ represents branch orientation. The rooted Gromov product
\[
G_o(z,w)=(z\mid w)_o=\tfrac12\{d(o,z)+d(o,w)-d(z,w)\}
\]
measures shared travel from the root \citep{bridson_1999_MetricSpacesNonpositive}. For row positions $z_i$ and column positions $w_j$, write $G_{ij}=G_o(z_i,w_j)$ and $\bG=(G_{ij})$.

Define the row-column additive subspace
\[
\calA=\{a\one_m^\top+\one_n b^\top:a\in\R^n,\ b\in\R^m\},
\]
and let $\projA$ be orthogonal projection onto $\calA^\perp$ in Frobenius norm. If the root changes from $o$ to $o'$, then $G_o(z_i,w_j)-G_{o'}(z_i,w_j)$ is a row term plus a column term. Hence $\projA\bG$ is root invariant.

One may use any negative value for the curvature of target latent space. However, we use the fixed curvature value of $-1$ throughout this paper. It will later be shown that the geometry coefficient can offset any change in curvature. If curvature were also estimated, a prior whose form changes under this rescaling could favor one likelihood-equivalent parameter pair over another. That posterior preference would be induced by the prior rather than identified by the data.

\section{Model, observation scheme, and priors}\label{sec:model}

\subsection{Binary HELPI and broader GLM scope}

For a suitable response family and link, HELPI adds a geometric interaction to row and column effects. This paper specializes to binary responses,
\[
Y_{ij}\mid\pi_{ij}\sim\Bern(\pi_{ij}),
\qquad \pi_{ij}=H(\eta_{ij}),
\]
where $H$ is either the logistic distribution function $\Lambda$ or the standard normal distribution function $\Phi_0$. The distance and shared-ancestry predictors are modeled as 
\[
\eta_{ij}=\alpha_i+\beta_j-\gamma d(z_i,w_j)
\quad\text{and}\quad
\eta_{ij}=\alpha_i+\beta_j+\lambda G_o(z_i,w_j),\qquad \gamma,\lambda>0,
\]
respectively.  The main development uses the second form; other GLMs retain the same geometric decomposition.

Let $A_{ij}=\alpha_i+\beta_j$. We note that $A\in\calA$ is identified since the decomposition into $(\alpha,\beta)$ is invariant under $(\alpha_i,\beta_j)\mapsto(\alpha_i+c,\beta_j-c)$. A centering convention may be imposed for computation without changing the model.

\subsection{Observed dyads and exposure}

Let $\Omega$ be the observed-dyad set. Conditional on the parameters and $\Omega$, its entries are independent with likelihood
\[
L_\Omega(\theta)=\prod_{(i,j)\in\Omega}
\pi_{ij}(\theta)^{Y_{ij}}\{1-\pi_{ij}(\theta)\}^{1-Y_{ij}}.
\]
All computational sums are over $\Omega$. Treating $\Omega$ as fixed is appropriate when exposure is fixed or ignorable given observed information. This distinction matters in education because an unattempted exercise is missing rather than incorrect. The Junyi estimand is therefore correctness conditional on an attempt. Estimating unconditional mastery would require a joint exposure-response model. Theorem~\ref{thm:contraction} treats a complete array, and Proposition~\ref{prop:observed} gives the fixed-design observed-dyad result.

\subsection{Priors and bounded-depth model}

Writing $z_i=\Exp_o(\tau_i)$ and $w_j=\Exp_o(\upsilon_j)$, the bounded-depth model uses
\[
\alpha_i\stackrel{\mathrm{ind}}\sim\Normal(0,\sigma_\alpha^2),
\qquad
\beta_j\stackrel{\mathrm{ind}}\sim\Normal(0,\sigma_\beta^2),
\qquad
\lambda\sim\TN_{[0,\infty)}(0,\sigma_\lambda^2),
\]
and multivariate Gaussian tangent priors truncated to $B_{\mathrm E}(R_0)=\{u:\|u\|\le R_0\}$. The asymptotic analysis imposes the same depth bound on the latent positions. We select $R_0$ from external depth information or a prespecified predictive grid and report the fitted boundary fraction. A known hierarchy can fix column positions at anchors $\widetilde w_j$ or center soft priors at their tangent coordinates. Supplementary Section S5 gives the full prior and diagnostic specification.


\section{Identification and hierarchical interpretation}
\label{sec:identification}

\subsection{Projected hierarchy signal and population equivalence}

Let $\bM=(\eta_{ij})$ and $\bP=(H(\eta_{ij}))$. Since both links are strictly increasing, $\bM$ and $\bP$ determine one another at the population level. Define
\[
\mathcal H=\lambda\projA\bG.
\]
We call $\mathcal H$ the {identified hierarchy signal} and use that term consistently below.

\begin{theorem}[Distance-Gromov-product equivalence]
\label{thm:equivalence}
For any root $o$, set
\[
\widetilde\alpha_i=\alpha_i-\gamma d(o,z_i),
\quad
\widetilde\beta_j=\beta_j-\gamma d(o,w_j),
\quad
\lambda=2\gamma.
\]
Then
\[
\alpha_i+\beta_j-\gamma d(z_i,w_j)
=
\widetilde\alpha_i+\widetilde\beta_j+\lambda(z_i\mid w_j)_o.
\]
Conversely, every shared-ancestry predictor with $\lambda>0$ has a distance representation with $\gamma=\lambda/2$ after redefining the main effects.
\end{theorem}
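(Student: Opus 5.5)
The proof is a direct algebraic rearrangement built on the definition of the rooted Gromov product in Section~\ref{sec:geometry}. That definition can be solved for the pairwise distance:
\[
d(z_i,w_j)=d(o,z_i)+d(o,w_j)-2(z_i\mid w_j)_o .
\]
This identity holds for every root $o$ and every pair of points. It uses only the definition, not the tree interpretation or any hyperbolic-specific property such as $\delta$-hyperbolicity.

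For the forward direction, we substitute this identity into the distance predictor and collect terms:
\[
\alpha_i+\beta_j-\gamma d(z_i,w_j)=\{\alpha_i-\gamma d(o,z_i)\}+\{\beta_j-\gamma d(o,w_j)\}+2\gamma(z_i\mid w_j)_o .
\]
The first brace depends only on $i$, since $z_i$ is a row parameter and $o$ is fixed. The second brace depends only on $j$. With $\widetilde\alpha_i$, $\widetilde\beta_j$ and $\lambda=2\gamma$ as defined in the statement, this is exactly the displayed equality. Moreover, $\gamma>0$ gives $\lambda>0$, so the right side is a legitimate shared-ancestry predictor.

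For the converse, we start from $\widetilde\alpha_i+\widetilde\beta_j+\lambda(z_i\mid w_j)_o$ with $\lambda>0$ and set $\gamma=\lambda/2>0$. We define $\alpha_i=\widetilde\alpha_i+\gamma d(o,z_i)$ and $\beta_j=\widetilde\beta_j+\gamma d(o,w_j)$. Running the same identity in reverse gives
\[
\widetilde\alpha_i+\widetilde\beta_j+\lambda(z_i\mid w_j)_o=\alpha_i+\beta_j-\gamma d(z_i,w_j).
\]
The positions $(z_i,w_j)$ are left unchanged, so both parametrizations yield the same $\bM$. Because $H$ is strictly increasing, they also yield the same $\bP$, and hence the same likelihood $L_\Omega$.

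There is no genuine obstacle here. The only point needing care is the claim that the main-effect parameter spaces are unrestricted real vectors. That claim is what makes the redefined $\widetilde\alpha,\widetilde\beta$ (or $\alpha,\beta$) admissible. Under the Gaussian priors this holds, since their support is all of $\R^n$ and $\R^m$. If a centering convention is imposed, we simply recentre afterwards using the invariance $(\alpha_i,\beta_j)\mapsto(\alpha_i+c,\beta_j-c)$ noted in Section~\ref{sec:model}, which leaves $A$ unchanged.
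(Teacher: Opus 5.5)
Your proposal is correct and matches the paper's proof: the forward identity comes from substituting the definition $2\gamma(z_i\mid w_j)_o=\gamma d(o,z_i)+\gamma d(o,w_j)-\gamma d(z_i,w_j)$, and the converse sets $\gamma=\lambda/2$ and adds $\gamma d(o,z_i)$ and $\gamma d(o,w_j)$ back into the two main effects. Your extra remarks on likelihood equivalence, the admissibility of unrestricted main effects, and recentering are harmless additions that the purely algebraic statement does not need.
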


Theorem~\ref{thm:equivalence} shows that latent distance and shared ancestry give the same predictors after radial terms enter the main effects. Identification therefore depends on the component that remains after additive effects are removed. The following theorem characterizes that quotient parameter.

\begin{theorem}[Population equivalence modulo additive effects]
\label{thm:population}
Let $(\lambda,\bG)$ and $(\lambda',\bG')$ be two hierarchical parameter pairs. There exist additive matrices $A,A'\in\calA$ such that
\[
A+\lambda\bG=A'+\lambda'\bG'
\]
if and only if
\[
\lambda'\bG'-\lambda\bG\in\calA.
\]
Equivalently, the two hierarchical pairs are observationally equivalent under unrestricted main effects if and only if
\[
\lambda\projA\bG=\lambda'\projA\bG'.
\]
The same identified hierarchy signal satisfies
\[
\mathcal H=-\gamma\projA\bD
\]
in the distance representation and is invariant to the choice of root.
\end{theorem}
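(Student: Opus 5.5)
The whole argument is linear algebra on the Frobenius-orthogonal decomposition $\R^{n\times m}=\calA\oplus\calA^\perp$, together with Theorem~\ref{thm:equivalence}.

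\textbf{First equivalence.} Since $\calA$ is a linear subspace, suppose $A+\lambda\bG=A'+\lambda'\bG'$. Then $\lambda'\bG'-\lambda\bG=A-A'$, and this lies in $\calA$. Conversely, if $\lambda'\bG'-\lambda\bG=C\in\calA$, take any $A\in\calA$ and set $A'=A-C\in\calA$. This verifies the displayed identity.

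\textbf{Projection criterion.} The subspace $\calA$ is exactly the kernel of $\projA$, because $\projA$ is the orthogonal projection onto $\calA^\perp$. By linearity, $\lambda'\bG'-\lambda\bG\in\calA$ if and only if $\lambda'\projA\bG'=\lambda\projA\bG$.

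To phrase this as observational equivalence, note the following. The links are strictly increasing, so the distribution of $Y$ determines $\bP$, hence $\bM$. Each predictor matrix has the form $A+\lambda\bG$ with $A$ ranging over all of $\calA$, since the main effects are unrestricted. The row and column effects that realize a given $A$ exist, because $\calA$ is parametrized by $(a,b)$. Thus two hierarchical pairs generate a common law for some choice of main effects exactly under the first equivalence.

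\textbf{Distance representation.} Let $\bD=(d(z_i,w_j))$ and let $r_i=d(o,z_i)$, $s_j=d(o,w_j)$. The definition of the Gromov product gives
\[
\bG=\tfrac12\bigl(r\one_m^\top+\one_n s^\top\bigr)-\tfrac12\bD,
\]
and the first term lies in $\calA$. Applying $\projA$ gives $\projA\bG=-\tfrac12\projA\bD$. With $\lambda=2\gamma$ from Theorem~\ref{thm:equivalence}, this yields $\mathcal H=-\gamma\projA\bD$.

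\textbf{Root invariance.} For another root $o'$, write $\bG_{o}-\bG_{o'}$ entrywise:
\[
\tfrac12\{d(o,z_i)-d(o',z_i)\}+\tfrac12\{d(o,w_j)-d(o',w_j)\}.
\]
This is a row term plus a column term, so it lies in $\calA$ and is annihilated by $\projA$. The distance form gives the same conclusion directly, since $\bD$ does not involve the root at all.

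\textbf{Main obstacle.} No step is technically difficult. The only point that needs care is the interpretive step: ``observationally equivalent under unrestricted main effects'' must be read as the existence of main effects producing equal $\bM$, with $\lambda$ and $\bG$ otherwise free. The proof should make that reading explicit.
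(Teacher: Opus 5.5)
Your proof is correct and follows the paper's argument essentially step for step. The forward direction uses $A-A'\in\calA$. The converse picks explicit additive matrices: your choice $A'=A-C$ for arbitrary $A$ generalizes the paper's $A=C$, $A'=0$, and in fact shows that the cosets $\calA+\lambda\bG$ and $\calA+\lambda'\bG'$ coincide. The remaining steps match as well: $\ker\projA=\calA$, the Gromov--distance identity with $\lambda=2\gamma$, and the additive root-change difference. Your remark that root invariance also follows directly from $\projA\bG=-\tfrac12\projA\bD$, with $\bD$ root-free, is a valid shortcut.
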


If the additive terms are fixed, equality of the two predictors requires the stronger identity $A-A'=\lambda'\bG'-\lambda\bG$. Theorem~\ref{thm:population} instead characterizes hierarchical equivalence when the additive terms may vary freely.

\begin{proposition}[Curvature-coefficient confounding in the likelihood]
\label{prop:curvature}
Let the same differentiable manifold carry the curvature-$-1$ metric $\mathfrak g_1$ and the rescaled metric $\mathfrak g_{\kappa_c}=\kappa_c^{-1}\mathfrak g_1$, whose sectional curvature is $-\kappa_c$. Under the identity map between these two metric spaces,
\[
d_{\kappa_c}=\kappa_c^{-1/2}d_1,
\qquad
(\cdot\mid\cdot)_{o,\kappa_c}=\kappa_c^{-1/2}(\cdot\mid\cdot)_{o,1}.
\]
Thus the likelihood depends on $(\gamma,\kappa_c)$ or $(\lambda,\kappa_c)$ only through $\gamma/\sqrt{\kappa_c}$ or $\lambda/\sqrt{\kappa_c}$. A prior whose form changes under this rescaling can favor one equivalent curvature-coefficient pair over another. Such a posterior preference is induced by the prior and is not identified by the data.
\end{proposition}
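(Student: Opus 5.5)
The argument rests on one fact: scaling a Riemannian metric by a constant scales arc length, hence distance, by the square root of that constant. Everything else follows by linearity.

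\emph{Distances.} For a piecewise smooth curve $c$, the length under $\mathfrak g_{\kappa_c}=\kappa_c^{-1}\mathfrak g_1$ is
\[
L_{\kappa_c}(c)=\int\sqrt{\kappa_c^{-1}\mathfrak g_1(\dot c,\dot c)}\,\dd t=\kappa_c^{-1/2}L_1(c).
\]
Taking the infimum over curves joining two points gives $d_{\kappa_c}=\kappa_c^{-1/2}d_1$ under the identity map. Sectional curvature scales inversely with a constant metric rescaling, so $\mathfrak g_{\kappa_c}$ has curvature $-\kappa_c$. As a cross-check in the Lorentz model, one can realize $\mathfrak g_{\kappa_c}$ as the hyperboloid $\langle x,x\rangle_{\mathrm L}=-1/\kappa_c$. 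The map $x\mapsto x/\sqrt{\kappa_c}$ then gives
\[
d_{\kappa_c}(x,y)=\kappa_c^{-1/2}\arcosh\{-\kappa_c\langle x,y\rangle_{\mathrm L}\},
\]
which is consistent with the scaling.

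\emph{Gromov products.} The rooted Gromov product is a fixed linear combination of three distances. Applying the distance scaling to each term gives $(\cdot\mid\cdot)_{o,\kappa_c}=\kappa_c^{-1/2}(\cdot\mid\cdot)_{o,1}$ for the same root $o$.

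\emph{The likelihood.} Substitute these identities into the two predictors. The distance form becomes
\[
\eta_{ij}=\alpha_i+\beta_j-(\gamma/\sqrt{\kappa_c})\,d_1(z_i,w_j),
\]
and the shared-ancestry form becomes
\[
\eta_{ij}=\alpha_i+\beta_j+(\lambda/\sqrt{\kappa_c})\,G_{o,1}(z_i,w_j).
\]
The latent positions here are points of one fixed manifold, unchanged by the identity map. Hence, with positions and main effects held fixed, $\bM$ depends on $(\gamma,\kappa_c)$ only through $\gamma/\sqrt{\kappa_c}$, or on $(\lambda,\kappa_c)$ only through $\lambda/\sqrt{\kappa_c}$. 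The same holds for $\bP=H(\bM)$ and for $L_\Omega$. Every pair on the curve $\gamma/\sqrt{\kappa_c}=c$ is therefore likelihood-equivalent.

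\emph{The prior statement.} The posterior equals the likelihood times the prior. Along an equivalence curve the likelihood is constant, so the posterior density restricted to the curve is proportional to the prior restricted to it. Any nonconstant behaviour along the curve is therefore contributed by the prior alone. A concrete illustration: a prior that is a product of a $\TN$ prior on $\lambda$ and an independent prior on $\kappa_c$ is not invariant under the reparametrization $(\lambda,\kappa_c)\mapsto(s\lambda,s^2\kappa_c)$, so it weights likelihood-equivalent pairs unequally.

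\emph{Where care is needed.} The main point to handle carefully is the phrase ``under the identity map.'' The latent positions must be regarded as points of the underlying manifold, not re-embedded when the curvature changes. In hyperboloid coordinates one must also track the ambient rescaling $x\mapsto x/\sqrt{\kappa_c}$ explicitly. Once this is made precise, the remaining steps are direct substitutions.
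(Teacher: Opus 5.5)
Your proposal is correct and follows the paper's proof: rescaling the metric by $\kappa_c^{-1}$ multiplies path lengths, and hence distances, by $\kappa_c^{-1/2}$, and the rooted Gromov product scales the same way because it is a linear combination of three distances. Your extra steps (the Lorentz-model cross-check, substituting into the two predictors, and the remark that along an equivalence curve only the prior varies) spell out what the paper's two-line proof leaves implicit.
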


Fixing curvature removes the scale alias before branch structure is considered. Proposition~\ref{prop:collapse} then identifies an exact configuration where the projected branch signal disappears.

\begin{proposition}[Ordered single-branch collapse]
\label{prop:collapse}
Suppose all $z_i,w_j$ lie on one geodesic ray from $o$, with depths $\ell_i^{(z)}=d(o,z_i)$ and $\ell_j^{(w)}=d(o,w_j)$. If either $\min_i \ell_i^{(z)}>\max_j \ell_j^{(w)}$ or $\min_j \ell_j^{(w)}>\max_i \ell_i^{(z)}$, then $\bD\in\calA$ and $\projA\bG=0$. The geometric contribution is therefore observationally equivalent to main effects alone.
\end{proposition}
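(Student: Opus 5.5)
The plan is to compute $\bD$ explicitly on the geodesic ray, show it lies in $\calA$, and then transfer the conclusion to $\bG$ using Theorem~\ref{thm:population}. Points on a single geodesic ray from $o$ are parametrized isometrically by their depth. If $\gamma_0:[0,\infty)\to\Hh^p$ is a unit-speed ray with $\gamma_0(0)=o$, then $z_i=\gamma_0(\ell_i^{(z)})$ and $w_j=\gamma_0(\ell_j^{(w)})$. A geodesic in $\Hh^p$ is globally distance minimizing, so $d(z_i,w_j)=|\ell_i^{(z)}-\ell_j^{(w)}|$. This also follows concretely from the Lorentz model, writing $\Exp_o(t u)$ for a fixed unit vector $u$ and using $\cosh a\cosh b-\sinh a\sinh b=\cosh(a-b)$.

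Next I will use the ordering hypothesis to remove the absolute value. Suppose $\min_i\ell_i^{(z)}>\max_j\ell_j^{(w)}$. Then $\ell_i^{(z)}-\ell_j^{(w)}>0$ for every pair, so
\[
D_{ij}=\ell_i^{(z)}-\ell_j^{(w)}, \qquad \bD=a\one_m^\top+\one_n b^\top,
\]
with $a_i=\ell_i^{(z)}$ and $b_j=-\ell_j^{(w)}$. Hence $\bD\in\calA$. The other ordering is symmetric, with $a_i=-\ell_i^{(z)}$ and $b_j=\ell_j^{(w)}$.

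For the Gromov product, I will write
\[
\bG=\tfrac12\bigl\{\ell^{(z)}\one_m^\top+\one_n(\ell^{(w)})^\top\bigr\}-\tfrac12\bD .
\]
The first term lies in $\calA$, and $\bD\in\calA$ by the previous step, so $\bG\in\calA$ and therefore $\projA\bG=0$. Directly, in the first ordering $G_{ij}=\ell_j^{(w)}$, the depth of the shallower point, which is the lowest-common-ancestor depth on a single branch. Consequently $\mathcal H=\lambda\projA\bG=0$.

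Finally, the observational-equivalence claim follows from Theorem~\ref{thm:population} with $\lambda'=0$ (or $\bG'=0$). The predictor $A+\lambda\bG$ equals $A'$ with $A'=A+\lambda\bG\in\calA$, so it is a pure main-effects model. The same conclusion holds in the distance form through Theorem~\ref{thm:equivalence}. The only step with real content is justifying $d=|\ell-\ell'|$ along the ray. I expect it to be routine via the explicit $\cosh$ identity, so there is no substantial obstacle.
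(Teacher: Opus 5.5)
Your proof is correct and follows essentially the same route as the paper: under the ordering, $D_{ij}=\ell_i^{(z)}-\ell_j^{(w)}$ is additive and $G_{ij}=\ell_j^{(w)}$ is column additive, and the other ordering is symmetric. The only addition is your explicit justification that $d(z_i,w_j)=|\ell_i^{(z)}-\ell_j^{(w)}|$ along the ray, which the paper uses without comment.
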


Proposition~\ref{prop:collapse} describes a boundary where ordered radial variation is entirely additive. Under genuine branching, the projected Gromov matrix has a direct common-ancestry interpretation. For an exact rooted tree metric, this interpretation reduces to lowest-common-ancestor depth.

\begin{proposition}[Tree interpretation]
\label{prop:lca}
For an exact rooted tree metric,
\[
G_{ij}=\operatorname{depth}\{\LCA(z_i,w_j)\}.
\]
Hence $\mathcal H$ is the projected matrix of lowest-common-ancestor depths multiplied by $\lambda$.
\end{proposition}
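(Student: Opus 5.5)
The plan is to work directly from the definition of an exact rooted tree metric and the definition of the rooted Gromov product. I will take the rooted tree to be a metric tree (a graph with positive edge lengths, or its geodesic realization) with root $o$, with $\operatorname{depth}(v)=d(o,v)$. The defining property to use is that between any two points there is a unique geodesic, namely the tree path, and the metric is additive along it. The statement has two parts. The first is the LCA identity $G_{ij}=\operatorname{depth}\{\LCA(z_i,w_j)\}$. The second, the claim about $\mathcal H$, then follows immediately: $\mathcal H=\lambda\projA\bG$ by definition, and $\bG$ is entrywise the LCA-depth matrix.

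For the identity, set $v=\LCA(z_i,w_j)$. This is the last common vertex of the root paths $[o,z_i]$ and $[o,w_j]$; equivalently, $[o,z_i]\cap[o,w_j]=[o,v]$. The key step is to show that the unique path from $z_i$ to $w_j$ is the concatenation $[z_i,v]\cup[v,w_j]$. By the choice of $v$, the segments $[v,z_i]$ and $[v,w_j]$ meet only at $v$. Their union is therefore a simple path, and by uniqueness of paths in a tree it is the geodesic, which gives $d(z_i,w_j)=d(z_i,v)+d(v,w_j)$. Since $v$ lies on both root paths, additivity also gives $d(o,z_i)=d(o,v)+d(v,z_i)$ and $d(o,w_j)=d(o,v)+d(v,w_j)$. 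Substituting these into
\[
G_o(z_i,w_j)=\tfrac12\{d(o,z_i)+d(o,w_j)-d(z_i,w_j)\}
\]
cancels the terms $d(v,z_i)$ and $d(v,w_j)$ and leaves $d(o,v)=\operatorname{depth}(v)$.

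The degenerate cases, where $z_i$ is an ancestor of $w_j$ or vice versa, fall out of the same computation. In that case $v$ is one of the two points and one of the segments has length zero. I would also remark, for consistency with Theorem~\ref{thm:population}, that changing the root alters $\bG$ only by row and column terms. So although the LCA depths themselves depend on the root, their projection does not.

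I expect the only real obstacle to be foundational: pinning down what ``exact rooted tree metric'' means, so that the path-concatenation step is justified. I would handle this by taking the definition to be a $0$-hyperbolic geodesic metric space, equivalently an $\R$-tree, in which any two points are joined by a unique arc. The argument that $[z_i,v]\cup[v,w_j]$ is an arc then uses only the fact that $v$ is the branch point of the two root arcs. If the intended setting is instead a finite tree on which the $z_i,w_j$ are placed at vertices, the same argument holds verbatim with graph paths.
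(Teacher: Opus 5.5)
Your proposal is correct and follows the paper's proof. Both rest on the identity $d(z_i,w_j)=d(o,z_i)+d(o,w_j)-2\,d(o,v)$ for $v=\LCA(z_i,w_j)$, substituted into the Gromov product; the only difference is that you justify this identity through the arc-concatenation argument in an $\R$-tree, whereas the paper simply asserts it.
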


An arbitrary finite tree metric need not embed isometrically in a fixed-dimensional hyperbolic space. Proposition~\ref{prop:lca} is exact for the tree metric itself. For a hyperbolic embedding, it describes the target approached as embedding distortion decreases. In a $\delta$-hyperbolic approximation, Gromov products recover common-ancestor depth up to distortion controlled by $\delta$ \citep{bridson_1999_MetricSpacesNonpositive}.

\subsection{Constructive branch resolvability}

On the collision-free tangent domain, define the analytic position-to-Gromov map
\[
\Psi(\tau_{1:n},\upsilon_{1:m})=
\left[G_o\{\Exp_o(\tau_i),\Exp_o(\upsilon_j)\}\right]_{ij}.
\]
Within each connected component, consider a generic point $x_0$ where $D\Psi$ attains its maximal rank. The local image $\mathfrak S_G$ is then a smooth manifold with tangent space $T_{\bG}\mathfrak S_G=\operatorname{ran}D\Psi_{x_0}$. Root-fixing rotations do not change the Gromov matrix, so their tangent directions lie in the kernel of $D\Psi_{x_0}$. Let $E_{\calA}$ and $E_G$ be orthonormal bases for $\calA$ and $T_{\bG}\mathfrak S_G$, respectively, and define
\[
J_{\rm br}=\left[E_{\calA},\operatorname{vec}(\bG),\lambda E_G\right].
\]
Its columns represent additive directions, changes in the hierarchy coefficient, and feasible changes in branch shape. Supplementary Section S3 establishes the analytic genericity statement and the corresponding dimension bounds.

\begin{theorem}[Local branch identification]
\label{thm:localid}
At a maximal-rank generating configuration, suppose $J_{\rm br}$ has full column rank. Then, in a neighborhood of the reference point, the map
\[
(A,\lambda,\bG)\mapsto A+\lambda\bG
\]
is one-to-one on $\calA\times(0,\infty)\times\mathfrak S_G$. Thus $(A,\lambda,\bG)$ is jointly locally identified; only the representation $A_{ij}=\alpha_i+\beta_j$ retains the one-dimensional common-shift alias. If $J_{\rm br}$ is rank deficient, a nonzero first-order change in additive effects, hierarchy scale, or hierarchy shape leaves the predictor unchanged to first order.
\end{theorem}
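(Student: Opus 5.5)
This is an inverse-function / immersion argument. Write $F(A,\lambda,\bG)=A+\lambda\bG$ on the product manifold $\calM=\calA\times(0,\infty)\times\mathfrak S_G$. Near the reference point, $\mathfrak S_G$ is a smooth embedded submanifold: by the constant-rank theorem applied to $\Psi$ at the maximal-rank point $x_0$, rank is locally constant, so a small piece of the image is embedded with tangent space $\operatorname{ran}D\Psi_{x_0}$. Hence $\calM$ is a smooth manifold of dimension $\dim\calA+1+\dim T_{\bG}\mathfrak S_G$, and $F$ is smooth, being bilinear in $(\lambda,\bG)$ and linear in $A$.

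First I compute the differential. At $(A,\lambda,\bG)$, a tangent vector $(\dot A,\dot\lambda,\dot{\bG})$ with $\dot{\bG}\in T_{\bG}\mathfrak S_G$ is mapped to $\dot A+\dot\lambda\bG+\lambda\dot{\bG}$. In the orthonormal coordinates $E_{\calA}$ and $E_G$ this is exactly $J_{\rm br}$ applied to the coordinate vector after vectorization. Full column rank of $J_{\rm br}$ is therefore equivalent to $DF$ being injective at the reference point, so $F$ is an immersion there.

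Second, I pass from immersion to local injectivity. An immersion is locally an embedding: by the local immersion theorem, there are charts in which $F$ is $x\mapsto(x,0)$. Concretely, I would compose $F$ with the orthogonal projection onto $\operatorname{ran}DF$. This composite has invertible derivative, so the inverse function theorem makes it a local diffeomorphism, and hence $F$ is one-to-one on a neighborhood of the reference point. This gives joint local identification of $(A,\lambda,\bG)$. The remaining alias in $(\alpha,\beta)$ is just the kernel of the linear map $(\alpha,\beta)\mapsto\alpha\one_m^\top+\one_n\beta^\top$, which is spanned by $(\one_n,-\one_m)$ and is one-dimensional, since $a\one^\top+\one b^\top=0$ forces $a$ and $b$ to be opposite constants. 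This is consistent with the remark after the model definition.

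Third, for the converse, suppose $J_{\rm br}$ is rank deficient. Then there is a nonzero coefficient vector $c=(c_{\calA},c_\lambda,c_G)$ with $J_{\rm br}c=0$. Mapping it back gives a nonzero tangent vector $(\dot A,\dot\lambda,\dot{\bG})=(E_{\calA}c_{\calA},c_\lambda,E_Gc_G)$, which is nonzero because $E_{\calA}$ and $E_G$ have orthonormal columns. Take any smooth curve in $\calM$ with this velocity, using a curve in $\mathfrak S_G$ with velocity $E_Gc_G$. Along it the predictor changes only at second order.

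The main obstacle is the geometric bookkeeping. The delicate point is that $\mathfrak S_G$ must be a genuine embedded submanifold near $\bG$ whose tangent space equals $\operatorname{ran}D\Psi_{x_0}$; without this, "$E_G$ spans the feasible directions" would not be justified. I would take this from the constant-rank theorem and the genericity result of Supplementary Section S3, shrinking the neighborhood so that the image of a small domain is embedded rather than merely immersed. Everything after that is the standard inverse function theorem.
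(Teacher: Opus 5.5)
Your proposal is correct and follows the paper's own proof. Like the paper, you compute $DF(\dot A,c,\dot\bG)=\dot A+c\bG+\lambda\dot\bG$, recognize its matrix in the orthonormal bases $E_{\calA}$ and $E_G$ as $J_{\rm br}$, and conclude that full column rank makes $F$ an immersion, so the immersion theorem gives local injectivity. Your extra details (projecting onto $\operatorname{ran}DF$ to apply the inverse function theorem, the explicit curve for the rank-deficient case, and the kernel computation for the $(\alpha,\beta)$ common-shift alias) fill in steps the paper leaves implicit.
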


In computation, $E_G$ is an orthonormal basis for the range of $D\Psi$ evaluated at the fitted tangent coordinates. The smallest singular value of $J_{\rm br}$ therefore measures local resolvability. This diagnostic requires fixed negative curvature. In an ordinary Euclidean latent-distance model, rescaling all positions rescales every distance. The resulting scale direction lies in the tangent space of feasible distance matrices, so a separate distance coefficient is not locally identifiable. Hyperbolic space at fixed curvature has no global dilation that preserves the model class. Full rank of $J_{\rm br}$ can therefore hold. This argument connects Proposition~\ref{prop:curvature} to Theorem~\ref{thm:localid}.
\section{Computation}
\label{sec:computation}

The practical method combines exact scalar updates with stochastic geometry updates. All likelihood sums are restricted to the observed set $\Omega$.

\subsection{Variational approximation and local augmentation}

Let $z_i=\Exp_o(\tau_i)$ and $w_j=\Exp_o(\upsilon_j)$. The bounded-depth approximation uses Gaussian tangent factors truncated to $B_{\mathrm E}(R_0)$, Gaussian main-effect factors, and a positive-truncated Gaussian factor for $\lambda$ \citep{wainwright_2008_GraphicalModelsExponential,blei_2017_VariationalInferenceReview}.

For each observed dyad, the algorithm tracks the predictor moments
\[
\mu_{ij}^{\eta}=\E_q(\eta_{ij}),
\qquad
\nu_{ij}^{\eta}=\E_q(\eta_{ij}^2).
\]
The second quantity is a raw second moment. Logistic inference uses P\'olya-Gamma augmentation, which recovers the Jaakkola-Jordan local variational update \citep{jaakkola_2000_BayesianParameterEstimation,polson_2013_BayesianInferenceLogistic,durante_2019_ConditionallyConjugateMeanField}. Probit inference instead uses the optimal sign-truncated Gaussian factor from the Albert-Chib representation. Supplementary Section S5 gives the factor definitions and numbered update equations; the parenthetical equation numbers in Algorithm~\ref{alg:helpi} refer to that section.

\subsection{Block updates and implementation}

\begin{algorithm}[tbp]
\caption{Blockwise mean-field variational inference for binary HELPI}
\label{alg:helpi}
\begin{algorithmic}[1]
\Require Observed set $\Omega$, link $H$, depth bound $R_0$, and independent restarts
\State Initialize the main effects, $q_\lambda$, and the bounded tangent factors.
\Repeat
\State Update the predictor moments by (S1a) and (S1b).
\State Update local factors by (S2) for logit or (S3) for probit.
\State Update $q_\alpha$, $q_\beta$, and $q_\lambda$ by (S4a) through (S4c).
\State Update the tangent factors by the projected step (S5).
\State Recenter and refresh convergence summaries by (S6a) and (S6b).
\Until{the smoothed objective and invariant summaries stabilize}
\State \Return the restart with the largest high-accuracy objective estimate.
\end{algorithmic}
\end{algorithm}

Conditional on the geometry moments, the scalar factors have exact coordinate updates. Each nonconjugate tangent block receives one projected stochastic-gradient step. The default gradient estimator uses a score-function estimator with control variates. Hyperbolic gradient calculations are reviewed by \citet{wilson_2018_GradientDescentHyperbolic}, while \citet{mathieu_2019_ContinuousHierarchicalRepresentations} and \citet{nagano_2019_WrappedNormalDistribution} develop wrapped distributions and their gradients.

The tangent means remain in $B_{\mathrm E}(R_0-\delta_R)$, and covariance eigenvalues remain in a prespecified compact interval. Stopping rules use smoothed or periodically refined objective estimates because Monte Carlo geometry updates are noisy. We use multiple starts to reduce sensitivity to nonconvexity. When the branch-resolvability diagnostic from Theorem~\ref{thm:localid} is small, we emphasize $\bP$ and $\mathcal H$ rather than interpreting $\lambda$ separately.

For dense data, one geometry sweep costs $O(R_{\rm mc}nmp)$. With partial observation it costs $O(R_{\rm mc}|\Omega|p)$. This computational scaling does not itself justify inference under informative exposure; it only describes the conditional observed-dyad likelihood.

\section{Theory}\label{sec:theory}

\subsection{Posterior contraction and hierarchy-signal risk}

For the fully observed array, write
\[
P_\theta=\bigl(H\{\eta_{ij}(\theta)\}\bigr)_{i,j},
\qquad N=nm,
\qquad d_N=(n+m)(p+1)+1.
\]
Both $n$ and $m$ grow, $p$ is fixed, and $d_N\log N=o(N)$. Supplementary Section S4 gives the sieves, entropy bounds, and tests.

\begin{assumption}[Truth and priors]
\label{ass:theory}
The true main effects and $\lambda_0\ge0$ are uniformly bounded, and the true tangent coordinates lie strictly inside $B_{\mathrm E}(R_0)$. Scalar priors are positive on their supports, have sub-Gaussian tails, and have at most quadratic negative log density. Tangent-prior densities are continuous and bounded above and below by positive constants on $B_{\mathrm E}(R_0)$.
\end{assumption}

This formulation includes the no-hierarchy case $\lambda_0=0$. Define
\[
\eps_N^2=K_0\frac{d_N\log N}{N}
\]
for a sufficiently large constant $K_0$, and the average Bernoulli Hellinger metric
\[
h_N^2(P,Q)=\frac1N\sum_{i,j}
\left[(\sqrt{P_{ij}}-\sqrt{Q_{ij}})^2+
(\sqrt{1-P_{ij}}-\sqrt{1-Q_{ij}})^2\right].
\]

\begin{theorem}[Dense-array posterior contraction]
\label{thm:contraction}
Under Assumption~\ref{ass:theory}, for either the logistic or probit link and every deterministic sequence $a_N\to\infty$,
\[
\Pi\{\theta:h_N(P_\theta,P_0)>a_N\eps_N\mid Y\}\to0
\]
in $P_0$-probability.
\end{theorem}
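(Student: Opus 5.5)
We will verify the three conditions of the Ghosal--Ghosh--van der Vaart general contraction theorem for independent, non-identically distributed observations, using the average Bernoulli Hellinger metric $h_N$ and rate $\eps_N$. The parameter is $\theta=(\alpha,\beta,\lambda,\tau_{1:n},\upsilon_{1:m})$, of dimension $n+m+1+(n+m)p\le d_N$. The whole argument rests on a Lipschitz property of the predictor map. On $B_{\mathrm E}(R_0)$, the map $\tau\mapsto\Exp_o(\tau)$ is smooth, with derivative bounded by a constant depending only on $R_0$. The map $(z,w)\mapsto G_o(z,w)$ is Lipschitz in $d$ with constant at most $1$ in each argument, since it is a combination of distances. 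Hence
\[
|\eta_{ij}(\theta)-\eta_{ij}(\theta')|\le |\alpha_i-\alpha_i'|+|\beta_j-\beta_j'|+R_0'|\lambda-\lambda'|+C_{R_0}\lambda\bigl(\|\tau_i-\tau_i'\|+\|\upsilon_j-\upsilon_j'\|\bigr),
\]
where $R_0'$ bounds $G_{ij}$, namely $G_{ij}\le\min\{d(o,z_i),d(o,w_j)\}\le R_0$.

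\emph{Prior mass (KL neighborhoods).} Both $\Lambda$ and $\Phi_0$ are Lipschitz. On a bounded predictor range $|\eta|\le M$, the Bernoulli KL divergence and its second moment are bounded by $C_M(\eta-\eta_0)^2$. The true predictors are bounded by Assumption~\ref{ass:theory}. It therefore suffices to lower bound the prior probability that all coordinates lie within $c\,\eps_N$ of the truth in sup-norm; because $\lambda$ is near $\lambda_0$, it stays bounded. The tangent coordinates are strictly interior and their prior densities are bounded below, so each tangent block contributes a factor of order $(c\eps_N)^p$. The scalar priors have positive continuous densities near the bounded truth, so each contributes order $c\eps_N$. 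The case $\lambda_0=0$ is handled by the positive-truncated prior, whose mass of $[0,\eps_N]$ is of order $\eps_N$. Altogether the prior mass is at least $\exp\{-C d_N\log(1/\eps_N)\}\ge\exp\{-C' d_N\log N\}$, which is of order $\exp(-cN\eps_N^2)$ once $K_0$ is large.

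\emph{Sieve and entropy.} Take the sieve $\Theta_N=\{|\alpha_i|,|\beta_j|,\lambda\le M_N\}$, with tangent coordinates in $B_{\mathrm E}(R_0)$ and $M_N=C\sqrt{N\eps_N^2/d_N}\asymp\sqrt{\log N}$. The sub-Gaussian tails and a union bound over $n+m+1$ scalars give $\Pi(\Theta_N^c)\le (n+m+1)e^{-cM_N^2}$. This is at most $e^{-(C'+4)N\eps_N^2}$ only if $M_N^2\gtrsim N\eps_N^2$, which is too large. The fix is to bound the tail of the joint scalar vector instead: its squared norm is a sum of $d_N$ sub-Gaussian squares, so it exceeds $M_N^2 d_N$ with probability at most $e^{-c M_N^2 d_N}$, and $M_N^2 d_N\asymp N\eps_N^2$. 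Because of this, the sieve should be an $\ell_2$-ball of radius of order $\sqrt{N\eps_N^2}$, not a sup-norm box. On it, coordinatewise sup bounds fail, but $h_N$ is controlled through $\frac1N\sum_{ij}(\eta_{ij}-\eta'_{ij})^2$ using the Lipschitz bound $h^2\lesssim(\eta-\eta')^2$. That bound holds for the Bernoulli Hellinger distance uniformly in $\eta$, since $\sqrt{H}$ has bounded derivative for both links. The average squared predictor difference is bounded by $\frac{1}{m}\|\alpha-\alpha'\|^2+\frac1n\|\beta-\beta'\|^2$ plus $\lambda$-weighted tangent terms, with $\lambda\le$ a polynomial in $N$. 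A grid in each coordinate block of mesh $\eps_N/\mathrm{poly}(N)$ then gives $\log\cover(\eps_N,\Theta_N,h_N)\lesssim d_N\log N\le N\eps_N^2$.

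\emph{Tests and main obstacle.} Tests for Hellinger balls in product Bernoulli experiments exist by the Le Cam--Birgé construction. Combined with the entropy bound, the general theorem then yields contraction at rate $a_N\eps_N$ for any $a_N\to\infty$. I expect the main obstacle to be the sieve and entropy step: reconciling a prior-tail bound at level $e^{-CN\eps_N^2}$ with unbounded $\lambda$ and unbounded main effects, while keeping the entropy at order $d_N\log N$. The planned remedy is the $\ell_2$-type sieve together with the link-uniform Hellinger Lipschitz bound. If $\lambda$ times the tangent perturbations causes trouble, I would restrict $\lambda\le N$ separately, which costs only $e^{-cN^2}$ prior mass.
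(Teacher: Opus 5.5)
Your proposal is correct and is essentially the paper's proof. Both verify Theorem 4 of Ghosal and van der Vaart (2007) using the same ingredients: the Lipschitz bound on the predictor, a product neighborhood of the truth with quadratic KL and second-moment control, a sub-Gaussian sieve whose entropy grows only logarithmically in its radius, and Le Cam--Birg\'e Hellinger tests.

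The one place you deviate, the $\ell_2$-ball sieve, is harmless but unnecessary. You dismissed a sup-norm box with $M_N^2\asymp N\eps_N^2$ as ``too large,'' but this is mistaken: by exactly your own coordinate-grid argument, its entropy is $O\{d_N\log(M_N/\eps_N)\}=O(d_N\log N)$. The paper simply takes that box, with radius $R_N=C_0\sqrt{d_N\log N}$, and uses a coordinatewise union bound $\Pi\{\Theta_N(R_N)^c\}\le Cd_N\exp(-cC_0^2d_N\log N)$.
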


Theorem~\ref{thm:contraction} concerns average predictive error over the complete array. Proposition~\ref{prop:observed} adapts this rate to a deterministic observed design through its effective sample size and active-node dimension.

\begin{proposition}[Conditional contraction for an observed dyad set]
\label{prop:observed}
Let $\Omega_N$ be a deterministic observed set of size $M_N$, and let $d_{\Omega,N}$ replace $n,m$ in $d_N$ by the numbers of active rows and columns. Define
\[
h_{\Omega_N}^2(P,Q)=\frac1{M_N}\sum_{(i,j)\in\Omega_N}h_{\rm Ber}^2(P_{ij},Q_{ij}).
\]
If $d_{\Omega,N}\log M_N=o(M_N)$, the posterior based on $L_{\Omega_N}$ contracts in $h_{\Omega_N}$ at rate
\[
\eps_{\Omega,N}^2=K_0d_{\Omega,N}\log(M_N)/M_N.
\]
The same conditional conclusion holds with high probability under response-ignorable Bernoulli sampling when its expected sample size satisfies the analogous dimension condition.
\end{proposition}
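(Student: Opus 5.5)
The plan is to rerun the dense-array argument behind Theorem~\ref{thm:contraction} with the index set $[n]\times[m]$ replaced by $\Omega_N$. The tool is the general posterior contraction theorem for independent, non-identically distributed observations, in the Ghosal--van der Vaart form, with $h_{\Omega_N}$ as the average Hellinger semimetric. Under $L_{\Omega_N}$ the predictor on $\Omega_N$ depends only on the parameters of active rows and columns: their main effects, their tangent coordinates, and $\lambda$. Inactive nodes enter neither the likelihood nor $h_{\Omega_N}$. Since the priors factorize across nodes, their coordinates integrate out exactly. The effective parameter therefore has dimension $d_{\Omega,N}$, and we can work in that reduced space throughout.

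We then verify three conditions with $\eps_{\Omega,N}^2=K_0 d_{\Omega,N}\log M_N/M_N$.

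\textbf{Prior mass.} Take the parameter map $\theta\mapsto(\eta_{ij})_{(i,j)\in\Omega_N}$. On $B_{\mathrm E}(R_0)$ it is Lipschitz in sup norm, with constant depending only on $R_0$ and on a bound for $\lambda$. This holds because $G_o\circ(\Exp_o\times\Exp_o)$ is smooth on the compact ball, and because $|\eta-\eta'|\le|\alpha_i-\alpha_i'|+|\beta_j-\beta_j'|+|\lambda-\lambda'|\,\sup G+\lambda\,|G-G'|$. For both links, the per-dyad Kullback--Leibler divergence and its second moment are bounded by $C(\eta-\eta_0)^2$ uniformly on bounded predictors. A sup-norm ball of radius $c\,\eps_{\Omega,N}$ around the truth therefore lies in the KL neighborhood. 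Assumption~\ref{ass:theory} (positive, bounded-below densities with interior truth, and scalar priors with at most quadratic negative log density) gives that ball prior mass at least $\exp\{-C d_{\Omega,N}\log(1/\eps_{\Omega,N})\}\ge \exp\{-C' M_N\eps_{\Omega,N}^2\}$. This uses $\log(1/\eps)\lesssim\log M_N$, which follows from $d_{\Omega,N}\log M_N=o(M_N)$. When $\lambda_0=0$, we use that the truncated normal puts mass $\gtrsim\eps$ on $[0,\eps]$.

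\textbf{Sieve and prior complement.} Define the sieve by $|\alpha_i|,|\beta_j|,\lambda\le B_N\asymp\sqrt{M_N}\,\eps_{\Omega,N}$ for active nodes. The sub-Gaussian tails give a union bound over $d_{\Omega,N}$ coordinates of order $d_{\Omega,N}e^{-cB_N^2}\le e^{-CM_N\eps_{\Omega,N}^2}$, with the constant made large by choosing $K_0$ large.

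\textbf{Entropy and tests.} The squared Hellinger distance between Bernoulli laws is at most $|p-q|\le \tfrac14|\eta-\eta'|$. A sup-norm $\eps$-net of the predictors on $\Omega_N$ is therefore an $h_{\Omega_N}$-net of order $\sqrt{\eps}$. So $\log\cover\lesssim d_{\Omega,N}\log(B_N/\eps_{\Omega,N})\lesssim M_N\eps_{\Omega,N}^2$. The standard Le Cam--Birgé tests for independent products in the average Hellinger metric then finish the argument.

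The step I expect to be the main obstacle is the prior-mass step. The difficulty is that the Lipschitz constant grows with $\lambda$ and $B_N$, so the KL bound must be carried out on bounded predictors near the truth and not uniformly over the sieve. That is why the KL neighborhood is taken inside a fixed bounded box where $\lambda\le\lambda_0+1$.

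For the random version, let $\Omega$ be generated by independent Bernoulli inclusions independent of $Y$. Response-ignorability means the likelihood factorizes, and the conditional posterior given $\Omega$ is exactly the fixed-design posterior above. We then show that the conditions hold on an event of probability tending to one. By Chernoff, $M_N\ge\tfrac12\E M_N$, and the realized active-node counts are at most their full counts. It follows that $d_{\Omega,N}\log M_N=o(M_N)$ on that event. Applying the fixed-design result conditionally and integrating over the event completes the proof.
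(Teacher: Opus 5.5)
Your proposal is correct and follows essentially the same route as the paper: drop the inactive rows and columns, whose factorized priors integrate out, and then verify Theorem~4 of Ghosal and van der Vaart for the active parameterization with $(N,d_N,\log N)$ replaced by $(M_N,d_{\Omega,N},\log M_N)$. The four ingredients match as well: the Lipschitz predictor bound for prior thickness, sub-Gaussian tails for a scalar sieve of radius of order $\sqrt{d_{\Omega,N}\log M_N}$, product-net entropy, and average-Hellinger tests.

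Your Chernoff-and-conditioning argument for the Bernoulli-sampling clause goes beyond the paper's proof, which does not address that clause. Three small corrections:
\begin{itemize}
\item $G_o\circ(\Exp_o\times\Exp_o)$ is Lipschitz on the ball but not smooth: it is not differentiable at the root or where $z=w$.
\item In the sieve-tail bound, the coefficient of $M_N\eps_{\Omega,N}^2$ in the exponent is set by the sieve-radius constant, not by $K_0$.
\item Integrating over the high-probability event needs the fixed-design conclusion to hold uniformly over admissible $\Omega$. You can get this by applying the fixed-design result to a worst-case deterministic sequence.
\end{itemize}
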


Proposition~\ref{prop:observed} controls only prediction on the observed design. A full-array version of Corollary~\ref{cor:signal} would require an additional condition linking error on observed dyads to the complete-array Frobenius norm. Informative exposure remains outside the theory.

For a clip level $B_{\rm clip}\ge\|\bM_0\|_\infty$, define
\[
g_{B_{\rm clip}}(t)=\clip\{H^{-1}(t),[-B_{\rm clip},B_{\rm clip}]\},
\qquad
\mathcal H_{B_{\rm clip}}(P)=\projA[g_{B_{\rm clip}}(P)].
\]
Then $\mathcal H_{B_{\rm clip}}(P_0)=\lambda_0\projA\bG_0$.

\begin{corollary}[Contraction of the identified hierarchy signal]
\label{cor:signal}
There is a finite constant $C_{B_{\rm clip}}$ such that, for every $a_N\to\infty$,
\[
\Pi\left\{\theta:
N^{-1/2}\|\mathcal H_{B_{\rm clip}}(P_\theta)-\mathcal H_{B_{\rm clip}}(P_0)\|_F
>a_N C_{B_{\rm clip}}\eps_N\mid Y\right\}\to0
\]
in $P_0$-probability.
\end{corollary}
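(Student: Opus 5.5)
The plan is to transfer Theorem~\ref{thm:contraction} to the hierarchy signal through a deterministic Lipschitz bound on the map $P\mapsto\mathcal H_{B_{\rm clip}}(P)$ in the Hellinger geometry.

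\textbf{Step 1 (projection).} Since $\projA$ is an orthogonal projection in Frobenius norm, it is $1$-Lipschitz:
\[
\|\mathcal H_{B_{\rm clip}}(P)-\mathcal H_{B_{\rm clip}}(P_0)\|_F\le\|g_{B_{\rm clip}}(P)-g_{B_{\rm clip}}(P_0)\|_F .
\]
This reduces the corollary to an entrywise comparison of the clipped link inverses.

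\textbf{Step 2 (entrywise bound).} Here $P_{0,ij}=H(\eta_{0,ij})$ with $|\eta_{0,ij}|\le B_{\rm clip}$. Hence $P_{0,ij}\in[H(-B_{\rm clip}),H(B_{\rm clip})]=[c,1-c]$ with $c>0$. I will show
\[
|g_{B_{\rm clip}}(t)-g_{B_{\rm clip}}(s)|\le L_B|t-s|\qquad\text{for all } t\in[0,1],\ s\in[c,1-c],
\]
where $L_B=1/\min_{|u|\le B_{\rm clip}}H'(u)$. The function $g_{B_{\rm clip}}$ is constant outside $[c,1-c]$ and equals $H^{-1}$ inside it. On that interval its derivative is $1/H'(H^{-1}(t))\le L_B$. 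So $g_{B_{\rm clip}}$ is globally $L_B$-Lipschitz on $[0,1]$, including at the clipping kinks, because it is a composition of the monotone map $H^{-1}$ with a clip.

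\textbf{Step 3 (Hellinger comparison).} Next bound $|t-s|$ by the Bernoulli Hellinger distance. Write $t-s=(\sqrt t-\sqrt s)(\sqrt t+\sqrt s)$, and likewise for $1-t$ and $1-s$. Then
\[
|t-s|\le 2\min\{|\sqrt t-\sqrt s|,\ |\sqrt{1-t}-\sqrt{1-s}|\}\le \sqrt2\,\hBer(t,s),
\]
with no lower bound needed. Summing over entries and dividing by $N$ gives
\[
N^{-1/2}\|\mathcal H_{B_{\rm clip}}(P)-\mathcal H_{B_{\rm clip}}(P_0)\|_F\le\sqrt2\,L_B\,h_N(P,P_0).
\]
Take $C_{B_{\rm clip}}=\sqrt2 L_B$, which is finite for both the logistic and probit links since $H'>0$ is continuous.

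\textbf{Step 4 (conclusion).} The event in the corollary is contained in $\{h_N(P_\theta,P_0)>a_N\eps_N\}$. Its posterior mass tends to zero by Theorem~\ref{thm:contraction}. The identity $\mathcal H_{B_{\rm clip}}(P_0)=\lambda_0\projA\bG_0$ (stated before the corollary) identifies the target. The argument is deterministic and needs no new entropy or testing work.

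The main care point is Step~2: the Lipschitz bound must hold when $P_\theta$ has entries near $0$ or $1$ (unbounded $H^{-1}$). The clip handles this, and the bound only uses the derivative on $[c,1-c]$. The constant depends on $B_{\rm clip}$ roughly as $e^{B_{\rm clip}}$ for the logistic link and as $e^{B_{\rm clip}^2/2}$ for the probit link.
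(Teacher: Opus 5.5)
Your proposal is correct and is essentially the paper's proof: nonexpansiveness of $\projA$, global Lipschitzness of the clipped inverse link (your $L_B=1/\min_{|u|\le B_{\rm clip}}H'(u)$ makes the paper's constant explicit), a pointwise comparison of $|\pi-\pi'|$ with $\hBer(\pi,\pi')$, and then Theorem~\ref{thm:contraction}. The only difference is that your Step~3 gives $|t-s|\le\sqrt2\,\hBer(t,s)$, whereas the paper uses the sharper $|t-s|\le\hBer(t,s)$; this changes $C_{B_{\rm clip}}$ only by a factor $\sqrt2$, which is immaterial because the corollary asserts just the existence of a finite constant.
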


The clipping level determines the reported functional. In applications, $B_{\rm clip}$ should be prespecified, the fraction of fitted predictors affected by clipping should be reported, and sensitivity over a modest grid should be examined.

\subsection{Exact mean-field variational risk}

Let $\calQ_N^{\rm mf}$ contain the product factors from Section~\ref{sec:computation}. For each $N$, their means and covariance eigenvalues lie in the compact parameter set specified in Supplementary Section S4. These bounds expand with $N$, retain factors concentrated at the truth, and ensure that a variational minimizer exists.

\begin{theorem}[Predictive risk of the exact mean-field optimum]
\label{thm:vb}
A global reverse-KL minimizer
\[
\widehat Q_N\in\argmin_{Q\in\calQ_N^{\rm mf}}
\KL\{Q\,\|\,\Pi(\cdot\mid Y)\}
\]
exists. Under Assumption~\ref{ass:theory},
\[
\E_{P_0}\int h_N^2(P_\theta,P_0)\,\widehat Q_N(\dd\theta)
\le C\eps_N^2
\]
for a constant $C<\infty$. Moreover,
\[
\E_{P_0}\int
N^{-1}\|\mathcal H_{B_{\rm clip}}(P_\theta)-\mathcal H_{B_{\rm clip}}(P_0)\|_F^2
\,\widehat Q_N(\dd\theta)
\le C C_{B_{\rm clip}}^2\eps_N^2.
\]
\end{theorem}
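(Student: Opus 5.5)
The proof follows the standard route for variational posterior contraction: combine an exponential-moment bound under the posterior with the Donsker--Varadhan variational inequality, in the spirit of Zhang and Gao and of Pati, Bhattacharya and Yang. It has three parts.

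\textbf{Existence.} The family $\calQ_N^{\rm mf}$ is indexed by a compact set of means and covariance eigenvalues. On this set the map $Q\mapsto\KL\{Q\,\|\,\Pi(\cdot\mid Y)\}$ is lower semicontinuous in the parameters. The reason is that $\KL\{Q\|\Pi(\cdot\mid Y)\}=\KL(Q\|\Pi)-\int\log L\,dQ+\log\int L\,d\Pi$. The log-likelihood is bounded on the bounded-depth domain, because all predictors are bounded there. It is also continuous in $\theta$, and the Gaussian and truncated-Gaussian factors vary weakly continuously in their parameters. Hence a minimizer exists.

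\textbf{Risk bound.} For any $Q\ll\Pi(\cdot\mid Y)$ and any measurable $f$, Donsker--Varadhan gives
\[
\int f\,dQ\le \KL\{Q\|\Pi(\cdot\mid Y)\}+\log\int e^{f}\,d\Pi(\cdot\mid Y).
\]
Take $f=cN h_N^2(P_\theta,P_0)$ for a small constant $c$.

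The posterior exponential moment is controlled as follows.
\begin{itemize}
\item The sieve, entropy and test constructions behind Theorem~\ref{thm:contraction} (Supplementary Section S4) give $\E_{P_0}\Pi\{h_N>t\mid Y\}\le e^{-c'Nt^2}$ for $t\ge C\eps_N$, up to an event of exponentially small probability.
\item The prior-mass condition $\Pi\{\KL\text{-ball of radius }\eps_N\}\ge e^{-CN\eps_N^2}$ follows from Assumption~\ref{ass:theory}. Predictors are Lipschitz in $(\alpha,\beta,\lambda,\tau,\upsilon)$ on the bounded domain, so a Euclidean ball of radius $\eps_N/\sqrt{d_N}$-type in $d_N$ dimensions has log-mass of order $-d_N\log N$.
\item Integrating the tail bound yields $\E_{P_0}\log\int e^{f}d\Pi(\cdot\mid Y)\le C N\eps_N^2$. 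This uses $h_N^2\le 2$ on the exceptional event.
\end{itemize}

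It remains to bound $\E_{P_0}\KL\{\widehat Q_N\|\Pi(\cdot\mid Y)\}\le \E_{P_0}\KL\{Q^*\|\Pi(\cdot\mid Y)\}$ for one explicit $Q^*\in\calQ_N^{\rm mf}$. Choose $Q^*$ as the product of factors centered at the truth with variances of order $\eps_N^2/d_N$; the expanding compact bounds are designed to allow exactly these. Standard algebra gives
\[
\KL\{Q^*\|\Pi(\cdot\mid Y)\}=\KL(Q^*\|\Pi)+\int \log\frac{p_0}{p_\theta}\,dQ^*(\theta)+\log\int\frac{p_\theta}{p_0}\,d\Pi.
\]
\begin{itemize}
\item $\KL(Q^*\|\Pi)\lesssim d_N\log N$. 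This holds because the prior densities are bounded below on $B_{\mathrm E}(R_0)$ and the scalar priors have at most quadratic negative log density. The true tangent coordinates lie strictly inside the ball, so truncation costs only a constant.
\item The expected second term is $\int\KL(P_0\|P_\theta)\,dQ^*\lesssim N\eps_N^2$. This follows by a second-order expansion of the Bernoulli KL in $\eta$, with bounded curvature for both links on bounded predictors, together with the Lipschitz map from parameters to predictors.
\item The third term has nonpositive expectation by Jensen's inequality.
\end{itemize}
Dividing by $cN$ gives the first display.

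\textbf{Hierarchy signal.} The second display follows from the first by a deterministic inequality. On $[H(-B_{\rm clip}),H(B_{\rm clip})]$, the map $g_{B_{\rm clip}}$ is Lipschitz. Clipping is a contraction and pulls values outside this interval back into it, and $\projA$ is a Frobenius contraction. Hence
\[
N^{-1}\|\mathcal H_{B_{\rm clip}}(P_\theta)-\mathcal H_{B_{\rm clip}}(P_0)\|_F^2\le C_{B_{\rm clip}}^2 N^{-1}\sum_{i,j}(P_{\theta,ij}-P_{0,ij})^2.
\]
Since $|P_{0,ij}|$ is bounded away from $0$ and $1$, this handles both cases: when $P_{\theta,ij}$ falls outside the interval, $|g(P_\theta)-g(P_0)|\le |H^{-1}$-Lipschitz$|\cdot|P_\theta-P_0|$ still holds relative to the nearest endpoint. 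Finally, $(P-Q)^2\le 4\hBer^2(P,Q)$ bounds the right side by $4C_{B_{\rm clip}}^2h_N^2$. This is the same constant used in Corollary~\ref{cor:signal}, and integrating against $\widehat Q_N$ gives the claim.

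\textbf{Main obstacle.} I expect the hardest step to be the posterior exponential-moment bound. It needs the test and sieve construction to give exponential rather than merely vanishing error, and it must control the prior mass outside the sieve uniformly. The sub-Gaussian tails of the scalar priors should supply the latter.
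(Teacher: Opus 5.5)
Your architecture is the same as the paper's. Its proof invokes Theorem~2.2 of Zhang and Gao and verifies (C1)--(C4) by tests on a sieve, sub-Gaussian sieve tails, a R\'enyi prior-mass bound, and a mean-field comparison distribution. Your Donsker--Varadhan, exponential-moment, and comparison-$Q$ argument is that theorem's proof unpacked. Your hierarchy-signal step also matches; your factor $4$ is looser than $|\pi-\pi'|\le\hBer(\pi,\pi')$, which is harmless. Several steps nevertheless fail as written, mostly because of the premise that predictors are bounded. Only the tangent coordinates lie in $B_{\mathrm E}(R_0)$, while $\alpha_i,\beta_j,\lambda$ have Gaussian or half-Gaussian priors and factors with unbounded support. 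Your existence argument therefore needs a different justification. Since $-\log L\ge0$ is continuous, $Q\mapsto-\int\log L\,\dd Q$ is weakly lower semicontinuous. Finiteness comes from at most linear (logit) or quadratic (probit) growth in $\eta$ together with finite second moments, as in Lemma~\ref{supp:vbexist}. Likewise, under your $Q^*$ the predictor $\eta_{ij}$ has unbounded support, so an expansion on bounded predictors does not control the sampling KL. You need the global bound $\KL_{\eta_0,\eta}\le C(M_0)(\eta-\eta_0)^2$ for all $\eta\in\R$ from Lemma~\ref{supp:link}. The random Lipschitz factor $\max(\lambda,\lambda_0)$ in Lemma~\ref{supp:predictor} must also be handled, using independence of $q_\lambda$ from the tangent factors.

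Your comparison distribution is also not in $\calQ_N^{\rm mf}$. The paper's family requires variances and covariance eigenvalues in $[v_N,v_N^{-1}]$ with $v_N=\eps_N^4$. Because $d_N^2>(n+m)^2\ge4N$, one has $d_N\eps_N^2>4K_0\log N>1$, so $\eps_N^2/d_N<v_N$ for every $N$. The comparison $\KL\{\widehat Q_N\|\Pi(\cdot\mid Y)\}\le\KL\{Q^*\|\Pi(\cdot\mid Y)\}$ is then unavailable. The shrinkage by $d_N$ is unnecessary anyway: each $\eta_{ij}$ involves only $2p+3$ coordinates, so variance $\eps_N^2$ already gives $\E_{Q^*}(\eta_{ij}-\eta_{0,ij})^2=O(\eps_N^2)$, which is Lemma~\ref{supp:comparison}.

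In the exponential-moment step, the fixed sieve radius $C_0\sqrt{d_N\log N}$ behind Theorem~\ref{thm:contraction} leaves complement mass of order $\exp(-CN\eps_N^2)$, not $\exp(-cNt^2)$ for $t\gg\eps_N$. The paper resolves the obstacle you flagged with the $t$-dependent radius $R_N(t)=C_0\sqrt N\,t$, whose entropy is still $O(d_N\log N)\le(C/K_0)Nt^2$. Finally, a KL-ball prior-mass condition does not by itself make the denominator's exceptional event exponentially small. That is why Lemma~\ref{supp:prior} also bounds $\Renyi$, giving Zhang and Gao's (C3). Your parameter ball does supply this through Lemma~\ref{supp:link}, but the step has to be stated.
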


This theorem concerns the exact global reverse-KL minimizer over $\calQ_N^{\rm mf}$. It does not establish the same risk bound for a local stationary point or a tempered variational target \citep{alquier_2020_ConcentrationTemperedPosteriors}.

\section{Empirical studies}
\label{sec:experiments}

Two simulations examine hierarchy recovery and branch collapse. Two real-data examples then study educational responses and congressional voting using held-out data. Supplementary Simulations 3 and 4 examine complementary interaction structures and analyst choices. We report log score, Brier score, area under the receiver operating characteristic curve (AUC), root mean squared error (RMSE), and calibration. For strictly proper scoring rules, we use the log and Brier scores \citep{gneiting_2007_StrictlyProperScoring}.

Every simulation cell contains 100 independent replicates, and competing methods use the same training and test dyads. Each split assigns 80\% of dyads to training while retaining every active row and column. Plotted intervals are means plus or minus 1.96 standard errors. Supplementary Section S6 gives the complete designs, paired contrasts, calibration results, diagnostics, and execution records.

\subsection{Simulation 1: Hierarchy-signal recovery}
\label{sec:balanced-experiment}

Let \(r_i,c_j\in\{0,\ldots,b-1\}^d\) denote the root-to-leaf paths assigned to row \(i\) and column \(j\). Their lowest-common-ancestor depth is
\[
L_{ij}=\sum_{\ell=1}^d\prod_{s=1}^{\ell}\mathbf 1(r_{is}=c_{js}),
\qquad
\widetilde{\mathbf L}=\projA\mathbf L.
\]
For each replicate, \(a_i\) and \(b_j^\star\) are independent \(\Normal(0,0.45^2)\) variables. We set \(\alpha_i=a_i-\bar a\) and \(\beta_j=b_j^\star-\bar b^\star\), then generate
\[
Y_{ij}\mid P_{ij}\sim\Bern(P_{ij}),
\qquad
\logit(P_{ij})=\mu_\rho+\alpha_i+\beta_j+\lambda\widetilde L_{ij}.
\]
The intercept \(\mu_\rho\) is solved numerically so that \(n^{-1}m^{-1}\sum_{i,j}P_{ij}=\rho\). This construction makes \(\lambda\widetilde{\mathbf L}\) the true identified hierarchy signal.

The design uses \(b\in\{2,3,4\}\), \(d\in\{3,4\}\), \(n=m\in\{100,200,400\}\), \(\lambda\in\{0.25,0.5,1\}\), and \(\rho\in\{0.1,0.3,0.5\}\). A deterministic pairwise cover selects 14 of the 162 factorial cells, with \(p=4\) throughout. The goal is to evaluate held-out prediction and recovery of \(\lambda\widetilde{\mathbf L}\). HELPI is compared with main effects, Euclidean distance, spherical position, and additive-and-multiplicative effects (AME).

Figure~\ref{fig:balanced-exemplar} shows binary networks generated under three cells of this design. To keep individual dyads visible, each panel displays a path-stratified \(26\times26\) induced subnetwork from a generated \(100\times100\) array. These draws illustrate the response structure and are not included in the Monte Carlo summaries.

\begin{figure}[tbp]
\centering
\includegraphics[width=\textwidth]{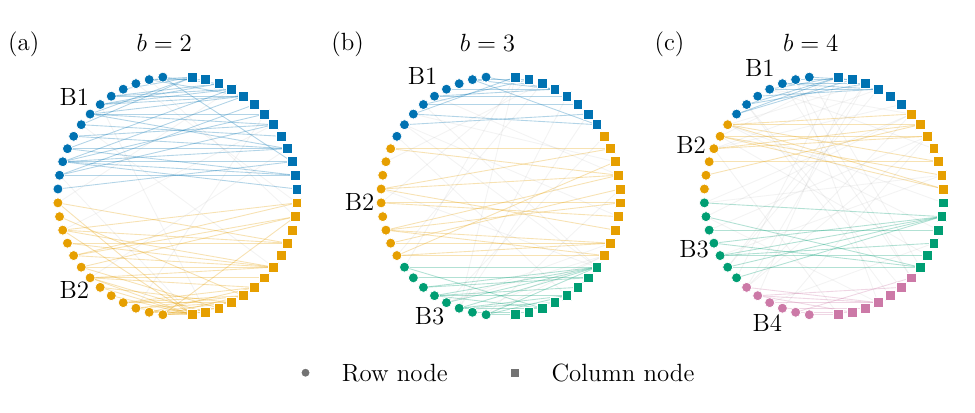}
\caption{Generated-network exemplars for Simulation~1. Panels (a), (b), and (c) use \(b=2,3,4\), respectively, with \(d=3\), \(n=m=100\), \(\lambda=1\), \(\rho=0.1\), and \(p=4\). Each panel shows a path-stratified \(26\times26\) induced subnetwork, with positive dyads drawn as edges. Row circles and column squares occupy opposite semicircles. Labels B1 through B4 identify contiguous first-level branches, and darker edges join nodes in the same branch.}
\label{fig:balanced-exemplar}
\end{figure}

\begin{figure}[tbp]
\centering
\includegraphics[width=\textwidth]{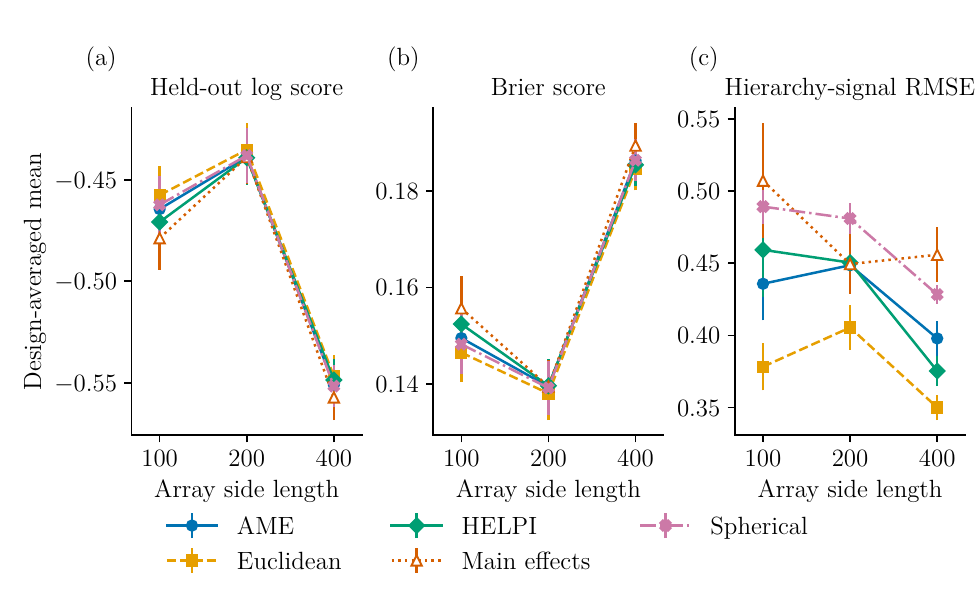}
\caption{Balanced-hierarchy simulation results. Three aligned line plots report design-averaged log score, Brier score, and hierarchy-signal RMSE across array side lengths 100, 200, and 400. Points are means, and bars are 95\% Monte Carlo intervals. Marker shapes and line styles distinguish the five methods. Factor compositions differ across sizes because each size uses its assigned pairwise-covering cells.}
\label{fig:simulation-performance}
\end{figure}

Figure~\ref{fig:simulation-performance} summarizes results by array size, while Supplementary Tables S3 and S4 give cell summaries and paired contrasts. Across the selected design, HELPI improves every reported metric relative to main effects, and the paired intervals for log score, Brier score, and AUC exclude zero. The hierarchy-signal results also show recovery beyond the additive benchmark. The Euclidean model has the best average value for several metrics. In these selected cells, it approximates the generated hierarchy signal well despite using a different geometry.

\subsection{Simulation 2: Branch collapse and weak separation}
\label{sec:collapse-experiment}

Let \(s\in\{0,0.01,0.03,0.1,0.3,0.6,1\}\) denote angular branch spread. Row and column tangent coordinates are
\[
\tau_i=D_i u_i(s),
\qquad
\upsilon_j=E_j v_j(s),
\qquad
D_i\sim\operatorname{Unif}(3,4),
\qquad
E_j\sim\operatorname{Unif}(0.6,1.6).
\]
The first two coordinates of \(u_i(s)\) are proportional to \(\{\cos(s\epsilon_i),\sin(s\epsilon_i)\}\), where \(\epsilon_i\sim\Normal(0,1)\). The directions \(v_j(s)\) are generated independently in the same way. Remaining coordinates receive independent \(\Normal(0,0.01s^2)\) perturbations before normalization. With \(G_{ij}=G_o\{\Exp_o(\tau_i),\Exp_o(\upsilon_j)\}\), responses follow
\[
\logit(P_{ij})=\mu_{0.30}+\alpha_i+\beta_j
+0.75(\projA\bG)_{ij},
\qquad n=m=80,\quad p=4,
\]
using the centered Gaussian main effects from Simulation~1. At \(s=0\), every point lies on one ordered ray and \(\projA\bG=0\) by Proposition~\ref{prop:collapse}.

The goal is to compare the fitted quotient singular value \(\widehat\sigma_{\min}(J_{\rm br})\) with the across-restart standard deviation of \(\widehat\lambda\). The fitted singular value measures local separation among additive, coefficient, and branch-shape directions. The across-restart standard deviation measures optimization sensitivity in the separately reported coefficient. Figure~\ref{fig:collapse-exemplar} shows generated binary networks at three branch spreads, and Figure~\ref{fig:collapse-resolvability} reports the replicate-level and spread-level diagnostic relationship.

\begin{figure}[tbp]
\centering
\includegraphics[width=\textwidth]{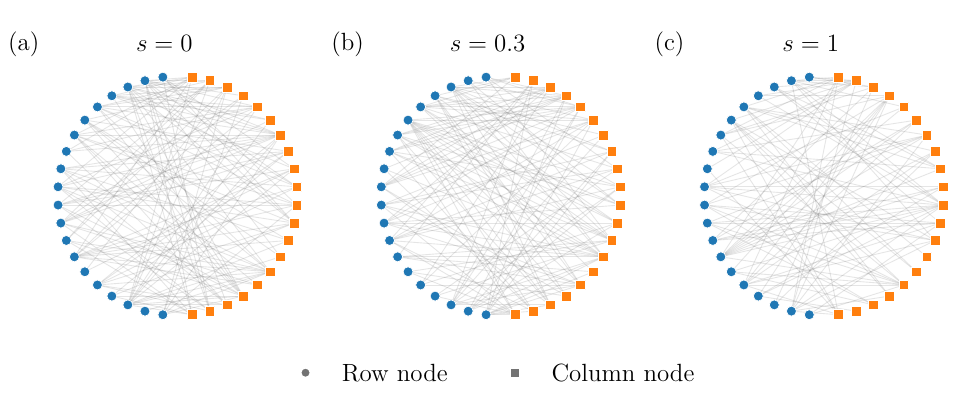}
\caption{Generated-network exemplars for Simulation~2. Panels (a), (b), and (c) use branch spreads \(s=0,0.3,1\), respectively, with \(n=m=80\), \(p=4\), \(\lambda=0.75\), and \(\rho=0.3\). Each panel is an angle- and depth-stratified \(20\times20\) induced subnetwork. Row circles occupy the left semicircle, and column squares occupy the right semicircle. Connecting lines represent realized positive dyads.}
\label{fig:collapse-exemplar}
\end{figure}

\begin{figure}[tbp]
\centering
\includegraphics[width=\textwidth]{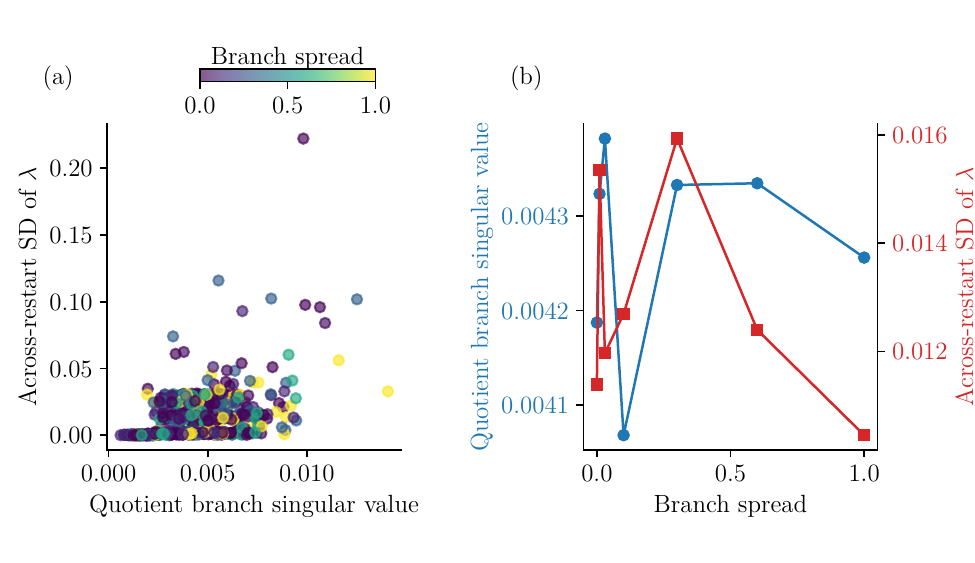}
\caption{Collapse and weak-branch results. Panel (a) is a shaded scatterplot relating replicate-level fitted quotient singular values to across-restart standard deviations of \(\lambda\). Ordered shading indicates branch spread. Panel (b) traces their spread-specific means, with circles using the left axis and squares using the right axis.}
\label{fig:collapse-resolvability}
\end{figure}

At $s=0$, the population hierarchy signal collapses exactly by construction. Across finite-sample fits, however, the fitted diagnostic varies little over the spread grid. Its association with restart variation is positive rather than the prespecified negative ordering. We therefore interpret \(\widehat\sigma_{\min}(J_{\rm br})\) jointly with restart and signal summaries instead of as a stand-alone diagnostic. Supplementary Table S5 gives spread-specific results. Supplementary Sections S6.3 and S6.4 report the complementary-structure and analyst-choice simulations.

\subsection{Real data example 1: Learning from Junyi Academy}
\label{sec:junyi-experiment}

The analysis uses the 2015 PSLC DataShop Junyi Academy release, dataset 1198, and exercise metadata \citep{chang_2015_ModelingExerciseRelationships,junyiacademy_2015_JunyiAcademyMath}. For each observed student-exercise pair \((i,j)\in\Omega\), \(Y_{ij}=1\) denotes a correct first attempt. The conditional response model is
\[
Y_{ij}\mid (i,j)\in\Omega\sim\Bern(P_{ij}),
\qquad
\logit(P_{ij})=\alpha_i+\beta_j+\lambda(z_i\mid w_j)_o.
\]
Streaming preprocessing retains 2,289,789 unique first-attempt pairs from 25,925,992 log rows. A deterministic core contains 2,000 students, 500 exercises, and 317,091 observed pairs. Within each student, the earliest 80\% of attempts form training data and the remainder form testing data. Exercises absent from training are excluded, and the estimand is correctness conditional on an observed attempt.

Each exercise has a path \(h_j=(\text{area}_j,\text{topic}_j,\text{exercise}_j)\), which is embedded as a tangent anchor \(\widetilde\upsilon_j\). Fixed HELPI sets \(\upsilon_j=\widetilde\upsilon_j\), soft HELPI centers its tangent prior at \(\widetilde\upsilon_j\) with scale \(0.35\), and unanchored HELPI uses a zero-centered prior. Comparisons include Rasch, two-parameter logistic (2PL), Euclidean, spherical, and AME models. Every nonconvex fit uses five starts. The goal is to evaluate temporal prediction and determine whether the curriculum taxonomy yields an interpretable geometry.

\begin{figure}[tbp]
\centering
\includegraphics[width=\textwidth]{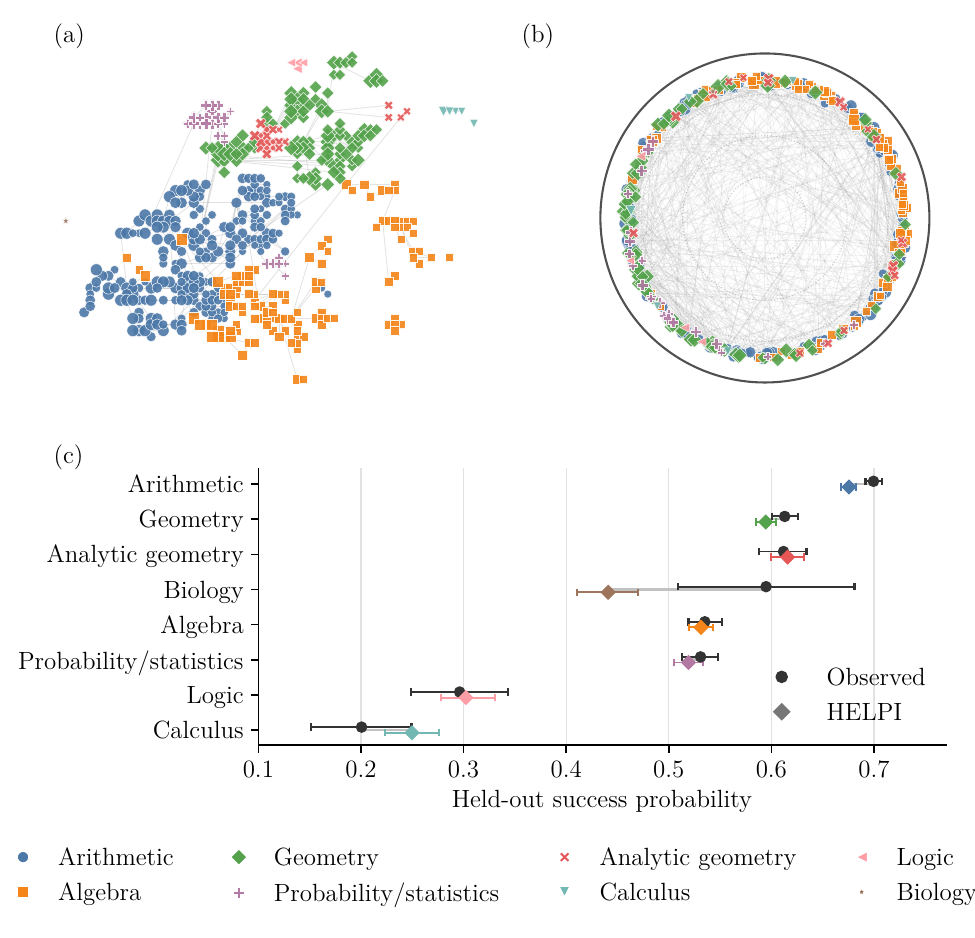}
\caption{Junyi curriculum structure and soft-anchor HELPI fit. Panel (a) arranges exercises and prerequisite links by curriculum coordinates. Marker shape and fill identify curricular area, while node size indicates retained activity. Panel (b) projects the same exercises and links into a two-dimensional Poincar\'e disk. Panel (c) compares observed and fitted mean success probabilities by area, with horizontal 95\% student-cluster bootstrap intervals.}
\label{fig:junyi-network}
\end{figure}

On the primary split, 2PL has log score \(-0.5090\), Brier score \(0.1686\), and AUC \(0.8075\). Soft HELPI records \(-0.5125\), \(0.1701\), and \(0.8038\), respectively, and its log score is within $0.0067$ of the best model in each curricular window reported in Supplementary Table S10. Figure~\ref{fig:junyi-network} shows curriculum organization in the fitted geometry. Area-level fitted probabilities track observed success except in the sparse biology area. The taxonomy therefore provides an interpretable summary with temporal prediction near established item-response models. HELPI fits have zero depth-boundary fraction. Their local diagnostic matrices are rank deficient, so we interpret fitted probabilities and the identified hierarchy signal rather than \(\lambda\) alone.

To examine exposure, define \(R_{ij}=\mathbf 1\{(i,j)\in\Omega\}\). A reproducible case-control analysis pairs each held-out observed dyad with two unobserved dyads. Its AUC shows that attempted and sampled unattempted pairs differ systematically. Because the control prevalence is fixed, these probabilities do not estimate population exposure without correction \citep{prentice_1979_LogisticDiseaseIncidence}. HELPI therefore describes correctness after an attempt rather than unconditional mastery. Supplementary Tables S9 through S11 and Figure S2 give complete predictive and calibration results.

\subsection{Real data example 2: Voting structure in the 117th U.S. House}
\label{sec:voteview-experiment}

The analysis uses the official Voteview record for the 117th U.S. House \citep{lewis_2026_VoteviewCongressionalRollCall}. For member \(i\) and annotated roll call \(j\), define \(Y_{ij}=1\) for cast codes 1, 2, or 3 and \(Y_{ij}=0\) for codes 4, 5, or 6. Absences and present votes are outside \(\Omega\). Thus
\[
Y_{ij}\mid(i,j)\in\Omega\sim\Bern(P_{ij}),
\qquad
\logit(P_{ij})=\alpha_i+\beta_j+\lambda(z_i\mid w_j)_o,
\]
and \(P_{ij}\) is the probability of a yea among recorded votes. The analysis contains 49,591 votes from 432 members and 118 roll calls with Congressional Research Service policy areas and recorded bill identifiers.

Five seeded masks independently assign 80\% of \(\Omega\) to training while retaining every active member and roll call. Each roll call has anchor path \(h_j=(\text{policy area}_j,\text{bill type}_j,\text{roll call}_j)\). Party labels are withheld during fitting. The goal is to measure predictive structure beyond member and roll-call propensities and to assess whether policy anchors yield an interpretable voting geometry. HELPI variants and all competing latent models use five starts.

\begin{figure}[tbp]
\centering
\includegraphics[width=\textwidth]{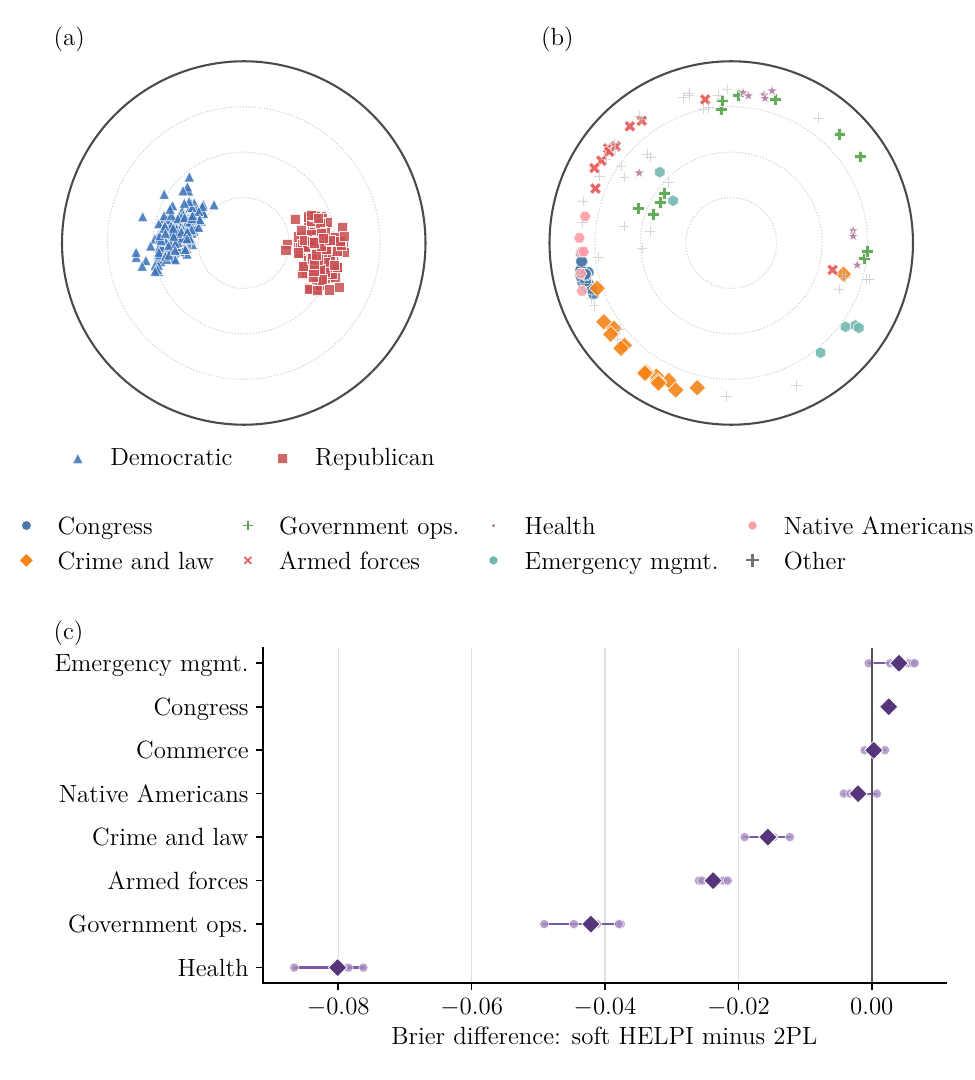}
\caption{Soft-anchor HELPI analysis of 117th House roll calls. Panel (a) places members in a Poincar\'e disk and shows two separated party clusters. Triangles mark Democratic members, and squares mark Republican members. Party labels were not used in fitting. Panel (b) distributes roll calls around a second disk and distinguishes the seven largest policy areas through marker shape and fill. Panel (c) gives soft HELPI minus 2PL Brier scores for the eight largest areas. Dots are mask results, diamonds are mask means, and negative values favor HELPI.}
\label{fig:voteview-geometry}
\end{figure}

Soft HELPI averages \(0.9902\) in AUC and \(0.0308\) in Brier score. Its mean AUC gains over main effects and 2PL are \(0.0809\) and \(0.0194\), with Brier reductions of \(0.0800\) and \(0.0236\). Relative to the AME and Euclidean fits, unanchored HELPI differs by at most \(0.0010\) in AUC and \(0.0013\) in Brier score. Policy-specific improvements are largest for Health and Government Operations and Politics, with further gains for Armed Forces and Crime and Law Enforcement.

Figure~\ref{fig:voteview-geometry} shows that the fitted member geometry separates the two major parties without using party labels. The roll-call geometry also retains visible policy organization. Soft anchoring combines the external taxonomy with voting coalitions learned from outcomes, and its gains over additive benchmarks persist across masks. Fixed anchors improve on main effects, while flexible variants accommodate structure beyond coarse policy labels. These embeddings summarize predictive voting patterns and are not causal estimates of policy effects. Supplementary Section S6.7 reports fixed-anchor, mask-specific, and calibration results.

A Markov chain Monte Carlo (MCMC) audit evaluates one simulation array and connected Junyi and Voteview subsets. All 18 invariant diagnostics satisfy the prespecified split-\(\widehat R\) and effective-sample-size thresholds \citep{vehtari_2021_RankNormalizationFoldingLocalization}. Supplementary Figure S5 and Table S14 give target definitions, chain lengths, and complete diagnostics. These subset calculations assess sampler implementation, while full-data conclusions rely on held-out prediction, multiple starts, and calibration.
\section{Discussion}
\label{sec:discussion}

HELPI separates the chosen hyperbolic representation from the geometric quantity that the bipartite likelihood can identify. With unrestricted main effects, radial depth is additive, while $\mathcal H=\lambda\projA\bG$ captures the estimable hierarchy signal. The identification results characterize equivalent predictors, collapse, curvature confounding, and local branch resolvability. In simulations, HELPI improves on main effects under hierarchical generation and captures nonadditive dependence under complementary structures. The Junyi analysis adds a taxonomy-based geometric summary while remaining close to item-response benchmarks. In Voteview, flexible HELPI variants improve on additive and 2PL benchmarks and recover party and policy organization without using party labels in fitting.

The present theory has a deliberately defined scope. Posterior contraction assumes fixed latent dimension, bounded depth, and a well-specified binary response model. These conditions provide a tractable benchmark for the identified hierarchy signal, not requirements for every useful application. Observed-design contraction addresses recorded dyads and does not imply full-array recovery under informative exposure. Likewise, the exact variational result concerns the global mean-field optimum, while practical fits are assessed through multiple starts, invariant diagnostics, boundary checks, and held-out calibration.

These boundaries suggest concrete extensions. Joint exposure-response models could separate opportunity from outcome, and growing dimension or adaptive depth could accommodate more complex hierarchies. Further work on uncertainty for contrasts of $\mathcal H$ and local variational solutions would connect theory more closely to computation. External taxonomies could be treated as uncertain inputs, while nonbinary outcomes require response-specific algorithms and asymptotic analysis.


\clearpage
\phantomsection
\label{supplementary-material}

\setcounter{section}{0}
\setcounter{figure}{0}
\setcounter{table}{0}
\setcounter{equation}{0}
\renewcommand{\thesection}{S\arabic{section}}
\renewcommand{\thefigure}{S\arabic{figure}}
\renewcommand{\thetable}{S\arabic{table}}
\renewcommand{\theequation}{S\arabic{equation}}
\renewcommand{\theHsection}{supp.section.\arabic{section}}
\renewcommand{\theHfigure}{supp.figure.\arabic{figure}}
\renewcommand{\theHtable}{supp.table.\arabic{table}}
\renewcommand{\theHequation}{supp.equation.\arabic{equation}}

\begin{center}
{\LARGE\bf Supplementary Material for}\\[0.6em]
{\Large\bf \textquotedblleft Hyperbolic Latent Position Models for Hierarchical Bipartite Data\textquotedblright}
\end{center}
\bigskip

This supplement contains all proofs for the results stated in the main manuscript. We also record the complete empirical designs, numerical summaries, calibration results, and convergence audit. Throughout, HELPI denotes the HypErbolic Latent Position model for bipartite Interaction data introduced in the main manuscript. Restatements and proof headings for main-text results retain the corresponding main-manuscript numbers; independently numbered supplementary results, sections, and displays carry an ``S'' prefix.

\section{Identification proofs}

\subsection{Equivalence, invariance, and collapse}

\begin{mainthmequivalence}[Distance-Gromov-product equivalence]
	For any root $o$,
	\[
	\alpha_i+\beta_j-\gamma d(z_i,w_j)
	=
	\{\alpha_i-\gamma d(o,z_i)\}
	+
	\{\beta_j-\gamma d(o,w_j)\}
	+
	2\gamma(z_i\mid w_j)_o.
	\]
	Conversely, every shared-ancestry predictor with coefficient $\lambda>0$ has a distance representation with $\gamma=\lambda/2$.
\end{mainthmequivalence}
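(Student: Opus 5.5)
The plan is to argue directly from the definition of the rooted Gromov product, since the identity is purely algebraic and uses no property of hyperbolic space beyond the existence of the metric $d$. Rearranging the definition $(z\mid w)_o=\tfrac12\{d(o,z)+d(o,w)-d(z,w)\}$ gives, for every pair $(z_i,w_j)$,
\[
-d(z_i,w_j)=2(z_i\mid w_j)_o-d(o,z_i)-d(o,w_j).
\]
Multiplying by $\gamma$ and adding $\alpha_i+\beta_j$ yields the displayed forward identity at once. In that identity the bracket $\alpha_i-\gamma d(o,z_i)$ depends only on $i$ and $\beta_j-\gamma d(o,w_j)$ depends only on $j$, so they are legitimate row and column main effects $\widetilde\alpha_i,\widetilde\beta_j$. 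The coefficient of the Gromov product is $\lambda=2\gamma$, which is positive whenever $\gamma>0$.

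For the converse, I would start from an arbitrary shared-ancestry predictor $\alpha_i+\beta_j+\lambda(z_i\mid w_j)_o$ with $\lambda>0$. Set $\gamma=\lambda/2>0$, and define new main effects
\[
\alpha_i^\ast=\alpha_i+\gamma d(o,z_i),\qquad \beta_j^\ast=\beta_j+\gamma d(o,w_j).
\]
Substituting the same rearranged definition then gives $\alpha_i^\ast+\beta_j^\ast-\gamma d(z_i,w_j)=\alpha_i+\beta_j+\lambda(z_i\mid w_j)_o$ entrywise. Equivalently, one applies the forward direction to $(\alpha^\ast,\beta^\ast,\gamma)$ and checks that $\widetilde\alpha=\alpha$ and $\widetilde\beta=\beta$.

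The main point needing care is that the main effects are unrestricted. The redefined $\alpha^\ast,\beta^\ast$ (and $\widetilde\alpha,\widetilde\beta$) may be any real numbers, so both parameter spaces are closed under the shift. This holds because the predictors in Section~\ref{sec:model} place no constraint on $\alpha,\beta$ beyond the common-shift alias, and the Gaussian priors have full support.

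The root $o$ is arbitrary throughout; only finiteness of $d(o,\cdot)$ is used. The same argument therefore works for any root and any metric space, including the rescaled-curvature metrics of Proposition~\ref{prop:curvature}. Positions $z_i,w_j$ are left unchanged in both directions.
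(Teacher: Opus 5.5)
Your proposal is correct and follows the paper's proof essentially line for line. The forward identity comes from rearranging the definition of $(z_i\mid w_j)_o$, multiplying by $\gamma$, and substituting. The converse sets $\gamma=\lambda/2$ and absorbs $\gamma d(o,z_i)$ and $\gamma d(o,w_j)$ into the main effects. Your remarks that the identity is purely metric and relies on unrestricted main effects are accurate but go beyond what the argument needs; the appeal to full-support priors is not needed at all.
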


\begin{proof}[Proof of Theorem~4.1]
	The definition of the rooted Gromov product gives
	\[
	2\gamma(z_i\mid w_j)_o
	=
	\gamma d(o,z_i)+\gamma d(o,w_j)-\gamma d(z_i,w_j).
	\]
	Substitution proves the forward identity. For the converse, set $\gamma=\lambda/2$ and add $\gamma d(o,z_i)$ and $\gamma d(o,w_j)$ to the two main effects.
\end{proof}

\begin{mainthmpopulation}[Population equivalence modulo additive effects]
	For hierarchical pairs $(\lambda,\bG)$ and $(\lambda',\bG')$, there exist $A,A'\in\calA$ satisfying
	\[
	A+\lambda\bG=A'+\lambda'\bG'
	\]
	if and only if $\lambda'\bG'-\lambda\bG\in\calA$. Equivalently,
	\[
	\lambda\projA\bG=\lambda'\projA\bG'.
	\]
	The same identified hierarchy signal is $-\gamma\projA\bD$ in the distance parameterization and is invariant to the root.
\end{mainthmpopulation}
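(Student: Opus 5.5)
The argument is linear algebra in the space of $n\times m$ matrices with the Frobenius inner product. The key fact is that $\calA$ is a linear subspace, so $\projA$ is a linear map whose kernel is exactly $\calA$.

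First, the existence equivalence. Suppose $A+\lambda\bG=A'+\lambda'\bG'$ with $A,A'\in\calA$. Rearranging gives $\lambda'\bG'-\lambda\bG=A-A'$, which lies in $\calA$ because the subspace is closed under subtraction. Conversely, suppose $\lambda'\bG'-\lambda\bG=:C\in\calA$. Take $A'=0$ and $A=C$. Then $A+\lambda\bG=\lambda'\bG'=A'+\lambda'\bG'$. (Alternatively, take any $A'$ and set $A=A'+C$.)

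Second, the projection form. Because $\ker\projA=\calA$, membership $\lambda'\bG'-\lambda\bG\in\calA$ holds exactly when $\projA(\lambda'\bG'-\lambda\bG)=0$. By linearity this is $\lambda\projA\bG=\lambda'\projA\bG'$. Observational equivalence follows because the links are strictly increasing, so equal probability matrices are the same as equal predictor matrices $\bM$. With unrestricted main effects, two hierarchical pairs therefore give the same predictor set exactly when the condition above holds.

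Third, the distance representation. Theorem~\ref{thm:equivalence} gives the entrywise identity $\lambda\bG=2\gamma\bG=\gamma(\mathbf r\one_m^\top+\one_n\mathbf c^\top)-\gamma\bD$, where $\mathbf r_i=d(o,z_i)$, $\mathbf c_j=d(o,w_j)$ and $\bD=(d(z_i,w_j))$. The radial part lies in $\calA$, so applying $\projA$ kills it and leaves $\mathcal H=-\gamma\projA\bD$.

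Fourth, root invariance. This is immediate from the previous step, since $\bD$ does not involve $o$. Directly, $G_o(z_i,w_j)-G_{o'}(z_i,w_j)=\tfrac12\{d(o,z_i)-d(o',z_i)\}+\tfrac12\{d(o,w_j)-d(o',w_j)\}$ is a row term plus a column term, so it is annihilated by $\projA$.

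No step is a genuine obstacle. The only points needing care are that $\calA$ is closed under subtraction and that $\ker\projA=\calA$; the latter is the defining property of orthogonal projection onto $\calA^\perp$ in finite dimensions.
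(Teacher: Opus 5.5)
Your proposal is correct and follows the paper's proof step for step. You rearrange to $\lambda'\bG'-\lambda\bG=A-A'\in\calA$, take $A=C$, $A'=0$ for the converse, use $\ker\projA=\calA$, substitute the distance--Gromov identity so that $\projA$ removes the radial part, and let $\projA$ annihilate the additive root-change difference. Your additional remark that root invariance also follows directly from $\mathcal H=-\gamma\projA\bD$, since $\bD$ does not involve the root and $\gamma=\lambda/2$, is a valid shortcut.
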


\begin{proof}[Proof of Theorem~4.2]
	If the displayed predictors are equal, then
	\[
	\lambda'\bG'-\lambda\bG=A-A'\in\calA.
	\]
	Conversely, if $C=\lambda'\bG'-\lambda\bG\in\calA$, take $A=C$ and $A'=0$. The projection statement follows from $X\in\calA$ if and only if $\projA X=0$.
	
	The distance-Gromov identity from Theorem~4.1 gives
	\[
	-\gamma\projA\bD=2\gamma\projA\bG=\lambda\projA\bG.
	\]
	If $o'$ is another root, then
	\[
	G_o(z_i,w_j)-G_{o'}(z_i,w_j)
	=
	\frac12\{d(o,z_i)-d(o',z_i)\}
	+
	\frac12\{d(o,w_j)-d(o',w_j)\},
	\]
	which is additive in $i$ and $j$. Projection removes this difference.
\end{proof}

\begin{mainpropcurvature}[Curvature-coefficient confounding in the likelihood]
	Let $\mathfrak g_{\kappa_c}=\kappa_c^{-1}\mathfrak g_1$. Under the identity correspondence of points,
	\[
	d_{\kappa_c}=\kappa_c^{-1/2}d_1,
	\qquad
	(\cdot\mid\cdot)_{o,\kappa_c}=\kappa_c^{-1/2}(\cdot\mid\cdot)_{o,1}.
	\]
	Hence the likelihood depends on curvature and the geometry coefficient only through $\gamma/\sqrt{\kappa_c}$ or $\lambda/\sqrt{\kappa_c}$.
\end{mainpropcurvature}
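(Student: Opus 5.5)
The argument reduces to the fact that scaling a Riemannian metric by a constant scales every length by the square root of that constant. The plan is to establish this at the level of curve lengths, pass to the induced distances, then to Gromov products, and finally substitute into the two predictor forms.

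\emph{Curve lengths.} For any piecewise $C^1$ curve $c:[0,1]\to M$ on the common manifold $M$, the speed satisfies
\[
\sqrt{\mathfrak g_{\kappa_c}(\dot c,\dot c)}=\kappa_c^{-1/2}\sqrt{\mathfrak g_1(\dot c,\dot c)}.
\]
Integrating gives $\mathrm{len}_{\kappa_c}(c)=\kappa_c^{-1/2}\mathrm{len}_1(c)$.

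\emph{Distances.} Both metrics live on the same manifold, so the admissible curves joining two points are the same set. Taking the infimum over that set gives $d_{\kappa_c}(x,y)=\kappa_c^{-1/2}d_1(x,y)$ for all $x,y$. The sectional curvature claim follows from the standard rule that multiplying a metric by $c>0$ multiplies sectional curvature by $c^{-1}$, so $-1$ becomes $-\kappa_c$. This is also visible directly on the hyperboloid of radius $\kappa_c^{-1/2}$.

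\emph{Gromov products.} The root $o$ is the same point under the identity map. Since $(x\mid y)_o$ is a linear combination of three distances, the scaling passes through:
\[
(x\mid y)_{o,\kappa_c}=\tfrac12\{d_{\kappa_c}(o,x)+d_{\kappa_c}(o,y)-d_{\kappa_c}(x,y)\}=\kappa_c^{-1/2}(x\mid y)_{o,1}.
\]

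\emph{Likelihood.} Substituting into the two predictor forms gives
\[
\alpha_i+\beta_j-\gamma d_{\kappa_c}(z_i,w_j)=\alpha_i+\beta_j-(\gamma/\sqrt{\kappa_c})\,d_1(z_i,w_j),
\]
and likewise $\alpha_i+\beta_j+(\lambda/\sqrt{\kappa_c})(z_i\mid w_j)_{o,1}$ for the shared-ancestry form. With the main effects and positions held fixed as points of $M$, each $\eta_{ij}$, and hence each $\pi_{ij}=H(\eta_{ij})$ and $L_\Omega$, depends on $(\gamma,\kappa_c)$ only through $\gamma/\sqrt{\kappa_c}$, and on $(\lambda,\kappa_c)$ only through $\lambda/\sqrt{\kappa_c}$.

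\emph{Prior remark.} Two pairs with the same ratio give identical likelihoods, so their posterior odds equal their prior odds. A prior that is not invariant under $(\lambda,\kappa_c)\mapsto(s\lambda,s^2\kappa_c)$ therefore produces a preference that the data do not identify.

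The only point needing care is what "identity map" means in the distance step. The positions must be held fixed as points of $M$, and not re-parametrized through $\Exp_o$ under the new metric. If tangent coordinates are used instead, the statement still holds, but the tangent vectors must be rescaled by $\kappa_c^{1/2}$ to represent the same points.
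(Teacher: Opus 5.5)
Your proof is correct and follows the paper's argument: scaling the metric by $\kappa_c^{-1}$ rescales curve lengths, and hence distances, by $\kappa_c^{-1/2}$, and the Gromov product inherits the same factor because it is a linear combination of three distances. Your explicit substitution into the predictors and your caveat that the points, not curvature-dependent tangent coordinates, are held fixed are accurate elaborations of the paper's two-line proof rather than a different route.
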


\begin{proof}[Proof of Proposition~4.1]
	Multiplying a Riemannian metric by $\kappa_c^{-1}$ multiplies all path lengths by $\kappa_c^{-1/2}$ and changes sectional curvature from $-1$ to $-\kappa_c$. The Gromov product is a linear combination of three distances and scales identically.
\end{proof}

\begin{mainpropcollapse}[Ordered single-branch collapse]
	If all positions lie on one geodesic ray and one node type is uniformly deeper than the other, then $\bD\in\calA$ and $\projA\bG=0$.
\end{mainpropcollapse}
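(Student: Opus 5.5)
Since all points lie on one geodesic ray $\rho:[0,\infty)\to\Hh^p$ with $\rho(0)=o$, the plan is to reduce everything to arithmetic on depths. A geodesic ray is an isometric embedding of a half-line, so for any two points on it $d(\rho(s),\rho(t))=|s-t|$. Writing $z_i=\rho(\ell_i^{(z)})$ and $w_j=\rho(\ell_j^{(w)})$, this gives
\[
D_{ij}=d(z_i,w_j)=\bigl|\ell_i^{(z)}-\ell_j^{(w)}\bigr|.
\]
In the Lorentz model one can check this directly: after a root-fixing rotation the ray is $t\mapsto\Exp_o(t e_1)$, and $-\langle\Exp_o(se_1),\Exp_o(te_1)\rangle_{\mathrm L}=\cosh s\cosh t-\sinh s\sinh t=\cosh(s-t)$.

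\textbf{Step 1: $\bD\in\calA$.} Consider the first case, $\min_i\ell_i^{(z)}>\max_j\ell_j^{(w)}$. Every absolute value then has a fixed sign, so
\[
D_{ij}=\ell_i^{(z)}-\ell_j^{(w)}.
\]
This is a row term plus a column term, i.e.\ of the form $a\one_m^\top+\one_n b^\top$ with $a_i=\ell_i^{(z)}$ and $b_j=-\ell_j^{(w)}$. Hence $\bD\in\calA$. The second case is symmetric, with $D_{ij}=\ell_j^{(w)}-\ell_i^{(z)}$.

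\textbf{Step 2: $\projA\bG=0$.} Apply Theorem~\ref{thm:population} in the distance representation: $\lambda\projA\bG=-\gamma\projA\bD$ with $\gamma=\lambda/2$. Since $\bD\in\calA$, we get $\projA\bD=0$, and therefore $\projA\bG=0$. One can also see this directly: in the first case $G_{ij}=\tfrac12\{\ell_i^{(z)}+\ell_j^{(w)}-(\ell_i^{(z)}-\ell_j^{(w)})\}=\ell_j^{(w)}$, which depends on $j$ only and so lies in $\calA$.

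\textbf{Step 3: observational equivalence.} By the equivalence criterion in Theorem~\ref{thm:population}, the predictor $A+\lambda\bG$ is observationally equivalent to the pair with $\lambda'\bG'=0$. Concretely, $\lambda\bG$ is absorbed into the main effects, giving $A'=A+\lambda\bG\in\calA$ and a predictor with main effects alone.

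The only point that needs care is the isometry claim $d=|s-t|$ along the ray, and the Lorentz computation above handles it. The strict ordering hypothesis is exactly what removes the absolute value. Without it, a matrix such as $|\ell_i-\ell_j|$ with interleaved depths is generally not additive.
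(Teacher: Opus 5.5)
Your proof is correct and follows the paper's argument: under the strict ordering, $D_{ij}=\ell_i^{(z)}-\ell_j^{(w)}$ is additive and $G_{ij}=\ell_j^{(w)}$ is column-additive, and the other ordering is symmetric. Your Lorentz check that distance along the ray equals $|s-t|$ makes explicit a fact the paper uses without comment, and your direct computation in Step 2 is the cleaner route because the detour through Theorem~\ref{thm:population} needs $\lambda>0$ to divide out.
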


\begin{proof}[Proof of Proposition~4.2]
	Suppose $\ell_i^{(z)}>\ell_j^{(w)}$ for all $i,j$. Then $D_{ij}=\ell_i^{(z)}-\ell_j^{(w)}\in\calA$ and
	\[
	G_{ij}=\frac12\{\ell_i^{(z)}+\ell_j^{(w)}-(\ell_i^{(z)}-\ell_j^{(w)})\}=\ell_j^{(w)},
	\]
	which is column additive. The other ordering is symmetric.
\end{proof}

\begin{mainproptree}[Tree interpretation]
	In an exact rooted tree metric, $G_{ij}$ equals the depth of the lowest common ancestor of $z_i$ and $w_j$.
\end{mainproptree}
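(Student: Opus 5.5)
The plan is to work directly with the path structure of a rooted metric tree. In an exact rooted tree (a geodesic metric tree, or the path metric on a weighted graph-theoretic tree) with root $o$, any two points $z,w$ are joined by a unique geodesic. Write $c=\LCA(z,w)$ for the last common point of the root paths $[o,z]$ and $[o,w]$. Depth is distance from the root, so $\operatorname{depth}(c)=d(o,c)$. The whole argument reduces to expressing the three distances in the definition of $G_o$ through the point $c$.

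The first step is to show that $[o,z]=[o,c]\cup[c,z]$ and $[o,w]=[o,c]\cup[c,w]$. This holds because $c$ lies on both root geodesics, so
\[
d(o,z)=d(o,c)+d(c,z),\qquad d(o,w)=d(o,c)+d(c,w).
\]
The second step is to show that the geodesic from $z$ to $w$ passes through $c$, giving $d(z,w)=d(z,c)+d(c,w)$. The argument goes as follows. The concatenation $[z,c]\cup[c,w]$ is an injective path: if the two pieces met at some point other than $c$, that point would be common to both root paths beyond $c$, contradicting the maximality of $c$. In a tree, an injective path between two points is the unique geodesic (uniqueness of arcs), so its length equals $d(z,w)$.

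The third step is substitution into the definition:
\[
G_o(z,w)=\tfrac12\bigl\{2d(o,c)+d(c,z)+d(c,w)-d(c,z)-d(c,w)\bigr\}=d(o,c)=\operatorname{depth}\{\LCA(z,w)\}.
\]
Applying this entrywise with $z=z_i$ and $w=w_j$ gives $G_{ij}=\operatorname{depth}\{\LCA(z_i,w_j)\}$. The second claim then follows immediately from the definition $\mathcal H=\lambda\projA\bG$. By Theorem~4.2 this projected matrix is also independent of which root convention is used, up to the additive terms that $\projA$ removes.

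The main obstacle is the second step, namely justifying that the $z$–$w$ geodesic passes through $c$ with no shortcut. I would handle it with the uniqueness of arcs in $\R$-trees (equivalently, the absence of cycles in a graph tree). A shortcut combined with the two branches below $c$ would produce a nontrivial loop, and a tree has none. If the statement is read for discrete trees whose nodes may sit on edges' interiors, the same argument applies to the geometric realization.
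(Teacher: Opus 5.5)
Your proof is correct and follows the paper's route: the paper writes $d(z_i,w_j)=d(o,z_i)+d(o,w_j)-2d(o,c)$ for the lowest common ancestor $c$ and substitutes this into the Gromov product to get $G_{ij}=d(o,c)$. You additionally justify that distance identity, using the fact that arcs in a tree are unique geodesics, whereas the paper takes the identity as given.
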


\begin{proof}[Proof of Proposition~4.3]
	If $c$ is the lowest common ancestor, then
	\[
	d(z_i,w_j)=d(o,z_i)+d(o,w_j)-2d(o,c).
	\]
	Substitution into the Gromov product gives $G_{ij}=d(o,c)$.
\end{proof}

\subsection{Constructive local image and branch resolvability}

Let $\calX^\circ$ be the smooth position domain defined in the main manuscript and let
\[
\Psi(\tau_{1:n},\upsilon_{1:m})=
\left[G_o\{\Exp_o(\tau_i),\Exp_o(\upsilon_j)\}\right]_{ij}.
\]

\begin{lemma}[Generic maximal-rank local image]
	\label{supp:localimage}
	On every connected component $\calC$ of $\calX^\circ$, the map $\Psi$ is real analytic. If
	\[
	d_\star=\max_{x\in\calC}\operatorname{rank}D\Psi_x,
	\]
	then the maximal-rank locus $\calC_{\max}=\{x:\operatorname{rank}D\Psi_x=d_\star\}$ is open and dense in $\calC$, and its complement is contained in a proper real-analytic subset of Lebesgue measure zero. For every $x_0\in\calC_{\max}$, there is a neighborhood $\mathcal U$ on which the rank is constant and
	\[
	\mathfrak S_G=\Psi(\mathcal U)
	\]
	is an embedded local submanifold with tangent space
	\[
	T_{\bG}\mathfrak S_G=\operatorname{ran}D\Psi_{x_0}.
	\]
	The tangent directions generated by simultaneous $O(p)$ rotations fixing the root belong to $\ker D\Psi_{x_0}$.
\end{lemma}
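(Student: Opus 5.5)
The proof has four steps: analyticity of $\Psi$, the structure of the maximal-rank locus, the constant-rank theorem, and a direct computation for the rotation directions.

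\emph{Analyticity.} Write $G_o(z,w)=\tfrac12\{d(o,z)+d(o,w)-d(z,w)\}$. With $z=\Exp_o(\tau)$ we have $d(o,z)=\|\tau\|$, and
\[
-\langle z,w\rangle_{\mathrm L}=\cosh\|\tau\|\cosh\|\upsilon\|-\frac{\sinh\|\tau\|}{\|\tau\|}\,\frac{\sinh\|\upsilon\|}{\|\upsilon\|}\,\tau^\top\upsilon .
\]
The functions $\cosh\ell$ and $\sinh\ell/\ell$ are even entire functions of $\ell$, hence analytic in $\|\tau\|^2$. So $-\langle z,w\rangle_{\mathrm L}$ is analytic in $(\tau,\upsilon)$ everywhere. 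The map $\arcosh$ is analytic on $(1,\infty)$, and the argument exceeds $1$ exactly when $z\ne w$. The radial terms $\|\tau_i\|$ are analytic away from $\tau_i=0$. I take the collision-free domain $\calX^\circ$ to exclude coincidences $z_i=w_j$ and zero tangent vectors, which I read as the ``smooth position domain''. On it every entry of $\Psi$ is real analytic.

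\emph{Maximal-rank locus.} On a connected component $\calC$, set $d_\star=\max_{\calC}\operatorname{rank}D\Psi$ and pick $x_1$ attaining it. Some $d_\star\times d_\star$ minor of $D\Psi$ is nonzero at $x_1$; call this minor $f$. Then $f$ is real analytic on $\calC$ and not identically zero. By the identity theorem on the connected set $\calC$, its zero set $Z(f)$ is a proper analytic subset: closed, with empty interior, and of Lebesgue measure zero. The standard proof is by induction on dimension via Fubini, or by citing Mityagin's note on zero sets of analytic functions.

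Now $\calC_{\max}$ is the set where some $d_\star$-minor is nonzero, so it is open, and it contains $\calC\setminus Z(f)$. Hence $\calC_{\max}$ is dense, and its complement lies inside $Z(f)$. This requires rank to be lower semicontinuous, which holds because a nonzero minor stays nonzero on a neighborhood, together with $\operatorname{rank}\le d_\star$ everywhere on $\calC$.

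\emph{Local submanifold.} At $x_0\in\calC_{\max}$, openness gives a neighborhood on which the rank equals $d_\star$ identically. The constant rank theorem then gives a smaller neighborhood $\mathcal U$ with $\Psi(\mathcal U)$ an embedded $d_\star$-dimensional submanifold. One may need to shrink $\mathcal U$ so that the image is embedded rather than merely immersed; the constant rank normal form gives this directly. The tangent space at $\bG=\Psi(x_0)$ is $\operatorname{ran}D\Psi_{x_0}$.

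\emph{Rotation directions.} For $Q\in O(p)$, the Lorentz map $\mathrm{diag}(1,Q)$ fixes $o$, preserves $\langle\cdot,\cdot\rangle_{\mathrm L}$, and satisfies $\Exp_o(Q\tau)=\mathrm{diag}(1,Q)\Exp_o(\tau)$. All three distances in $G_o$ are therefore invariant, so
\[
\Psi(Q\tau_{1:n},Q\upsilon_{1:m})=\Psi(\tau_{1:n},\upsilon_{1:m}).
\]
For a skew-symmetric $S$, differentiate along the curve $t\mapsto e^{tS}x_0$. This curve stays in $\calC$ for small $t$, since $\calC$ is open and rotations preserve the excluded sets. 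The derivative gives $D\Psi_{x_0}(S\tau_{1:n},S\upsilon_{1:m})=0$.

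\emph{Main obstacle.} I expect the main difficulty to be pinning down the domain so that analyticity genuinely holds. The radial term $\|\tau\|$ is not analytic at $0$, and $\arcosh$ fails to be analytic at $1$, so the definition of $\calX^\circ$ must exclude both. If the paper's domain keeps $\tau_i=0$, I would instead argue analyticity of the combination $d(o,z)-d(z,w)$ directly, or restrict to the complement, which is itself a measure-zero analytic set.
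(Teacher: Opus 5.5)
Your proposal is correct and follows the paper's proof essentially step for step. The shared steps are: analyticity away from the root and cross-position coincidences; a maximal minor nonzero at one point, combined with the identity theorem and the measure-zero result for analytic zero sets; the constant-rank theorem for the local image and tangent space; and differentiation of the root-fixing $O(p)$ orbit for the kernel directions. Your explicit formula for $-\langle z,w\rangle_{\mathrm L}$ and the check $\Exp_o(Q\tau)=\mathrm{diag}(1,Q)\Exp_o(\tau)$ fill in details the paper leaves implicit. One caveat: your fallback of proving analyticity of $d(o,z)-d(z,w)$ at $\tau=0$ would not work, since $G_o$ is generally not even differentiable there; the paper's domain excludes the root, so that fallback is never needed.
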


\begin{proof}
	Away from the root and cross-position coincidences, the norm, exponential map, Lorentz inner product, and $\arcosh$ are real analytic. Hence $\Psi$ is real analytic. Rank is lower semicontinuous, so the maximal-rank locus is open. Fix $x_0$ at which rank $d_\star$ is attained and choose a maximal minor that is nonzero at $x_0$. By the identity theorem, this minor is not identically zero on the connected component containing $x_0$. Its zero set is a proper real-analytic subset with empty interior and Lebesgue measure zero \citep{krantz_2002_PrimerRealAnalytic,mityagin_2020_ZeroSetReal}. The nonvanishing loci of all maximal minors therefore form an open dense set on that component. The excluded root and coincidence sets may divide the domain into components with different collision-free branch arrangements. On the maximal-rank locus, the constant-rank theorem yields the stated local image and tangent space.
	
	A simultaneous rotation $Q\in O(p)$ fixing $o$ preserves $d(o,z_i)$, $d(o,w_j)$, and $d(z_i,w_j)$, hence preserves every Gromov product. Differentiating the orbit shows that its tangent directions lie in the kernel of $D\Psi_{x_0}$.
\end{proof}

\begin{mainthmlocalid}[Local branch identification]
	Let $x_0\in\calC_{\max}$, let $\bG=\Psi(x_0)$, and let $E_{\calA}$ and $E_G$ be orthonormal bases of $\calA$ and $\operatorname{ran}D\Psi_{x_0}$. If
	\[
	J_{\rm br}=[E_{\calA},\operatorname{vec}(\bG),\lambda E_G]
	\]
	has full column rank, then the map $(A,\lambda,\bG)\mapsto A+\lambda\bG$ is one-to-one in a neighborhood of the reference point on $\calA\times(0,\infty)\times\mathfrak S_G$. Full column rank requires
	\[
	(n+m-1)+1+\operatorname{rank}D\Psi_{x_0}\le nm.
	\]
	When the residual stabilizer is discrete, the root-fixing $O(p)$ orbit has dimension $p(p-1)/2$ and lies in $\ker D\Psi_{x_0}$, so
	\[
	\operatorname{rank}D\Psi_{x_0}\le (n+m)p-\frac{p(p-1)}2.
	\]
	More generally, the actual orbit dimension is subtracted.
\end{mainthmlocalid}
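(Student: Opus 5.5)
The plan is to treat the map $F(A,\lambda,\bG)=A+\lambda\bG$ as a smooth map from the manifold $\calA\times(0,\infty)\times\mathfrak S_G$ into $\R^{n\times m}$ and apply the inverse-function/immersion theorem. By Lemma~\ref{supp:localimage}, $\mathfrak S_G=\Psi(\mathcal U)$ is an embedded local submanifold near $\bG=\Psi(x_0)$, with tangent space $\operatorname{ran}D\Psi_{x_0}$. Then $F$ is smooth on a product of manifolds of dimension $(n+m-1)+1+\operatorname{rank}D\Psi_{x_0}$; recall $\dim\calA=n+m-1$ because of the common-shift alias. Its differential at $(A,\lambda,\bG)$ acts on a tangent vector $(\dot A,\dot\lambda,\dot{\bG})$ by
\[
\dot A+\dot\lambda\bG+\lambda\dot{\bG}.
\]
Expressing $\dot A=E_{\calA}a$ and $\dot{\bG}=E_Gg$ in the chosen orthonormal bases, this differential is exactly the matrix $J_{\rm br}$ applied to $(a,\dot\lambda,g)$ after vectorization.

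First, full column rank of $J_{\rm br}$ means $DF$ is injective at the reference point, so $F$ is an immersion there. Second, an immersion is locally injective: in local coordinates the rank theorem represents $F$ as a linear inclusion, so injectivity holds on some neighborhood. This proves the one-to-one claim. The residual alias is only the kernel of $(\alpha,\beta)\mapsto\alpha\one_m^\top+\one_n\beta^\top$, spanned by $(\one_n,-\one_m)$, which explains the one-dimensional shift. For the rank-deficient converse, a nonzero kernel vector $(a,\dot\lambda,g)$ gives a first-order perturbation leaving $A+\lambda\bG$ unchanged. It is nonzero in at least one of the three blocks, and because the bases are orthonormal, that block corresponds to a genuine change in $A$, $\lambda$, or $\bG$.

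Third, the dimension count follows directly. A matrix with $nm$ rows can have full column rank only if it has at most $nm$ columns, which gives $(n+m-1)+1+\operatorname{rank}D\Psi_{x_0}\le nm$. Fourth, $D\Psi_{x_0}$ is defined on the tangent domain $\R^{(n+m)p}$, and by Lemma~\ref{supp:localimage} the tangent space of the root-fixing $O(p)$ orbit lies in its kernel. If the stabilizer of $x_0$ in $O(p)$ is discrete, the orbit map $O(p)\to\R^{(n+m)p}$ is an immersion at the identity, so the orbit has dimension $\dim O(p)=p(p-1)/2$. Rank–nullity then gives $\operatorname{rank}D\Psi_{x_0}\le(n+m)p-p(p-1)/2$. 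In general, the orbit dimension equals $p(p-1)/2$ minus the stabilizer dimension, and that quantity is subtracted instead.

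The main obstacle I expect is making the local injectivity claim precise on the product manifold rather than only at the level of first-order differentials. The tangent space of $\mathfrak S_G$ is $\operatorname{ran}D\Psi_{x_0}$ only at $\bG$; nearby points have nearby tangent spaces, and the claim needs constant rank of $\Psi$ on $\mathcal U$, which Lemma~\ref{supp:localimage} provides. Once $\mathfrak S_G$ is an embedded submanifold, I would parametrize it by a local chart $\phi:\R^{r}\to\mathfrak S_G$ with $D\phi(0)=E_G$, after an orthonormal change of coordinates. The composite $(a,\ell,s)\mapsto E_{\calA}a+\ell\,\phi(s)$ is then a smooth map between Euclidean spaces whose Jacobian at the base point is $J_{\rm br}$ up to column scaling. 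The standard immersion argument $\|F(x)-F(y)\|\ge c\|x-y\|$ near the base point, obtained from continuity of the Jacobian and its smallest singular value bounded away from zero, then gives injectivity.
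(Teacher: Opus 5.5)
Your proposal is correct and follows the paper's proof. You identify the differential $\dot A+\dot\lambda\bG+\lambda\dot{\bG}$ with $J_{\rm br}$ in the orthonormal bases, invoke the immersion theorem for local injectivity, and get the dimension inequality by comparing the number of columns with the $nm$ rows; your explicit chart with the bound $\|F(x)-F(y)\|\ge c\|x-y\|$, and your orbit--stabilizer plus rank--nullity argument (resting on Lemma~\ref{supp:localimage}) for $\operatorname{rank}D\Psi_{x_0}\le(n+m)p-p(p-1)/2$, fill in steps the paper leaves implicit.
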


\begin{proof}[Proof of Theorem~4.3]
	The differential is
	\[
	DF(\dot A,c,\dot\bG)=\dot A+c\bG+\lambda\dot\bG.
	\]
	In the selected bases, its matrix is $J_{\rm br}$. Full column rank is therefore equivalent to injectivity. The immersion theorem gives local injectivity in a neighborhood of the reference point. The dimension inequality is necessary because the number of columns of $J_{\rm br}$ cannot exceed $nm$.
\end{proof}

\section{Geometry and link lemmas}

\begin{lemma}[Compact root coordinates and Gromov-product Lipschitzness]
	\label{supp:geometrylip}
	For $\tau,\tau'\in B_{\mathrm E}(R_0)$,
	\[
	\|\tau-\tau'\|
	\le d\{\Exp_o(\tau),\Exp_o(\tau')\}
	\le \frac{\sinh R_0}{R_0}\|\tau-\tau'\|,
	\]
	where the ratio is interpreted as one at $R_0=0$. Moreover,
	\[
	|G_o(z,w)-G_o(z',w')|
	\le d(z,z')+d(w,w'),
	\qquad 0\le G_o(z,w)\le R_0
	\]
	for positions in the hyperbolic ball of radius $R_0$.
\end{lemma}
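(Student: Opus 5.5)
The plan is to handle the three claims separately, reducing the bi-Lipschitz bound to a comparison of the hyperbolic metric pulled back by $\Exp_o$ with the Euclidean metric on tangent coordinates.

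\emph{Bi-Lipschitz bound.} In geodesic polar coordinates at $o$, the hyperbolic metric pulled back by $\Exp_o$ is
\[
\dd\ell^2+\sinh^2\ell\,\dd\theta^2 ,
\]
where $\dd\theta^2$ is the round metric on the unit sphere. The Euclidean metric on $\R^p$ is $\dd\ell^2+\ell^2\dd\theta^2$. Since $\sinh\ell\ge\ell$, the pullback metric dominates the Euclidean one pointwise. This gives the lower bound directly: every hyperbolic geodesic pulls back to a curve whose Euclidean length is at most its hyperbolic length, and the Euclidean length is at least $\|\tau-\tau'\|$. This step needs no convexity, because $\Exp_o$ is a global diffeomorphism.

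For the upper bound, take the straight segment from $\tau$ to $\tau'$, which stays in the convex ball $B_{\mathrm E}(R_0)$. On this segment $\ell\le R_0$, and since $\sinh\ell/\ell$ is increasing,
\[
\sinh^2\ell\le(\sinh R_0/R_0)^2\ell^2 .
\]
Combined with $1\le\sinh R_0/R_0$ for the radial term, the hyperbolic length of the segment is at most $(\sinh R_0/R_0)\|\tau-\tau'\|$. The distance is bounded by this length. The case $R_0=0$ is trivial.

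\emph{Lipschitz bound for $G_o$.} I would write $G_o$ from its definition and apply the triangle inequality term by term:
\[
|d(o,z)-d(o,z')|\le d(z,z'),\qquad |d(o,w)-d(o,w')|\le d(w,w'),
\]
\[
|d(z,w)-d(z',w')|\le d(z,z')+d(w,w').
\]
Summing and halving gives exactly $d(z,z')+d(w,w')$.

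\emph{Range of $G_o$.} Nonnegativity is the triangle inequality $d(z,w)\le d(o,z)+d(o,w)$. For the upper bound, the triangle inequality $d(o,z)\le d(o,w)+d(z,w)$ gives
\[
G_o\le\tfrac12\{d(o,w)+d(o,w)\}=d(o,w)\le R_0 ,
\]
since $d(o,\Exp_o(\tau))=\|\tau\|\le R_0$. By symmetry $G_o\le\min\{d(o,z),d(o,w)\}$ as well.

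\emph{Main obstacle.} The only step needing care is the metric comparison. The key fact is that the polar form of the pulled-back metric is standard (Jacobi fields in constant curvature $-1$). The other points to get right are that straight segments stay inside the Euclidean ball, and that the monotonicity of $\sinh\ell/\ell$ is what controls the angular term.
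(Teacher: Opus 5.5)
Your proposal is correct and follows essentially the same route as the paper. For the upper bound, the paper likewise uses that $D\Exp_o$ has singular values $1$ radially and $\sinh\ell/\ell$ tangentially and integrates along the straight segment, and it uses the same (reverse) triangle-inequality arguments for the Gromov-product Lipschitz bound and range; the only cosmetic difference is that the paper cites the Cartan--Hadamard/Rauch comparison for the lower bound, whereas you derive it directly from the pointwise inequality $\sinh\ell\ge\ell$ in the polar form of the pulled-back metric.
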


\begin{proof}
	Hyperbolic space is a Hadamard manifold, so the exponential map is distance nondecreasing by the Cartan-Hadamard/Rauch comparison theorem, giving the lower bound with constant one. The singular values of $D\Exp_o$ are one in the radial direction and $\sinh \ell/\ell$ in tangential directions. The straight segment joining $\tau$ and $\tau'$ stays inside the Euclidean ball, so integrating the derivative bound along that segment gives the upper constant $\sinh R_0/R_0$. Equivalently, one may use geodesic convexity of hyperbolic balls and the corresponding bound for $D\Log_o$ \citep{bridson_1999_MetricSpacesNonpositive}.
	
	For the Gromov product, apply the reverse triangle inequality to the two root distances and the ordinary triangle inequality to the cross-distance. The range follows from the triangle inequality and from $d(o,z),d(o,w)\le R_0$.
\end{proof}

For Bernoulli probabilities $\pi,\pi'$, define
\[
\hBer^2(\pi,\pi')=(\sqrt\pi-\sqrt{\pi'})^2+
(\sqrt{1-\pi}-\sqrt{1-\pi'})^2.
\]
For a single dyad, let $f_\eta$ be the Bernoulli mass function with success probability $H(\eta)$, and define
\[
\ell_{\eta_0,\eta}(Y)=\log\frac{f_{\eta_0}(Y)}{f_\eta(Y)},
\qquad
\KL_{\eta_0,\eta}=\E_{\eta_0}\ell_{\eta_0,\eta}(Y),
\]
\[
V_{2,0}(\eta_0,\eta)=
\E_{\eta_0}\{\ell_{\eta_0,\eta}(Y)-\KL_{\eta_0,\eta}\}^2,
\qquad
\Renyi(f\|g)=\log\int \frac{f^2}{g}.
\]

\begin{lemma}[Global Hellinger regularity and local divergence regularity]
	\label{supp:link}
	For both the logistic and probit links, there is a universal $C_H<\infty$ such that
	\[
	\hBer^2\{H(\eta),H(\eta')\}\le C_H(\eta-\eta')^2
	\]
	for all $\eta,\eta'\in\R$. If $|\eta_0|\le M_0$, then for all $\eta\in\R$,
	\[
	\KL_{\eta_0,\eta}\le C(M_0)(\eta-\eta_0)^2.
	\]
	If additionally $|\eta-\eta_0|\le1$, then
	\[
	\KL_{\eta_0,\eta}+V_{2,0}(\eta_0,\eta)+\Renyi(f_{\eta_0}\|f_\eta)
	\le C(M_0)(\eta-\eta_0)^2.
	\]
\end{lemma}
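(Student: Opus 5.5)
The plan is to reduce all three claims to one-dimensional calculus in $\eta$ for the two explicit links, $H=\Lambda$ and $H=\Phi_0$.

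\textbf{Global Hellinger bound.} I would use the representation
\[
\hBer^2(\pi,\pi')=\bigl(\sqrt\pi-\sqrt{\pi'}\bigr)^2+\bigl(\sqrt{1-\pi}-\sqrt{1-\pi'}\bigr)^2
\]
and apply the mean value theorem to $\eta\mapsto\sqrt{H(\eta)}$ and $\eta\mapsto\sqrt{1-H(\eta)}$. It then suffices to show that both derivatives are bounded uniformly on $\R$:
\[
\frac{H'(\eta)}{2\sqrt{H(\eta)}}, \qquad \frac{H'(\eta)}{2\sqrt{1-H(\eta)}}.
\]
\begin{itemize}
\item \emph{Logistic link.} Here $H'=\Lambda(1-\Lambda)$, so the first derivative equals $\tfrac12\sqrt{\Lambda}\,(1-\Lambda)\le\tfrac12$. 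The second is bounded by symmetry.
\item \emph{Probit link.} Here $H'=\phi$. On $\eta\ge0$ we have $\Phi_0\ge1/2$, so $\phi/\sqrt{\Phi_0}\le\sqrt2\,\phi$. On $\eta<0$, Mills-ratio bounds give $\Phi_0(\eta)\gtrsim\phi(\eta)/(1+|\eta|)$. Hence $\phi/\sqrt{\Phi_0}\lesssim\sqrt{\phi(\eta)(1+|\eta|)}$, which is bounded.
\end{itemize}
Squaring and summing the two terms gives the universal constant $C_H$.

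\textbf{Global KL bound.} I would write
\[
\KL_{\eta_0,\eta}=\pi_0\log\frac{\pi_0}{\pi}+(1-\pi_0)\log\frac{1-\pi_0}{1-\pi}
\]
and use the standard bound $\KL\le\chi^2$, where
\[
\chi^2=\frac{(\pi_0-\pi)^2}{\pi(1-\pi)}.
\]
This bound only helps if $\pi(1-\pi)$ stays away from zero, which it does not for arbitrary $\eta$.

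I therefore plan to argue directly in $\eta$ instead. The function $g(\eta)=\KL_{\eta_0,\eta}$ satisfies $g(\eta_0)=0$ and $g'(\eta_0)=0$. Its second derivative is
\[
g''(\eta)=-\pi_0\,(\log H)''(\eta)-(1-\pi_0)\,(\log(1-H))''(\eta).
\]
For both links, $-(\log H)''$ and $-(\log(1-H))''$ are bounded on $\R$:
\begin{itemize}
\item for the logit link, both equal $\Lambda(1-\Lambda)\le1/4$;
\item for the probit link, boundedness follows from log-concavity of $\Phi_0$ together with the Mills-ratio asymptotics, which give $-(\log\Phi_0)''\to1$ as $\eta\to-\infty$.
\end{itemize}
Taylor's theorem with integral remainder then gives $g(\eta)\le\tfrac12\sup|g''|\,(\eta-\eta_0)^2$. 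This argument does not even need $|\eta_0|\le M_0$; that hypothesis only enters for the local terms below.

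\textbf{Local bounds.} When $|\eta_0|\le M_0$ and $|\eta-\eta_0|\le1$, both $\eta$ and $\eta_0$ lie in $[-M_0-1,M_0+1]$. On that interval $\pi$ and $\pi_0$ are confined to a compact subinterval $[c,1-c]$ of $(0,1)$, and $H'\le C$. I would handle the three terms as follows.
\begin{itemize}
\item \emph{Variance term.} The log-ratio $\ell_{\eta_0,\eta}(Y)$ takes two values, $\log(\pi_0/\pi)$ and $\log\{(1-\pi_0)/(1-\pi)\}$. Each is at most $C|\pi-\pi_0|/c\le C'|\eta-\eta_0|$. Therefore $V_{2,0}\le\E_{\eta_0}\ell^2\le C'^2(\eta-\eta_0)^2$.
\item \emph{Rényi term.} We have
\[
\Renyi=\log\bigl(1+\chi^2(f_{\eta_0}\|f_\eta)\bigr)\le\chi^2=\frac{(\pi_0-\pi)^2}{\pi(1-\pi)}\le\frac{C^2}{c(1-c)}(\eta-\eta_0)^2.
\]
\item \emph{KL term.} This is already covered by the global bound.
\end{itemize}
Summing the three bounds gives the constant $C(M_0)$.

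\textbf{Main obstacle.} The delicate step is the probit tail control, used in two places: boundedness of $\phi/\sqrt{\Phi_0}$ and of $(\log\Phi_0)''$ as $\eta\to-\infty$. For both I would invoke the Gordon/Mills inequalities
\[
\frac{|\eta|}{1+\eta^2}\,\phi(\eta)\le\Phi_0(\eta)\le\frac{\phi(\eta)}{|\eta|}\qquad(\eta<0),
\]
and the classical fact that the inverse Mills ratio $r(\eta)=\phi(\eta)/\Phi_0(\eta)$ satisfies $0<r'(\eta)<1$. Since $-(\log\Phi_0)''=r'$, this yields $0<-(\log\Phi_0)''<1$ globally.
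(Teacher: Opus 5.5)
Your proposal is correct. For the Hellinger bound and the local terms it follows the paper's route. You use the mean-value theorem with globally bounded derivatives of $\sqrt{H}$ and $\sqrt{1-H}$. You then use compactness of $[-M_0-1,M_0+1]$ to keep $\pi,\pi_0$ away from $0$ and $1$. The paper only says ``second-order expansion on a compact neighborhood''; your explicit bounds $|\ell_{\eta_0,\eta}|\le C|\eta-\eta_0|$, $V_{2,0}\le\E_{\eta_0}\ell_{\eta_0,\eta}^2$, and $\Renyi=\log(1+\chi^2)\le\chi^2$ spell that step out.

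The one genuinely different step is the global KL bound for probit. The paper uses the Bregman representation (second derivative at most $1/4$) only for the logistic link. For probit it argues that $\KL_{\eta_0,\eta}/(\eta-\eta_0)^2$ extends continuously at $\eta_0$ and stays bounded in both tails, because $-\log\Phi_0$ and $-\log(1-\Phi_0)$ grow quadratically. That is a continuity-plus-tails argument, with the constant stated as $C(M_0)$. You instead bound $g''$ uniformly on $\R$ via log-concavity of $\Phi_0$, which extends the logistic Bregman argument to probit. This yields the universal bounds $\KL_{\eta_0,\eta}\le\frac18(\eta-\eta_0)^2$ (logit) and $\KL_{\eta_0,\eta}\le\frac12(\eta-\eta_0)^2$ (probit), with no dependence on $M_0$. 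That is slightly stronger and cleaner than the paper's statement.

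Two small corrections. First, the sign bookkeeping in your last paragraph is off. The ratio $r=\phi_0/\Phi_0$ is decreasing, so $-1<r'<0$, and since $(\log\Phi_0)'=r$ one has $-(\log\Phi_0)''=-r'$, not $r'$. The classical fact $0<\lambda'<1$ concerns the hazard $\lambda(x)=\phi_0(x)/\{1-\Phi_0(x)\}$, with $r(\eta)=\lambda(-\eta)$. Your two sign errors cancel, so the conclusion $0<-(\log\Phi_0)''<1$ stands; likewise $-(\log(1-\Phi_0))''=\lambda'\in(0,1)$, giving $0<g''<1$. Second, the Gordon lower bound $\frac{|\eta|}{1+\eta^2}\phi_0(\eta)$ degenerates as $\eta\to0^-$. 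So $\Phi_0(\eta)\ge\phi_0(\eta)/(1+|\eta|)$ follows from it only for $\eta\le-1$. On $[-1,0)$, use $\Phi_0(\eta)\ge\Phi_0(-1)$ to get the same bound up to a constant.
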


\begin{proof}
	For the logistic link, the derivatives of $\sqrt\Lambda$ and $\sqrt{1-\Lambda}$ are globally bounded. For probit, the derivatives are
	\[
	\frac{\phi_0(\eta)}{2\sqrt{\Phi_0(\eta)}}
	\quad\text{and}\quad
	\frac{\phi_0(\eta)}{2\sqrt{1-\Phi_0(\eta)}},
	\]
	which are continuous and tend to zero in both tails by Mills-ratio asymptotics. The global Hellinger bound follows from the mean-value theorem.
	
	For a fixed bounded truth, the Bernoulli Kullback-Leibler divergence divided by $(\eta-\eta_0)^2$ extends continuously at $\eta=\eta_0$ and remains bounded in both tails. For logistic this also follows directly from the Bregman representation with second derivative at most $1/4$; for probit it follows from the quadratic tail growth of $-\log\Phi_0$ and $-\log(1-\Phi_0)$. The local second-moment and R\'enyi bounds follow by a second-order expansion on a compact neighborhood. The centered second moment is bounded by the uncentered quantity $\E_{\eta_0}\ell_{\eta_0,\eta}^2$, so the same quadratic estimate applies.
\end{proof}

\begin{lemma}[Predictor Lipschitz bound]
	\label{supp:predictor}
	Let $\theta=(\alpha,\beta,\lambda,\tau_{1:n},\upsilon_{1:m})$ and $\theta'$ be another parameter value, with all tangent coordinates in $B_{\mathrm E}(R_0)$ and $\lambda,\lambda'\ge0$. Then
	\begin{align*}
		|\eta_{ij}(\theta)-\eta_{ij}(\theta')|
		&\le |\alpha_i-\alpha_i'|+|\beta_j-\beta_j'|+R_0|\lambda-\lambda'|\\
		&\quad+C_{R_0}\max(\lambda,\lambda')
		\{\|\tau_i-\tau_i'\|+\|\upsilon_j-\upsilon_j'\|\},
	\end{align*}
	where $C_{R_0}=\sinh R_0/R_0$.
\end{lemma}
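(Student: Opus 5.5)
The predictor is $\eta_{ij}(\theta)=\alpha_i+\beta_j+\lambda G_o(z_i,w_j)$ with $z_i=\Exp_o(\tau_i)$ and $w_j=\Exp_o(\upsilon_j)$. I will split the difference $\eta_{ij}(\theta)-\eta_{ij}(\theta')$ into three pieces and bound each by the triangle inequality:
\[
(\alpha_i-\alpha_i')+(\beta_j-\beta_j')+\bigl\{\lambda G_o(z_i,w_j)-\lambda' G_o(z_i',w_j')\bigr\}.
\]
The first two pieces are already in the required form. All the work is in the bilinear-type third term.

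For the third term I use the add-and-subtract device, choosing the order so that the geometric increment carries the larger coefficient. Suppose without loss of generality that $\lambda\ge\lambda'$; otherwise swap the roles of $\theta$ and $\theta'$, which leaves the statement symmetric. Write
\[
\lambda G_o(z_i,w_j)-\lambda' G_o(z_i',w_j')
=(\lambda-\lambda')G_o(z_i',w_j')+\lambda\{G_o(z_i,w_j)-G_o(z_i',w_j')\}.
\]

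The first summand is handled by the range bound in Lemma~\ref{supp:geometrylip}. Since $\|\tau_i'\|,\|\upsilon_j'\|\le R_0$ and $d\{o,\Exp_o(\tau)\}=\|\tau\|$, the points lie in the hyperbolic ball of radius $R_0$. Hence $0\le G_o(z_i',w_j')\le R_0$, and this summand is at most $R_0|\lambda-\lambda'|$.

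The second summand is handled by the Gromov-product Lipschitz bound of the same lemma. It gives $|G_o(z_i,w_j)-G_o(z_i',w_j')|\le d(z_i,z_i')+d(w_j,w_j')$. The upper comparison for the exponential map then gives
\[
d(z_i,z_i')\le C_{R_0}\|\tau_i-\tau_i'\|,
\]
and similarly for $w_j$, with $C_{R_0}=\sinh R_0/R_0$. Multiplying by $\lambda=\max(\lambda,\lambda')$ produces the last term of the bound.

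The only point needing care is this choice of expansion order, so that the coefficient on the position increment is $\max(\lambda,\lambda')$ and not an unbounded product. Everything else is a direct appeal to Lemma~\ref{supp:geometrylip}, so I expect no substantive obstacle.
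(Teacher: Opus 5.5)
Your proposal is correct and is essentially the paper's proof: add and subtract $\lambda G_o(z_i',w_j')$, bound the coefficient-difference term by $R_0|\lambda-\lambda'|$ using $0\le G_o\le R_0$, bound the Gromov-product increment by Lemma~\ref{supp:geometrylip} together with the $\sinh R_0/R_0$ comparison for $\Exp_o$, and use symmetry in $\theta,\theta'$ to obtain the maximum. The WLOG $\lambda\ge\lambda'$ is harmless but not actually needed, since for nonnegative coefficients either expansion order already gives a coefficient at most $\max(\lambda,\lambda')$.
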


\begin{proof}
	Add and subtract $\lambda G_o\{\Exp_o(\tau_i'),\Exp_o(\upsilon_j')\}$. The coefficient-difference term is bounded by $R_0|\lambda-\lambda'|$. Lemma~\ref{supp:geometrylip} bounds the Gromov-product difference by $C_{R_0}$ times the two tangent-coordinate differences. Reversing the roles of the two parameter values yields the symmetric maximum.
\end{proof}

\section{Posterior contraction}

Let
\[
N=nm,
\qquad
d_N=(n+m)(p+1)+1,
\qquad
\eps_N^2=K_0\frac{d_N\log N}{N},
\]
and let $R_N=C_0\sqrt{d_N\log N}$. For $R>0$, define the scalar sieve
\[
\Theta_N(R)=\left\{\max_i|\alpha_i|\le R,\ \max_j|\beta_j|\le R,\ 0\le\lambda\le R,\ \|\tau_i\|,\|\upsilon_j\|\le R_0\right\}.
\]

\begin{lemma}[Sieve entropy]
	\label{supp:entropy}
	For sufficiently small $t>0$ and $R\ge1$,
	\[
	\log\cover\{t,\{P_\theta:\theta\in\Theta_N(R)\},h_N\}
	\le C d_N\log(CR/t).
	\]
\end{lemma}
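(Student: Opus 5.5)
The plan is to build an explicit $t$-net in parameter space and push it to the probability matrices using Lemmas~\ref{supp:link} and~\ref{supp:predictor}. By the global Hellinger bound in Lemma~\ref{supp:link},
\[
h_N^2(P_\theta,P_{\theta'})\le \frac{C_H}{N}\sum_{i,j}\{\eta_{ij}(\theta)-\eta_{ij}(\theta')\}^2\le C_H\max_{i,j}|\eta_{ij}(\theta)-\eta_{ij}(\theta')|^2 .
\]
It therefore suffices to control the sup-norm of predictor differences over $\Theta_N(R)$. No link-specific boundedness of the predictors is needed, because $C_H$ is universal.

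Next, for $\theta,\theta'\in\Theta_N(R)$, Lemma~\ref{supp:predictor} gives
\[
\max_{i,j}|\eta_{ij}(\theta)-\eta_{ij}(\theta')|\le \|\alpha-\alpha'\|_\infty+\|\beta-\beta'\|_\infty+R_0|\lambda-\lambda'|+2C_{R_0}R\max\Bigl(\max_i\|\tau_i-\tau_i'\|,\max_j\|\upsilon_j-\upsilon_j'\|\Bigr),
\]
using $\max(\lambda,\lambda')\le R$. Fix $\delta=t/(c\sqrt{C_H}\,R)$ with $c$ depending only on $R_0$; for $R\ge1$ this $\delta$ is small enough for every coordinate block. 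The net will be a product of the following grids:
\begin{itemize}[leftmargin=*]
\item a $\delta$-grid in $[-R,R]$ for each $\alpha_i$ and each $\beta_j$, each of size at most $3R/\delta$;
\item a $\delta$-grid in $[0,R]$ for $\lambda$;
\item a Euclidean $\delta/R$-net of $B_{\mathrm E}(R_0)\subset\R^p$ for each $\tau_i$ and each $\upsilon_j$, each of size at most $(3R_0R/\delta)^p$.
\end{itemize}
Every $\theta\in\Theta_N(R)$ then has a net point $\theta'$ with $h_N(P_\theta,P_{\theta'})\le t$, up to the factor $c$ absorbed into $\delta$. Nets built from arbitrary (not necessarily sieve-interior) centres only change constants by the usual factor of two.

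Counting the grid points gives
\[
\log\cover\le (n+m)\log(3R/\delta)+\log(3R/\delta)+(n+m)p\log(3R_0R/\delta).
\]
Substituting $\delta\asymp t/R$ turns each logarithm into at most $C\log(CR^2/t)\le 2C\log(CR/t)$, valid once $t$ is small and $R\ge1$. The number of scalar factors is exactly $(n+m)(p+1)+1=d_N$, which yields the bound $Cd_N\log(CR/t)$.

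The only delicate point is that the geometric Lipschitz constant scales with $\lambda\le R$. This forces the tangent grid to be finer by a factor of $R$ and produces the $R^2$ inside the logarithm. I would handle it by absorbing the $R^2$ into the constant via $\log(x^2)=2\log x$, as above.

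Everything else is routine bookkeeping. The hyperbolic geometry enters only through the constant $C_{R_0}$, so the dimension count $d_N$ and the sup-to-average Hellinger step are what drive the bound.
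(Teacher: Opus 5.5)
Your proposal is correct and follows essentially the same route as the paper. You build a product grid in parameter space and push it through the predictor Lipschitz bound (using $\max(\lambda,\lambda')\le R$ on the sieve) and the global Hellinger bound from the link lemma. The paper, however, uses mesh $ct$ for the scalar coordinates and $ct/R$ for the tangent coordinates, which gives $\log(CR/t)$ directly. Your extra factor of $R$ in $\delta$ is unnecessary, and it is what produces the $R^2$ that you then absorb via $\log(CR^2/t)\le 2\log(CR/t)$.
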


\begin{proof}
	Use mesh $ct$ for every scalar coordinate and mesh $ct/R$ for each tangent coordinate. The side condition $R\ge1$ holds in every later application because both $R_N$ and $R_N(t)$ diverge. The scalar intervals require at most $CR/t$ points each; the tangent ball requires at most $(CR/t)^p$ points per factor. The logarithm of the product-net size is therefore bounded by $Cd_N\log(CR/t)$. On the sieve, Lemma~\ref{supp:predictor} has $\max(\lambda,\lambda')\le R$, so the selected meshes give a predictor error of order $t$. Lemma~\ref{supp:link} transfers the predictor net to an $h_N$-net with a universal constant.
\end{proof}

\begin{lemma}[Prior thickness with second moments and R\'enyi divergence]
	\label{supp:prior}
	For every $r_N\downarrow0$, there is a product neighborhood $\mathcal U_N$ of the truth such that
	\[
	\sup_{\theta\in\mathcal U_N}
	\frac1N\sum_{i,j}
	\left[
	\KL_{ij}(\theta_0,\theta)+V_{2,0;ij}(\theta_0,\theta)
	+\Renyi(P_{0,ij}\|P_{\theta,ij})
	\right]
	\le Cr_N^2
	\]
	and
	\[
	\Pi(\mathcal U_N)\ge\exp\{-Cd_N\log(1/r_N)\}.
	\]
	At $\lambda_0=0$, the $\lambda$-neighborhood is the one-sided interval $[0,r_N]$.
\end{lemma}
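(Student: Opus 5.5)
The plan is to take $\mathcal U_N$ to be a product of small coordinatewise neighborhoods around $\theta_0$ and to combine two earlier results: the predictor Lipschitz bound (Lemma~\ref{supp:predictor}) gives a uniform bound on the predictor error, and the local divergence bounds (Lemma~\ref{supp:link}) convert that bound into control of the three divergences. Set $c=c(R_0,\lambda_0)$ small and let $\lambda_{\max}=\sup_N\lambda_0+1$, which is finite because Assumption~\ref{ass:theory} bounds the truth. Define
\[
\mathcal U_N=\Bigl\{\theta:\ |\alpha_i-\alpha_{0i}|\le cr_N,\ |\beta_j-\beta_{0j}|\le cr_N,\ \lambda\in I_N,\ \|\tau_i-\tau_{0i}\|\le cr_N,\ \|\upsilon_j-\upsilon_{0j}\|\le cr_N\Bigr\}.
\]
Here $I_N=[\lambda_0-cr_N,\lambda_0+cr_N]$ when $\lambda_0>0$ and $I_N=[0,cr_N]$ when $\lambda_0=0$. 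The true tangent coordinates lie strictly inside $B_{\mathrm E}(R_0)$, and $r_N\to0$, so for large $N$ these tangent balls are contained in $B_{\mathrm E}(R_0)$. In the case $\lambda_0>0$, the interval $I_N$ is eventually contained in $(0,\infty)$.

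\textbf{Step 1 (divergence bound).} For $\theta\in\mathcal U_N$, Lemma~\ref{supp:predictor} gives
\[
|\eta_{ij}(\theta)-\eta_{ij}(\theta_0)|\le cr_N\bigl(2+R_0+2C_{R_0}\lambda_{\max}\bigr)\le r_N,
\]
uniformly in $(i,j)$, once $c$ is small. The true predictors satisfy $|\eta_{0,ij}|\le M_0$, since the main effects and $\lambda_0$ are bounded and $0\le G\le R_0$ by Lemma~\ref{supp:geometrylip}. Because $r_N\le1$ eventually, the local part of Lemma~\ref{supp:link} applies. It bounds $\KL+V_{2,0}+\Renyi$ for each dyad by $C(M_0)r_N^2$. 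Averaging over the $N$ dyads gives the first display with $C=C(M_0)$.

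\textbf{Step 2 (prior mass).} The prior factorizes over $\alpha_i$, $\beta_j$, $\lambda$, $\tau_i$ and $\upsilon_j$, so $\Pi(\mathcal U_N)$ is a product of the individual masses.
\begin{itemize}
\item For each scalar main effect, the density is continuous and positive and the truth is bounded. The density is therefore bounded below by some $c_1>0$ on the relevant compact set, so each factor has mass at least $2c_1cr_N$.
\item The same argument applies to $\lambda$. The truncated normal density is positive on $[0,\infty)$ and bounded below near any bounded $\lambda_0\ge0$. In the boundary case $\lambda_0=0$, the one-sided interval $[0,cr_N]$ still has mass at least $c_1cr_N$.
\item For each tangent factor, the density is bounded below by a positive constant on $B_{\mathrm E}(R_0)$, and the Euclidean ball of radius $cr_N$ lies inside it. Each factor therefore has mass at least $c_2(cr_N)^p$.
\end{itemize}
There are $n+m+1$ scalar factors and $n+m$ tangent factors, so the total exponent of $r_N$ is $(n+m)(p+1)+1=d_N$. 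Multiplying gives
\[
\Pi(\mathcal U_N)\ge \exp\{-(n+m+1)\log(1/(c_1c\,r_N))-(n+m)\log(1/(c_2c^p r_N^p))\}.
\]
Since $\log(1/r_N)\to\infty$, every constant term is dominated by $\log(1/r_N)$. The bound is therefore at least $\exp\{-Cd_N\log(1/r_N)\}$.

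\textbf{Main obstacle.} The delicate point is making every constant uniform in $N$. The density lower bounds $c_1$ and $c_2$ must not depend on the growing index sets. This is exactly why Assumption~\ref{ass:theory} requires the truth to be uniformly bounded and the tangent densities to be bounded below uniformly on the ball. Two further points need care: the interior margin of the true tangent coordinates must be uniform in $i,j$, and at $\lambda_0=0$ the one-sided neighborhood must respect the support constraint $\lambda\ge0$ used in Lemma~\ref{supp:predictor}.
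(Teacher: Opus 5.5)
Your proposal is correct and follows the paper's proof essentially step for step. Both take a coordinatewise product neighborhood and use the predictor Lipschitz bound to get a uniform $O(r_N)$ predictor error. The local part of the link lemma then gives the three dyadwise divergences, and per-coordinate prior masses of order $r_N$ (scalars) and $r_N^p$ (tangent balls) multiply to the exponent $d_N\log(1/r_N)$.

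Your extra bookkeeping makes explicit what the paper leaves implicit: the constant $c$, the uniform density lower bounds, and the uniform interior margin. One small refinement: since you allow $\lambda_0$ to depend on $N$ (your $\sup_N\lambda_0$), use $[\max(0,\lambda_0-cr_N),\,\lambda_0+cr_N]$ instead of asserting $I_N\subset(0,\infty)$ eventually. This interval still has prior mass of order $r_N$.
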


\begin{proof}
	Take coordinatewise scalar neighborhoods of radius $r_N$ and tangent-coordinate balls of radius $r_N$. The true coefficient and positions are bounded, so Lemma~\ref{supp:predictor} gives $\sup_{ij}|\eta_{ij}-\eta_{0,ij}|\le Cr_N$. The true predictors are uniformly bounded, and the neighborhood remains in a fixed larger compact interval. Lemma~\ref{supp:link} gives all three dyadwise bounds. Prior positivity gives mass at least $cr_N$ for each scalar coordinate, including the one-sided interval at zero, and at least $cr_N^p$ for each tangent factor. Multiplication gives the mass bound.
\end{proof}

\begin{lemma}[Sieve complement]
	\label{supp:sieve}
	For $C_0$ sufficiently large,
	\[
	\Pi\{\Theta_N(R_N)^c\}
	\le\exp\{-C_s d_N\log N\},
	\]
	where $C_s$ can be made arbitrarily large by increasing $C_0$.
\end{lemma}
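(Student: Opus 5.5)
The complement $\Theta_N(R_N)^c$ is covered by the union of the coordinate events
\[
\{|\alpha_i|>R_N\}\ (i\le n),\qquad \{|\beta_j|>R_N\}\ (j\le m),\qquad \{\lambda>R_N\}.
\]
The tangent constraints $\|\tau_i\|,\|\upsilon_j\|\le R_0$ hold with prior probability one, because the tangent priors are truncated to $B_{\mathrm E}(R_0)$. They therefore contribute nothing to the complement. By a union bound,
\[
\Pi\{\Theta_N(R_N)^c\}\le \sum_{i=1}^n\Pi(|\alpha_i|>R_N)+\sum_{j=1}^m\Pi(|\beta_j|>R_N)+\Pi(\lambda>R_N).
\]
Each scalar prior has sub-Gaussian tails by Assumption~\ref{ass:theory}. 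So there are constants $c_1,c_2>0$, not depending on $N$, with $\Pi(|\xi|>R)\le c_1\exp(-c_2R^2)$ for every scalar coordinate $\xi$. This covers the Gaussian main effects and the half-normal $\lambda$. It also covers any other admissible scalar prior, since the assumption bounds their tails uniformly.

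Substituting $R_N^2=C_0^2d_N\log N$ gives
\[
\Pi\{\Theta_N(R_N)^c\}\le c_1(n+m+1)\exp\{-c_2C_0^2 d_N\log N\}.
\]
It remains to absorb the prefactor. We have $n+m+1\le d_N$ because $p\ge0$. Also $\log d_N\le \log(2N+1)\le 2\log N$ for $N\ge2$, since $d_N\log N=o(N)$ forces $d_N\le N$ eventually. Moreover $d_N\ge1$. Hence $\log\{c_1(n+m+1)\}\le C'\log N\le C'd_N\log N$ for a constant $C'$ independent of $C_0$. This yields the bound with
\[
C_s=c_2C_0^2-C',
\]
which can be made arbitrarily large by increasing $C_0$.

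The only delicate point is ensuring that the sub-Gaussian constants are uniform over coordinates and over $N$. In the dense setting, the priors have fixed variances $\sigma_\alpha^2,\sigma_\beta^2,\sigma_\lambda^2$, so uniformity is immediate. If the variances were allowed to vary with $N$, we would need a uniform upper bound on them, and we assume this as part of the statement's setting. Everything else is routine. In the contraction proof, the lemma will be used with $C_s$ larger than the prior-thickness exponent from Lemma~\ref{supp:prior} plus the test exponent. That comparison fixes the choice of $C_0$.
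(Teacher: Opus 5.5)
Your proposal is correct and follows the same route as the paper: a union bound over the $n+m+1\le d_N$ scalar coordinates using sub-Gaussian tails at $R_N^2=C_0^2d_N\log N$, with the tangent constraints holding with prior probability one, and then absorption of the polynomial prefactor, which you make explicit through $C_s=c_2C_0^2-C'$ where the paper leaves it implicit. One trivial slip: $\log(2N+1)\le 2\log N$ requires $N\ge3$ rather than $N\ge2$, and $n+m+1\le 2N+1$ follows directly from $n,m\le N$ without invoking $d_N\log N=o(N)$.
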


\begin{proof}
	Each Gaussian or half-normal scalar tail is bounded by $C\exp(-cR_N^2)$. There are at most $n+m+1\le d_N$ scalar coordinates. Hence
	\[
	\Pi\{\Theta_N(R_N)^c\}
	\le Cd_N\exp\{-cC_0^2d_N\log N\}.
	\]
	The polynomial prefactor is absorbed into the exponential for large $N$. The tangent priors are supported on $B_{\mathrm E}(R_0)$.
\end{proof}

\begin{lemma}[Tests for Bernoulli product measures]
	\label{supp:tests}
	Let $P_0^{(N)}$ and $P_1^{(N)}$ be products of independent Bernoulli laws. For every $r>0$ and every $P_1^{(N)}$ with $h_N(P_1,P_0)>r$, there exists a test with type-I and uniform type-II errors over the $h_N$-ball of radius $r/2$ bounded by $\exp(-cNr^2)$ for a universal $c>0$.
\end{lemma}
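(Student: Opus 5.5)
We use the classical Le Cam–Birgé construction: a likelihood-ratio-type test built coordinatewise from the Bernoulli affinities, with error bounds driven by the Hellinger affinity of product measures. For a single alternative $P_1$ with $h_N(P_1,P_0)>r$, consider
\[
\phi=\mathbf 1\Bigl\{\prod_{i,j}\sqrt{f_{1,ij}(Y_{ij})/f_{0,ij}(Y_{ij})}>1\Bigr\}.
\]

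\textbf{Step 1: type-I error.} By Markov's inequality,
\[
E_{P_0}\phi\le \prod_{i,j}\rho(P_{0,ij},P_{1,ij}),
\]
where $\rho(\pi,\pi')=\sqrt{\pi\pi'}+\sqrt{(1-\pi)(1-\pi')}=1-\tfrac12\hBer^2(\pi,\pi')$. Using $1-x\le e^{-x}$, this gives
\[
E_{P_0}\phi\le\exp\Bigl\{-\tfrac12\sum_{i,j}\hBer^2(P_{0,ij},P_{1,ij})\Bigr\}=\exp\{-Nh_N^2(P_0,P_1)/2\}\le\exp(-Nr^2/2).
\]

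\textbf{Step 2: uniform type-II error.} The symmetric bound $E_{P_1}(1-\phi)\le\exp\{-Nh_N^2/2\}$ follows in the same way. For $Q$ in the ball $h_N(Q,P_1)\le r/2$, the Markov step gives
\[
E_Q(1-\phi)\le\prod_{i,j}E_{Q_{ij}}\sqrt{f_{0,ij}/f_{1,ij}},
\]
which is not an affinity. The standard fix (Le Cam; Ghosal–van der Vaart, Lemma D.2 / Proposition D.8) is to test $P_0$ against the convex hull of the ball $\{Q:h_N(Q,P_1)\le r/2\}$ via the minimax theorem. The minimax sum of errors is bounded by the largest product affinity $\prod_{ij}\rho(P_{0,ij},Q_{ij})$ over that hull. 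Since Bernoulli product laws here are indexed coordinatewise, the hull consists of products of mixtures of Bernoullis, and these are again Bernoulli. So the minimax test has both errors at most
\[
\sup_Q\exp\{-Nh_N^2(P_0,Q)/2\}.
\]

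\textbf{Step 3: separation.} The average Hellinger $h_N$ is the $\ell_2$ norm of the vector of square-root coordinates divided by $\sqrt N$, hence a genuine metric. By the triangle inequality, $h_N(P_0,Q)\ge r-r/2=r/2$. Therefore both errors are at most $\exp(-Nr^2/8)$, which gives $c=1/8$.

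\textbf{Main obstacle.} The delicate point is the convexity step: showing that the set $\{Q\text{ product}:h_N(Q,P_1)\le r/2\}$ can be replaced by a convex set without losing product structure. I would handle this in one of two ways. The first is to note that the coordinatewise map $\pi\mapsto(\sqrt\pi,\sqrt{1-\pi})$ makes $h_N$-balls into sets on which the affinity argument can be run directly via the minimax theorem. The second is simply to cite the general result that for independent, non-identically distributed observations with the root-average-square Hellinger distance, tests exist with errors $\exp(-cNr^2)$ (Birgé 1983; Ghosal and van der Vaart 2007, Lemma 2).
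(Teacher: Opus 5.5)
Your Step~1 is correct. The second option in your final paragraph, citing Lemma~2 of Ghosal and van der Vaart (2007) for independent, non-identically distributed observations under the root-average-square Hellinger distance, is exactly the paper's entire proof.

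Your direct argument, however, fails at the step you flagged. The convex hull of $\{Q:h_N(Q,P_1)\le r/2\}$ does \emph{not} consist of product laws. Mixing keeps each marginal Bernoulli, but the joint law acquires dependence across dyads. So $\rho(P_0,\bar Q)$ does not factorize over $(i,j)$, and it is not bounded by the affinity of $P_0$ with the product of $\bar Q$'s marginals. For example, take $P_0=\Bern(\tfrac12)^{\otimes2}$. The mixture $\tfrac12\{\Bern(0.2)^{\otimes2}+\Bern(0)^{\otimes2}\}$ has affinity about $0.806$ with $P_0$, whereas its marginal product $\Bern(0.1)^{\otimes2}$ has affinity $0.8$.

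The ball is convex in the parameter matrix, since $\hBer^2(\cdot,\pi')$ is convex. That is not convexity in the space of joint laws, which is what the minimax theorem uses. In addition, the triangle inequality in Step~3 applies only to product $Q$. So neither the bound $\sup_Q\exp\{-Nh_N^2(P_0,Q)/2\}$ nor $c=1/8$ at radius $r/2$ is established.

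The correct form of your intuition is Le Cam's product lemma. The largest affinity between $P_0$ and an element of the hull of a \emph{rectangular} set $\{\otimes_{ij}Q_{ij}:Q_{ij}\in\mathcal Q_{ij}\}$ is at most $\prod_{ij}\sup_{Q\in\operatorname{conv}\mathcal Q_{ij}}\rho(P_{0,ij},Q)$. An $h_N$-ball, however, constrains only an average of coordinatewise distances and is not rectangular, so this lemma alone does not close the gap.

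A self-contained repair avoids hulls altogether by using bounded summands instead of the likelihood ratio. Set $\psi_{ij}=(\sqrt{f_{1,ij}}-\sqrt{f_{0,ij}})/(\sqrt{f_{1,ij}}+\sqrt{f_{0,ij}})\in[-1,1]$ and $T=\sum_{ij}\psi_{ij}(Y_{ij})$. One checks three facts:
\begin{itemize}
\item $E_{P_0}T\le-\tfrac14Nh_N^2(P_0,P_1)$;
\item $E_QT\ge N\{\tfrac14h_N^2(P_0,P_1)-h_N^2(Q,P_1)-2h_N(Q,P_1)h_N(P_0,P_1)\}$ for product $Q$;
\item the variances are $O\{Nh_N^2(P_0,P_1)\}$ on the ball.
\end{itemize}
Bernstein's inequality then gives $\exp(-cNr^2)$ errors for $\mathbf{1}\{T>0\}$, uniformly over $h_N(Q,P_1)\le\xi r$ with, say, $\xi=1/10$.

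With this repair or with the citation, the type-II ball has radius a fixed fraction of the separation, and that fraction need not be $1/2$. A fixed fraction is all the shell argument in Theorem~6.1 requires. You should not, however, claim the specific pair (radius $r/2$, $c=1/8$) from your Step~3.
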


\begin{proof}
	This is Lemma~2 of \citet{ghosal_2007_ConvergenceRatesPosterior}, applied to the average Hellinger semimetric for independent non-identically distributed observations. The lemma may give different constants for the two error probabilities. The constant $c$ in the statement is their minimum.
\end{proof}

\begin{mainthmcontraction}[Dense-array posterior contraction]
	For every $a_N\to\infty$,
	\[
	\Pi\{h_N(P_\theta,P_0)>a_N\eps_N\mid Y\}\to0
	\]
	in $P_0$-probability.
\end{mainthmcontraction}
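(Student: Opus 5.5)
The argument applies the general posterior-contraction theorem for independent, non-identically distributed observations (Ghosal and van der Vaart, 2007, Theorem 4). It checks that theorem's three conditions with the lemmas just established, using the average Hellinger semimetric $h_N$ and rate $\eps_N^2=K_0 d_N\log N/N$.

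\textbf{Prior mass.} Apply Lemma~\ref{supp:prior} with $r_N=\eps_N$. On the product neighborhood $\mathcal U_N$, the averaged Kullback--Leibler and second-moment divergences are at most $C\eps_N^2$, which is the Kullback--Leibler neighborhood condition. The prior mass satisfies $\Pi(\mathcal U_N)\ge\exp\{-Cd_N\log(1/\eps_N)\}$. Since $d_N\log N=o(N)$ and $N\to\infty$, we have $\log(1/\eps_N)\le\frac12\log N$ for large $N$. Hence $\Pi(\mathcal U_N)\ge\exp\{-C' d_N\log N\}=\exp\{-(C'/K_0)N\eps_N^2\}$, so the mass exponent is a small multiple of $N\eps_N^2$ once $K_0$ is large.

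\textbf{Entropy.} Take the sieve $\Theta_N(R_N)$ with $R_N=C_0\sqrt{d_N\log N}$. Lemma~\ref{supp:entropy} with $t=\eps_N$ gives
\[
\log\cover\{\eps_N,\Theta_N(R_N),h_N\}\le Cd_N\log(CR_N/\eps_N).
\]
Both $R_N$ and $1/\eps_N$ are polynomial in $N$ because $d_N\le CN$, so the right side is at most $C''d_N\log N$. This is below $N\eps_N^2$ once $K_0$ dominates $C''$. Combining this entropy bound with Lemma~\ref{supp:tests} in the standard way gives tests whose type-I error tends to zero and whose type-II error is at most $\exp(-cN a_N^2\eps_N^2)$ on $\{h_N>a_N\eps_N\}\cap\Theta_N(R_N)$. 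One can also use the fixed multiple $M\eps_N$ and then note that larger radii only shrink the set.

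\textbf{Sieve complement.} Lemma~\ref{supp:sieve} gives $\Pi\{\Theta_N(R_N)^c\}\le\exp\{-C_sd_N\log N\}$, with $C_s$ made larger than the prior-mass exponent plus $2$ by enlarging $C_0$. This is the remaining-mass condition. The general theorem then yields contraction at rate $M\eps_N$ for some fixed $M$, which implies the claim for every $a_N\to\infty$.

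\textbf{Main obstacle: bookkeeping of constants.} The tangent mesh in Lemma~\ref{supp:entropy} scales with $R_N$ through the factor $\max(\lambda,\lambda')$. This inflates the entropy only logarithmically, but it must be ordered correctly: first fix $C_0$ so that the sieve complement beats the prior-mass exponent, then choose $K_0$ large relative to $C_0$. A second point is that the denominator lower bound, which comes from the prior-mass step, must be obtained in $P_0$-probability. I would use the standard lemma bounding the evidence below by $\Pi(\mathcal U_N)\exp\{-2N\eps_N^2\}$ with probability tending to one, whose proof needs the second-moment bound $V_{2,0}$ from Lemma~\ref{supp:prior}. Finally, the boundary case $\lambda_0=0$ is already covered by the one-sided interval in Lemma~\ref{supp:prior}, so no separate argument is needed.
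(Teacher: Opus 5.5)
Your route is the paper's: verify Theorem~4 of Ghosal and van der Vaart (2007) for $h_N$, using the prior-thickness, entropy, test, and sieve-complement lemmas. However, you resolve the step you call the main obstacle in the wrong order, and as stated the sieve-complement condition then fails. Condition (3.3) requires $\Pi\{\Theta_N(R_N)^c\}/\Pi(B_N^*)=o(e^{-2N\eps_N^2})$, where $N\eps_N^2=K_0d_N\log N$. The sieve lemma gives $\Pi\{\Theta_N(R_N)^c\}\le\exp(-C_sd_N\log N)$ with $C_s\asymp C_0^2$. The prior-mass bound is $\Pi(B_N^*)\ge\exp(-C'd_N\log N)$, with $C'$ not growing in $K_0$ because $\log(1/\eps_N)\le\tfrac12\log N$. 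So you need $C_s>C'+2K_0$, equivalently $C_s/K_0>C_{\rm pr}+2$ in the paper's units. That threshold grows linearly in $K_0$. Fixing $C_0$ first and then taking $K_0$ large relative to $C_0$ therefore eventually violates (3.3). Your phrase ``larger than the prior-mass exponent plus $2$'' also mixes two scales: $C_s$ multiplies $d_N\log N$, while the $2$ multiplies $N\eps_N^2$.

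The fix is the paper's ordering. It rests on an observation your obstacle paragraph misses: $R_N/\eps_N=C_0\sqrt{N/K_0}$. The entropy bound is therefore $Cd_N\{\tfrac12\log N+\log(CC_0/\sqrt{K_0})\}$, and $C_0$ enters only through the lower-order term $d_N\log C_0=o(d_N\log N)$. The entropy constant is thus asymptotically free of $C_0$. Choose $K_0$ first to dominate it, then choose $C_0$ (hence $C_s$) so that $C_s/K_0>C_{\rm pr}+2$; the paper uses $+4$ for slack. Because $C_s$ grows quadratically in $C_0$ while the entropy depends on $C_0$ only through a vanishing logarithmic term, there is no circularity.

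One minor point: to land in the Kullback--Leibler neighborhood of radius $\eps_N$, apply the prior-thickness lemma with $r_N=\eps_N/\sqrt{C}$ rather than $r_N=\eps_N$. This changes the mass exponent only by $O(d_N)$. With these corrections the rest of your argument matches the paper, including the one-sided interval at $\lambda_0=0$ and the evidence lower bound using $V_{2,0}$.
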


\begin{proof}[Proof of Theorem~6.1]
	Following the foundational posterior-rate framework of \citet{ghosal_2000_ConvergenceRatesPosterior}, we verify Theorem 4 of \citet{ghosal_2007_ConvergenceRatesPosterior}, with $k=2$ and semimetric $d_n=h_N$. Lemmas~\ref{supp:entropy} and \ref{supp:tests} give the local testing condition after the shell-separation constant is chosen so that the testing exponent dominates the entropy constant. Lemma~\ref{supp:prior} supplies the Kullback-Leibler and centered $V_{2,0}$ neighborhood required by condition (3.5); write its prior mass lower bound as $\exp(-C_{\rm pr}N\eps_N^2)$. Lemma~\ref{supp:sieve} gives
	\[
	\Pi\{\Theta_N(R_N)^c\}
	\le \exp\{-(C_s/K_0)N\eps_N^2\}.
	\]
	First choose $K_0$ to dominate the entropy constant. Then choose $C_0$, hence $C_s$, so that $C_s/K_0>C_{\rm pr}+4$. This verifies the sieve-to-prior-mass ratio condition (3.3). The annular prior-ratio condition (3.4) follows because its numerator is at most one and the denominator is at least $\exp(-C_{\rm pr}N\eps_N^2)$. The conclusion of Theorem 4 yields contraction for every diverging multiplier $a_N$.
\end{proof}

\begin{mainpropobserved}[Conditional contraction for an observed dyad set]
	Let $\Omega_N$ be deterministic with $M_N=|\Omega_N|$. After deleting rows and columns with no observed dyads, let $n_{\Omega,N}$ and $m_{\Omega,N}$ be the active-node counts and set
	\[
	d_{\Omega,N}=(n_{\Omega,N}+m_{\Omega,N})(p+1)+1.
	\]
	Define
	\[
	h_{\Omega_N}^2(P,Q)=M_N^{-1}\sum_{(i,j)\in\Omega_N}\hBer^2(P_{ij},Q_{ij}).
	\]
	If $d_{\Omega,N}\log M_N=o(M_N)$ and
	\[
	\eps_{\Omega,N}^2=K_0\frac{d_{\Omega,N}\log M_N}{M_N},
	\]
	then the posterior based on $L_{\Omega_N}$ contracts at rate $\eps_{\Omega,N}$ in $h_{\Omega_N}$.
\end{mainpropobserved}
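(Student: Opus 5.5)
The plan is to rerun the proof of Theorem~6.1 with the complete array replaced by the observed design. The key move is to reduce to the subproblem on the active rows and columns. Parameters attached to inactive rows or columns do not enter $L_{\Omega_N}$. Their prior factors are independent of the active coordinates, so they integrate out exactly. The posterior marginal for the active parameters $\theta_\Omega=(\alpha_{i},\beta_j,\lambda,\tau_i,\upsilon_j)$, over active $i,j$, is therefore the posterior of a HELPI model with $n_{\Omega,N}$ rows, $m_{\Omega,N}$ columns and $M_N$ independent Bernoulli observations. Since $h_{\Omega_N}$ depends only on $\theta_\Omega$, it suffices to verify the conditions of Theorem~4 of \citet{ghosal_2007_ConvergenceRatesPosterior} for this independent non-identically distributed model. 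We take $k=2$, sample size $M_N$, semimetric $h_{\Omega_N}$ and rate $\eps_{\Omega,N}$.

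The conditions are verified in the following order.

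\textbf{Prior thickness.} Apply Lemma~\ref{supp:prior} with the average over $(i,j)\in\Omega_N$ in place of the average over all $N$ dyads. The uniform bound $\sup_{ij}|\eta_{ij}-\eta_{0,ij}|\le Cr_N$ from Lemma~\ref{supp:predictor} and the dyadwise bounds of Lemma~\ref{supp:link} do not depend on which dyads are averaged. The neighborhood is a product over the $d_{\Omega,N}$ active scalar and tangent coordinates, so its mass is at least $\exp\{-Cd_{\Omega,N}\log(1/r_N)\}$. With $r_N=\eps_{\Omega,N}$, and since $\log(1/\eps_{\Omega,N})\lesssim\log M_N$, this gives mass at least $\exp(-C_{\rm pr}M_N\eps_{\Omega,N}^2)$.

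\textbf{Entropy.} The proof of Lemma~\ref{supp:entropy} transfers directly, because a sup-norm predictor net is an $h_{\Omega_N}$-net for any averaging set. The bound becomes $Cd_{\Omega,N}\log(CR/t)$.

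\textbf{Sieve complement.} Use Lemma~\ref{supp:sieve} with $R_N=C_0\sqrt{d_{\Omega,N}\log M_N}$. There are at most $d_{\Omega,N}$ scalar tails, which gives the bound $\exp(-C_sd_{\Omega,N}\log M_N)$.

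\textbf{Tests.} Lemma~\ref{supp:tests} applies verbatim to the product of $M_N$ Bernoulli laws under the average Hellinger semimetric over $\Omega_N$.

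The constants are then chosen exactly as in the proof of Theorem~6.1. $K_0$ is taken to dominate the entropy constant, and $C_0$ is taken so that $C_s/K_0>C_{\rm pr}+4$. The hypothesis $d_{\Omega,N}\log M_N=o(M_N)$ guarantees $\eps_{\Omega,N}\to0$ and $M_N\eps_{\Omega,N}^2\to\infty$, which Theorem~4 requires.

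The main obstacle I expect is the logarithmic factor. Lemma~\ref{supp:prior} gives the mass bound as $\log(1/r_N)$, and Lemma~\ref{supp:entropy} gives the entropy as $\log(R_N/\eps_{\Omega,N})$. Both must be controlled by $\log M_N$, yet $d_{\Omega,N}$ may be comparable to $M_N$ up to logarithms. I would check that $\log(1/\eps_{\Omega,N})\le\tfrac12\log M_N$ and that $\log R_N\lesssim\log M_N$. Both hold because $d_{\Omega,N}\le M_N+\dots$ is polynomial in $M_N$: every active node carries at least one dyad, so $n_{\Omega,N}+m_{\Omega,N}\le 2M_N$. A second, minor point is the integrating-out step. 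It needs the prior on $(\alpha,\beta,\tau,\upsilon)$ to factor across nodes, which is the case for the independent priors of Section~3.

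For the final sentence, on Bernoulli sampling, I would condition on the realized $\Omega_N$. Because sampling is response-ignorable, the likelihood given $\Omega_N$ is still $L_{\Omega_N}$. A Chernoff bound gives $M_N\ge\tfrac12\E M_N$ with probability tending to one, and the active counts are at most $n$ and $m$. The dimension condition therefore holds on a high-probability event. On that event the deterministic result applies, and the conclusion is obtained by averaging over $\Omega_N$.
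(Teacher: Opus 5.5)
Your proposal is correct and follows the paper's proof essentially step for step. Like the paper, you discard the inactive rows and columns and rerun the entropy, prior-thickness, sieve-complement and testing lemmas with $(N,d_N,\log N)$ replaced by $(M_N,d_{\Omega,N},\log M_N)$ and scalar sieve radius $C_0\sqrt{d_{\Omega,N}\log M_N}$, then invoke Theorem~4 of Ghosal and van der Vaart with constants chosen as in the proof of Theorem~6.1; your checks that $\log(1/\varepsilon_{\Omega,N})$ and $\log R_N$ are $O(\log M_N)$, the integrating-out step, and the Bernoulli-sampling paragraph (which the supplementary proof does not treat) spell out details the paper leaves implicit.
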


\begin{proof}[Proof of Proposition~6.1]
	Delete all isolated rows and columns first; their parameters do not enter the likelihood or the observed-dyad semimetric. Apply Lemmas~\ref{supp:entropy} through \ref{supp:tests} to the resulting active parameterization. Restrict every product and semimetric sum to $\Omega_N$, and replace $(N,d_N,\log N)$ by $(M_N,d_{\Omega,N},\log M_N)$. In particular, use the scalar sieve radius
	\[
	R_{\Omega,N}=C_0\sqrt{d_{\Omega,N}\log M_N}.
	\]
	The entropy bound is $O\{d_{\Omega,N}\log M_N\}$, the prior-thickness exponent has the same order, and likelihood divergences and test exponents scale with $M_N$. Theorem~4 of \citet{ghosal_2007_ConvergenceRatesPosterior} therefore gives the result. Replacing $\log M_N$ by the larger $\log N$ gives a valid but looser bound.
\end{proof}

\begin{maincorsignal}[Contraction of the identified hierarchy signal]
	Let
	\[
	g_{B_{\rm clip}}(t)=\clip\{H^{-1}(t),[-B_{\rm clip},B_{\rm clip}]\},
	\qquad B_{\rm clip}\ge\|\bM_0\|_\infty,
	\]
	and set $\mathcal H_{B_{\rm clip}}(P)=\projA[g_{B_{\rm clip}}(P)]$. Then for every $a_N\to\infty$,
	\[
	\Pi\left\{N^{-1/2}\|\mathcal H_{B_{\rm clip}}(P_\theta)-\mathcal H_{B_{\rm clip}}(P_0)\|_F
	>a_N C_{B_{\rm clip}}\eps_N\mid Y\right\}\to0.
	\]
\end{maincorsignal}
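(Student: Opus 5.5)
The argument reduces the corollary to Theorem~6.1 through a deterministic inequality: the map $P\mapsto\mathcal H_{B_{\rm clip}}(P)$ is Lipschitz from the average Hellinger semimetric $h_N$ to the normalized Frobenius norm $N^{-1/2}\|\cdot\|_F$. Once this is established, the posterior event in the corollary is contained in $\{h_N(P_\theta,P_0)>a_N\eps_N\}$, and Theorem~6.1 finishes the proof.

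\textbf{Step 1: projection.} Since $\projA$ is an orthogonal projection in Frobenius norm, it is a contraction. Hence
\[
\|\mathcal H_{B_{\rm clip}}(P_\theta)-\mathcal H_{B_{\rm clip}}(P_0)\|_F\le\|g_{B_{\rm clip}}(P_\theta)-g_{B_{\rm clip}}(P_0)\|_F .
\]

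\textbf{Step 2: scalar comparison.} The target is the scalar inequality
\[
|g_{B_{\rm clip}}(t)-g_{B_{\rm clip}}(t_0)|\le C_{B_{\rm clip}}\,\hBer(t,t_0)
\]
for every $t\in[0,1]$ and every $t_0=P_{0,ij}$. Here $H^{-1}(t_0)\in[-B_{\rm clip},B_{\rm clip}]$ because $B_{\rm clip}\ge\|\bM_0\|_\infty$. Clipping to the interval is $1$-Lipschitz and leaves $H^{-1}(t_0)$ fixed. Therefore $|g(t)-g(t_0)|=|\clip(H^{-1}(t))-\clip(H^{-1}(t_0))|$, and this is at most the distance between $H^{-1}(t_0)$ and $H^{-1}(t)$ with the latter truncated to the interval. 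When $t\in[H(-B),H(B)]$, the inverse link is Lipschitz on this compact range of probabilities with constant $1/\min_{|\eta|\le B}H'(\eta)$. When $t$ falls outside that range, replace $t$ by the nearest endpoint $t^\ast$; then $|t^\ast-t_0|\le|t-t_0|$.

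It remains to bound $|t-t_0|$ by the Hellinger distance. Writing $|t-t_0|=|\sqrt t-\sqrt{t_0}|(\sqrt t+\sqrt{t_0})$ gives $|t-t_0|\le2|\sqrt t-\sqrt{t_0}|\le2\hBer(t,t_0)$. Together these give the scalar inequality with $C_{B_{\rm clip}}=2/\min_{|\eta|\le B_{\rm clip}}H'(\eta)$, which is finite for both the logistic and probit links.

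\textbf{Step 3: summation.} Squaring the scalar inequality and summing over the $N$ dyads yields
\[
N^{-1/2}\|\mathcal H_{B_{\rm clip}}(P_\theta)-\mathcal H_{B_{\rm clip}}(P_0)\|_F\le C_{B_{\rm clip}}\,h_N(P_\theta,P_0).
\]
This inequality holds deterministically for every $\theta$. The event in the corollary is therefore contained in $\{h_N(P_\theta,P_0)>a_N\eps_N\}$. The posterior probability of that event tends to zero in $P_0$-probability by Theorem~6.1.

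\textbf{Expected obstacle.} The only delicate point is Step 2 when $P_\theta$ lies near $0$ or $1$, where $H^{-1}$ is unbounded. Clipping resolves this because the truth lies strictly inside the clip interval, so no lower bound on $P_\theta$ is needed. The point to check carefully is that replacing $t$ by the nearest endpoint $t^\ast$ never increases $|t-t_0|$; this follows from monotonicity, since $t_0$ lies in the interval $[H(-B_{\rm clip}),H(B_{\rm clip})]$.
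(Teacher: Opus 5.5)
Your proposal is correct and follows the paper's proof. Both use four ingredients: nonexpansiveness of $\projA$, Lipschitz continuity of the clipped inverse link, a bound of $|t-t_0|$ by $\hBer(t,t_0)$, and summation to $N^{-1/2}\|\mathcal H_{B_{\rm clip}}(P_\theta)-\mathcal H_{B_{\rm clip}}(P_0)\|_F\le C_{B_{\rm clip}}h_N(P_\theta,P_0)$, followed by Theorem~6.1. The differences are cosmetic and only change the unspecified constant: you use the looser $|t-t_0|\le 2\hBer(t,t_0)$ where the paper uses $|\pi-\pi'|\le\hBer(\pi,\pi')$, and you check Lipschitzness only at truth-valued points rather than globally.
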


\begin{proof}[Proof of Corollary~6.1]
	The clipped inverse link is globally Lipschitz with constant $C_{B_{\rm clip}}$. Orthogonal projection is nonexpansive, and $|\pi-\pi'|\le\hBer(\pi,\pi')$. Hence
	\[
	N^{-1/2}\|\mathcal H_{B_{\rm clip}}(P)-\mathcal H_{B_{\rm clip}}(Q)\|_F
	\le C_{B_{\rm clip}}h_N(P,Q).
	\]
	Apply Theorem~6.1.
\end{proof}

\section{Exact mean-field variational risk}

For each $N$, let $v_N=\eps_N^4$ and $R_N^{\rm mf}=C_{\rm mf}\sqrt{d_N\log N}$. Let $\calQ_N^{\rm mf}$ be the product family described in the main manuscript. Scalar means are bounded by $R_N^{\rm mf}$, and scalar variances and tangent covariance eigenvalues lie in $[v_N,v_N^{-1}]$. Tangent means lie in $B_{\mathrm E}(R_0)$, and tangent factors are Gaussian laws truncated to that ball. Every $Q\in\calQ_N^{\rm mf}$ is absolutely continuous with respect to the bounded-depth prior. The variational parameter set is compact for fixed $N$.

\begin{lemma}[Existence of the exact variational optimum]
	\label{supp:vbexist}
	The map $Q\mapsto\KL\{Q\|\Pi(\cdot\mid Y)\}$ is finite and lower semicontinuous on the compact variational parameter set. Consequently a global minimizer exists.
\end{lemma}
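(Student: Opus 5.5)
I will use the decomposition
\[
\KL\{Q\,\|\,\Pi(\cdot\mid Y)\}=\KL(Q\,\|\,\Pi)-\int \log L_N(\theta)\,Q(\dd\theta)+\log\int L_N\,\dd\Pi,
\]
where $L_N$ is the full-array likelihood. The last term is a finite constant for fixed $Y$, because $0<L_N\le1$ and the prior is proper. So finiteness and lower semicontinuity reduce to the first two terms, viewed as functions of the variational parameters $\xi=(\text{means},\text{variances},\text{covariances})$ on the compact set of Section~S4.

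\textbf{Finiteness.} Write $\KL(Q\|\Pi)$ as a sum of factorwise divergences, which is valid because both $Q$ and the bounded-depth prior are products.

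The scalar factors are Gaussian or positive-truncated Gaussian with variances in $[v_N,v_N^{-1}]$ and means bounded by $R_N^{\rm mf}$. The scalar priors have at most quadratic negative log density by Assumption~\ref{ass:theory}. Each scalar divergence is therefore an entropy term plus the $Q$-expectation of a quadratic, and both are finite.

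The tangent factors are Gaussians truncated to $B_{\mathrm E}(R_0)$, with covariance eigenvalues in $[v_N,v_N^{-1}]$. Their normalizing masses are bounded below on the compact parameter set, since a Gaussian with bounded mean and nondegenerate covariance gives positive mass to the ball. The tangent prior density is bounded above and below on the ball, so each tangent divergence is finite.

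For the likelihood term, $-\log L_N=\sum_{ij}-\log f_{\eta_{ij}}(Y_{ij})$. Lemma~\ref{supp:geometrylip} gives $0\le G_o\le R_0$, so $|\eta_{ij}|\le|\alpha_i|+|\beta_j|+R_0\lambda$. For both links, $-\log H(\eta)$ and $-\log\{1-H(\eta)\}$ grow at most quadratically: linearly for the logistic link, and quadratically for the probit link by Mills' ratio. Since $Q$ has finite second moments, the expectation is finite.

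\textbf{Lower semicontinuity, in fact continuity, in $\xi$.} I will show that each term depends continuously on $\xi$.

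The Gaussian entropy and the truncation normalizer $\int_{B_{\mathrm E}(R_0)}\phi_{\mu,\Sigma}$ are continuous in $(\mu,\Sigma)$ when $\Sigma$ is nondegenerate, and the normalizer stays bounded away from zero.

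The cross terms $\int q_\xi\log\pi$ and $\int q_\xi\log L_N$ have the form $\int q_\xi(\theta)\,g(\theta)\,\dd\theta$, where $g$ is continuous with at most quadratic growth. On the compact set, the densities $q_\xi$ are uniformly dominated by a single integrable envelope: a Gaussian with variance $v_N^{-1}$, inflated by a constant, times $(1+\|\theta\|^2)$. For the tangent coordinates the domain is the compact ball. Dominated convergence therefore gives continuity as $\xi_k\to\xi$.

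A continuous function on a compact set attains its minimum by Weierstrass, which yields the global minimizer.

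\textbf{Main obstacle.} The delicate step is the uniform domination for the truncated and positive-truncated factors. Their normalizers and densities change with $\xi$, so I need an explicit envelope of the form $C\exp\{-c\|\theta\|^2\}$, valid uniformly over the compact set, for the $\lambda$ factor and for the tangent factors. Once the variances are pinned to $[v_N,v_N^{-1}]$ and the means are bounded, that envelope follows directly.
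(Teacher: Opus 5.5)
Your proof is correct, but it obtains lower semicontinuity by a different route from the paper. The finiteness argument is essentially the paper's: quadratic prior tails, a tangent prior bounded below on the ball, linear or quadratic Bernoulli loss with $0\le G_o\le R_0$, and finite factor entropies.

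The paper never splits the objective. It observes that $\xi\mapsto Q_\xi$ is weakly continuous, because Gaussian and ball-truncated Gaussian laws vary continuously with their means and nondegenerate covariances. It also uses that $Q\mapsto\KL\{Q\|\Pi(\cdot\mid Y)\}$ is jointly weakly lower semicontinuous with the posterior held fixed, so lower semicontinuity follows by composition. You instead write the objective as $\KL(Q\|\Pi)-\E_Q\log L_N+\log\int L_N\,\dd\Pi$ and prove genuine continuity in $\xi$ by dominated convergence with a uniform Gaussian envelope.

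The paper's route is shorter and needs no domination bookkeeping, since lower semicontinuity is automatic once the family is weakly continuous. Yours is more elementary and gives the stronger conclusion of continuity. It also shows exactly where the bounds $|\mu|\le R_N^{\rm mf}$ and variances in $[v_N,v_N^{-1}]$ enter. The cost is that every piece must be shown finite first, so the decomposition involves no $\infty-\infty$.

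One step should be tightened. For a truncated factor the entropy is $\log Z_\xi+\tfrac12\log\det(2\pi\Sigma)+\tfrac12\E_{q_\xi}\{(x-\mu)^\top\Sigma^{-1}(x-\mu)\}$. Continuity of the Gaussian entropy and of the normalizer therefore does not by itself cover it. The last expectation needs the same dominated-convergence step, which is immediate on the compact ball and for the positive-truncated $\lambda$ factor with your envelope.

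Similarly, dominating $\E_{q_\xi}\log\pi$ for the scalar priors needs control of $\log^+\pi$ as well as $-\log\pi$. This is trivial for the Gaussian and half-normal priors actually used. In general it follows from $\log^+\pi\le\pi$ together with the boundedness of your envelope.
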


\begin{proof}
	The parameter-to-distribution map is weakly continuous. Gaussian laws vary weakly continuously with their means and nondegenerate covariance matrices. Truncated Gaussian laws on the fixed ball have densities and normalizing constants that depend continuously on the parameters. Relative entropy is jointly lower semicontinuous under weak convergence. The posterior measure is fixed, so composition gives lower semicontinuity of the objective.
	
	The objective is finite on the family. The scalar factors have finite second moments, the scalar prior negative log densities grow at most quadratically, and the tangent prior density is bounded below on the compact ball. The Bernoulli negative log likelihood grows at most linearly in the predictor for logit and at most quadratically for probit, while $G_o$ is bounded by $R_0$. Finally, the entropy of every factor is finite because covariance eigenvalues are bounded away from zero and infinity for fixed $N$. Compactness and lower semicontinuity yield a minimizer.
\end{proof}

\begin{lemma}[Mean-field comparison distribution]
	\label{supp:comparison}
	There exists $Q_{0,N}\in\calQ_N^{\rm mf}$ such that
	\[
	\KL(Q_{0,N}\|\Pi_0)\le Cd_N\log(1/\eps_N)
	\]
	and
	\[
	\E_{Q_{0,N}}\KL(P_0^{(N)}\|P_\theta^{(N)})
	\le CN\eps_N^2.
	\]
\end{lemma}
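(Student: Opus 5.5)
The comparison distribution will be built explicitly as a product law concentrated near the truth at scale $\eps_N^2$. Each scalar coordinate gets a Gaussian factor: $\alpha_i$ is centered at $\alpha_{0,i}$, $\beta_j$ at $\beta_{0,j}$, and $\lambda$ at $\lambda_0$ with variance $s_N^2=\eps_N^2$. When $\lambda_0=0$, use instead a Gaussian truncated to $[0,\infty)$ centered at a small positive point, or centered at zero if the family permits. Each tangent coordinate gets a Gaussian truncated to $B_{\mathrm E}(R_0)$, centered at $\tau_{0,i}$ or $\upsilon_{0,j}$, with covariance $s_N^2 I_p$. Since $s_N^2=\eps_N^2\in[v_N,v_N^{-1}]$ with $v_N=\eps_N^4$, and the true means are bounded and interior (Assumption~\ref{ass:theory}), this $Q_{0,N}$ lies in $\calQ_N^{\rm mf}$ once $N$ is large.

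\emph{Step 1: the prior KL bound.} The prior and $Q_{0,N}$ are both products, so $\KL(Q_{0,N}\|\Pi_0)$ is a sum over the $n+m+1$ scalar factors and the $n+m$ tangent factors. For a scalar factor,
\[
\KL=-\mathrm{entropy}(Q)-\E_Q\log\pi.
\]
The entropy term is $-\tfrac12\log(2\pi e s_N^2)$. The prior negative log density is at most quadratic and is evaluated under bounded means and vanishing variance, so its expectation is $O(1)$. Each scalar factor therefore contributes $\tfrac12\log(1/s_N^2)+O(1)=O(\log(1/\eps_N))$.

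For a tangent factor, the truth lies strictly inside the ball. Hence the truncation keeps mass $1-o(1)$, and the entropy is $-\tfrac p2\log(s_N^2)+O(1)$. The prior density is bounded below on $B_{\mathrm E}(R_0)$, so each tangent factor contributes $O(p\log(1/\eps_N))$. Summing gives $Cd_N\log(1/\eps_N)$.

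\emph{Step 2: the expected likelihood KL.} By independence,
\[
\KL(P_0^{(N)}\|P_\theta^{(N)})=\sum_{i,j}\KL_{\eta_{0,ij},\eta_{ij}(\theta)}.
\]
Since $|\eta_{0,ij}|\le M_0$, the global part of Lemma~\ref{supp:link} bounds each term by $C(M_0)(\eta_{ij}-\eta_{0,ij})^2$ for every $\eta$, with no localization needed. Lemma~\ref{supp:predictor} then gives
\[
(\eta_{ij}-\eta_{0,ij})^2\le C\Bigl\{(\alpha_i-\alpha_{0,i})^2+(\beta_j-\beta_{0,j})^2+R_0^2(\lambda-\lambda_0)^2+C_{R_0}^2\max(\lambda,\lambda_0)^2\bigl(\|\tau_i-\tau_{0,i}\|^2+\|\upsilon_j-\upsilon_{0,j}\|^2\bigr)\Bigr\}.
\]

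\emph{Step 3: taking expectations (the main obstacle).} The factor $\max(\lambda,\lambda_0)^2$ multiplies the tangent errors, and $\lambda$ is random and unbounded under $Q$. Because $Q_{0,N}$ is a product measure, $\lambda$ is independent of the tangent errors, so the cross term factors as
\[
\E_Q\max(\lambda,\lambda_0)^2\cdot\E_Q\|\tau_i-\tau_{0,i}\|^2.
\]
The first factor is at most $2\lambda_0^2+2s_N^2+O(1)=O(1)$. The second is at most $ps_N^2$ plus a small centering bias. That bias needs care: truncating to the ball shifts the mean only by an amount exponentially small in $1/s_N$, since the distance from the truth to the boundary is bounded below. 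Altogether every dyad term is $O(s_N^2)=O(\eps_N^2)$, and summing over $N$ dyads gives $CN\eps_N^2$.

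The case $\lambda_0=0$ works the same way: a half-normal with scale $s_N$ has second moment $O(s_N^2)$, and its entropy is of the same order as in Step 1.
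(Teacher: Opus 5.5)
Your proposal is correct and follows the paper's proof essentially step for step. You use product factors centered at the truth with variance $\eps_N^2$, and you get $Cd_N\log(1/\eps_N)$ by counting entropy plus prior log density factor by factor. For the sampling KL you combine the global quadratic bound of Lemma~\ref{supp:link} with Lemma~\ref{supp:predictor}, and you use the independence of $\lambda$ from the tangent factors under $Q_{0,N}$ to control the $\max(\lambda,\lambda_0)^2$ cross term.

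One slip needs fixing: the $\lambda$ factor must be the positive-truncated Gaussian $\TN_{[0,\infty)}(\lambda_0,\eps_N^2)$ also when $\lambda_0>0$. That is the only form $\calQ_N^{\rm mf}$ allows. An untruncated Gaussian would also put mass where the half-normal prior has zero density, making $\KL(Q_{0,N}\|\Pi_0)$ infinite. The truncation keeps at least half the mass, so your entropy and second-moment bounds go through unchanged.
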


\begin{proof}
	Use Gaussian scalar factors centered at the truth with variance $\eps_N^2$, a positive-truncated Gaussian for $\lambda$, and multivariate Gaussian tangent factors centered at the true coordinates with covariance $\eps_N^2I_p$, truncated to $B_{\mathrm E}(R_0)$. Because the true tangent coordinates lie a fixed distance inside the ball, the tangent truncation probability tends to one exponentially fast.
	
	The scalar Gaussian-to-prior KL divergence is bounded by $C+\log(1/\eps_N)$ per coordinate. The positive-truncated coefficient factor has the same order, including at $\lambda_0=0$. Each $p$-dimensional tangent factor contributes $p\log(1/\eps_N)+O(1)$ from its entropy, while the prior log density is bounded on the compact ball. Summing gives the first assertion.
	
	For the sampling KL, Assumption~6.1 of the main manuscript implies the uniform true-predictor bound
	\[
	M_0:=2B_0+\lambda_{\max,0}R_0,
	\qquad
	\max_{i,j}|\eta_{0,ij}|\le M_0.
	\]
	Apply Lemma~\ref{supp:predictor} with $\theta'=\theta_0$. Squaring the resulting five-term bound and taking expectations gives
	\begin{align*}
		\E_{Q_{0,N}}(\eta_{ij}-\eta_{0,ij})^2
		&\le C\Bigl[
		\E(\alpha_i-\alpha_{0i})^2+
		\E(\beta_j-\beta_{0j})^2+
		\E(\lambda-\lambda_0)^2\\
		&\qquad+
		\E\{\max(\lambda,\lambda_0)^2\|\tau_i-\tau_{0i}\|^2\}
		+
		\E\{\max(\lambda,\lambda_0)^2\|\upsilon_j-\upsilon_{0j}\|^2\}
		\Bigr].
	\end{align*}
	The coefficient factor is independent of the tangent factors under $Q_{0,N}$. Hence, for example,
	\[
	\E\{\max(\lambda,\lambda_0)^2\|\tau_i-\tau_{0i}\|^2\}
	\le 2\{\E(\lambda^2)+\lambda_0^2\}\E\|\tau_i-\tau_{0i}\|^2
	\le C\eps_N^2.
	\]
	All other terms have the same order, so the predictor second moment is uniformly $O(\eps_N^2)$. Lemma~\ref{supp:link} gives a dyadwise expected sampling KL of the same order, and summing over $N$ dyads proves the second assertion.
\end{proof}

\begin{mainthmvariational}[Predictive risk of the exact mean-field optimum]
	Let $\widehat Q_N$ be the exact reverse-KL minimizer over $\calQ_N^{\rm mf}$. Then
	\[
	\E_{P_0}\int h_N^2(P_\theta,P_0)\,\widehat Q_N(\dd\theta)
	\le C\eps_N^2.
	\]
	The same rate holds for the squared normalized error of $\mathcal H_{B_{\rm clip}}(P_\theta)$.
\end{mainthmvariational}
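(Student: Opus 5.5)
The plan is to use the standard variational risk bound for reverse-KL minimizers in the style of Zhang and Gao (and Alquier--Ridgway), with Lemma~\ref{supp:vbexist} providing existence. The key step is the Donsker--Varadhan change-of-measure inequality. For any measurable $f$ and any $Q\ll\Pi(\cdot\mid Y)$,
\[
\int f\,\dd Q\le \KL\{Q\|\Pi(\cdot\mid Y)\}+\log\int e^{f}\,\dd\Pi(\cdot\mid Y).
\]
We apply it with $Q=\widehat Q_N$ and $f(\theta)=cN h_N^2(P_\theta,P_0)$ for a small universal $c$. Taking $P_0$-expectations splits the risk into a variational term and a posterior exponential-moment term, so that
\[
cN\,\E_{P_0}\int h_N^2\,\dd\widehat Q_N\le \E_{P_0}\KL\{\widehat Q_N\|\Pi(\cdot\mid Y)\}+\E_{P_0}\log\int e^{cNh_N^2}\,\dd\Pi(\cdot\mid Y).
\]
Both terms will be shown to be $O(N\eps_N^2)$.

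\emph{Variational term.} Optimality of $\widehat Q_N$ gives $\KL\{\widehat Q_N\|\Pi(\cdot\mid Y)\}\le \KL\{Q_{0,N}\|\Pi(\cdot\mid Y)\}$, where $Q_{0,N}$ is the comparison distribution from Lemma~\ref{supp:comparison}. Expanding the posterior via Bayes' formula gives
\[
\KL\{Q_{0,N}\|\Pi(\cdot\mid Y)\}=\KL(Q_{0,N}\|\Pi_0)+\E_{Q_{0,N}}\log\frac{p_0^{(N)}}{p_\theta^{(N)}}(Y)+\log\int\frac{p_\theta^{(N)}}{p_0^{(N)}}(Y)\,\dd\Pi_0.
\]
Under $\E_{P_0}$, the middle term becomes $\E_{Q_{0,N}}\KL(P_0^{(N)}\|P_\theta^{(N)})$ by Fubini. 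The last term has nonpositive expectation by Jensen, since $\E_{P_0}\int p_\theta/p_0\,\dd\Pi_0\le1$. Lemma~\ref{supp:comparison} then bounds the total by $Cd_N\log(1/\eps_N)+CN\eps_N^2$. Because $\log(1/\eps_N)\lesssim\log N$, this is $\lesssim N\eps_N^2$.

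\emph{Exponential-moment term.} This is the main obstacle. I would show that
\[
\E_{P_0}\int e^{cNh_N^2}\,\dd\Pi(\cdot\mid Y)\le e^{CN\eps_N^2}
\]
and then apply Jensen to move the logarithm outside the expectation. The argument slices into shells $\{j\eps_N<h_N\le (j+1)\eps_N\}$ and reuses the ingredients of Theorem~\ref{thm:contraction} in exponential rather than in-probability form.
\begin{itemize}
\item Tests: the tests of Lemma~\ref{supp:tests}, combined with the entropy bound of Lemma~\ref{supp:entropy}, give on each shell a type-II error of order $e^{-c'Nj^2\eps_N^2}$.
\item Denominator: the evidence lower bound from Lemma~\ref{supp:prior} must hold outside an event whose probability is exponentially small, not merely with probability tending to one. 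I would obtain this from the R\'enyi-divergence neighborhood in Lemma~\ref{supp:prior} via a Chebyshev--Chernoff bound on $\int p_\theta/p_0\,\dd\Pi_0$ restricted to $\mathcal U_N$. This is exactly why that lemma includes $\Renyi$.
\item Sieve complement: Lemma~\ref{supp:sieve} controls the mass outside $\Theta_N(R_N)$.
\item Bad event: on the event where the test rejects or the evidence bound fails, we use $h_N^2\le 2$, so that $e^{cNh_N^2}\le e^{2cN}$. This is absorbed provided the failure probability is at most $e^{-3cN}$. Choosing $c$ small relative to the test constant achieves this, possibly after using the R\'enyi bound to upgrade the failure probability.
\end{itemize}
Summing $e^{cN(j+1)^2\eps_N^2}e^{-c'Nj^2\eps_N^2}$ over $j$ then converges for $c<c'/2$, which yields the claimed exponential-moment bound. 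Dividing the combined inequality by $cN$ proves the first display. The key point is to choose $c$ smaller than both the test and sieve exponents.

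\emph{Hierarchy-signal bound.} Corollary~\ref{cor:signal} gives the pointwise inequality $N^{-1}\|\mathcal H_{B_{\rm clip}}(P_\theta)-\mathcal H_{B_{\rm clip}}(P_0)\|_F^2\le C_{B_{\rm clip}}^2h_N^2(P_\theta,P_0)$. Integrating this against $\widehat Q_N$ and taking $\E_{P_0}$ transfers the first bound, giving the constant $CC_{B_{\rm clip}}^2$.
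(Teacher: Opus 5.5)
Your overall architecture is the one behind Theorem 2.2 of Zhang and Gao, which the paper invokes as a black box. The variational term is correct and amounts to their condition (C4): optimality of $\widehat Q_N$, the Bayes-formula identity, Jensen on the normalizing integral, and Lemma~\ref{supp:comparison}. The transfer to $\mathcal H_{B_{\rm clip}}$ also matches the paper.

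The gap is in the exponential-moment step. The constant $c$ must stay fixed, since otherwise dividing by $cN$ loses the rate. The integrand $e^{cNh_N^2}$ can then be as large as $e^{2cN}$. So posterior mass where $h_N\approx t$ must be controlled with an exponent of order $Nt^2$ for every $t>\eps_N$, not merely of order $N\eps_N^2$.

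Lemma~\ref{supp:sieve} does not provide this. Its sieve has radius $R_N=C_0\sqrt{d_N\log N}$, and its complement mass is only $\exp\{-(C_s/K_0)N\eps_N^2\}$ with $C_s$ a constant. On that complement your argument controls $h_N^2$ only through $h_N^2\le 2$. Its contribution to $\E_{P_0}\int e^{cNh_N^2}\,\dd\Pi(\cdot\mid Y)$ is therefore bounded only by roughly $\exp\{2cN+O(N\eps_N^2)\}$, which gives nothing because $N\eps_N^2=o(N)$.

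This is why the paper, in verifying (C1) and (C2), uses level-dependent sieves $\Theta_N(R_N(t))$ with $R_N(t)=C_0\sqrt N\,t$. Their complement mass is at most $\exp(-cC_0^2Nt^2)$, and their entropy is still $Cd_N\log(CC_0\sqrt N)\lesssim Nt^2$. You need these sieves, or a single sieve of radius of order $\sqrt N$, with the tests rebuilt on them.

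Your bad-event bullet fails for the same reason. The R\'enyi--Markov argument at the rate level gives $\int(p_\theta/p_0)\,\dd\Pi_0\ge\exp(-CN\eps_N^2)$ only outside an event of probability $\exp(-C'N\eps_N^2)$. A test combining all shells has type-I error of the same order. Neither is at most $e^{-3cN}$ for any fixed $c>0$, and shrinking $c$ does not help. Lowering the threshold far enough to make the failure probability $e^{-3cN}$ costs a factor $e^{3cN}$ on the good event, which ruins the inner shells.

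The repair is to argue shell by shell. On shell $j$, use the shell-specific test, whose type-I error $e^{-c'Nj^2\eps_N^2}$ beats the factor $e^{cN(j+1)^2\eps_N^2}$. For the normalizing integral, either of two devices works:
\begin{enumerate}
\item Use a level-dependent lower bound: threshold $\exp\{-CN\eps_N^2-c_1Nj^2\eps_N^2\}$ with failure probability $\exp(-c_1Nj^2\eps_N^2)$, obtained from Markov's inequality and $\Renyi$.
\item Handle the single rate-level event $E$ before applying Jensen. Since $0\le\log\int e^{cNh_N^2}\,\dd\Pi(\cdot\mid Y)\le 2cN$, the event $E^c$ contributes at most $2cN\,e^{-C'N\eps_N^2}=o(1)$ to $\E_{P_0}\log\int e^{cNh_N^2}\,\dd\Pi(\cdot\mid Y)$. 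Jensen is then needed only on $E$.
\end{enumerate}

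With these repairs your argument becomes a self-contained re-derivation of the Zhang--Gao theorem, whereas the paper simply verifies (C1) through (C4) and cites it.
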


\begin{proof}[Proof of Theorem~6.2]
	We apply Theorem~2.2 of \citet{zhang_2020_ConvergenceRatesVariational}. Their theorem assumes conditions (C1) through (C4), uses a normalized rate $\varepsilon_n$ satisfying $n\varepsilon_n^2\ge1$, and concludes expected loss of order $n\varepsilon_n^2$. We identify their sample size $n$ with our $N$ and their normalized rate $\varepsilon_n$ with our $\eps_N$. With the loss
	\[
	L_N(P_\theta^{(N)},P_0^{(N)})=Nh_N^2(P_\theta,P_0),
	\]
	the unnormalized target scale is therefore $N\eps_N^2$. Their equation~(7) states the two comparison inequalities that together imply (C4).
	
	For every $t>\eps_N$, define a sieve by bounding the scalar coordinates at
	\[
	R_N(t)=C_0\sqrt N\,t
	\]
	and retaining the fixed tangent ball. Lemma~\ref{supp:entropy} gives
	\[
	\log\cover\{t/36,\Theta_N(R_N(t)),h_N\}
	\le Cd_N\log(C\sqrt N)
	\le C'd_N\log N
	\le (C'/K_0)Nt^2.
	\]
	Together with Lemma~\ref{supp:tests}, this verifies Zhang-Gao condition (C1) after the loss-separation constant is chosen and $K_0$ is sufficiently large. Sub-Gaussian tails give
	\[
	\Pi\{\Theta_N(R_N(t))^c\}\le\exp(-cC_0^2Nt^2),
	\]
	which is condition (C2) for sufficiently large $C_0$. Lemma~\ref{supp:prior} supplies a set of mass at least $\exp(-C_2N\eps_N^2)$ on which $\Renyi(P_0^{(N)}\|P_\theta^{(N)})\le C_3N\eps_N^2$, which is condition (C3).
	
	For condition (C4), Zhang and Gao define
	\[
	R(Q)=\frac1N\left\{\KL(Q\|\Pi_0)+\E_Q\KL(P_0^{(N)}\|P_\theta^{(N)})\right\}.
	\]
	Lemma~\ref{supp:comparison} gives $R(Q_{0,N})\le C\eps_N^2$, thereby verifying (C4); equivalently, its two terms are the two comparison inequalities in equation~(7) of \citet{zhang_2020_ConvergenceRatesVariational}. Theorem~2.2 therefore yields
	\[
	\E_{P_0}\int L_N(P_\theta^{(N)},P_0^{(N)})\,\widehat Q_N(\dd\theta)
	\le C N\eps_N^2.
	\]
	Dividing by $N$ proves the predictive risk bound. Finally, Corollary~6.1 gives pointwise
	\[
	N^{-1}\|\mathcal H_{B_{\rm clip}}(P)-\mathcal H_{B_{\rm clip}}(P_0)\|_F^2
	\le C_{B_{\rm clip}}^2h_N^2(P,P_0),
	\]
	and integration gives the identified-hierarchy-signal bound.
\end{proof}

\section{Computational details}
\label{supp:computation}

\subsection{Mean-field factors and predictor moments}

Let $z_i=\Exp_o(\tau_i)$ and $w_j=\Exp_o(\upsilon_j)$. The bounded-depth approximation uses the tangent-coordinate factors
\[
q_i^{(\tau)}(\tau_i)=\TN_p\{\mu_i^{(\tau)},\Sigma_i^{(\tau)};B_{\mathrm E}(R_0)\},
\qquad
q_j^{(\upsilon)}(\upsilon_j)=\TN_p\{\mu_j^{(\upsilon)},\Sigma_j^{(\upsilon)};B_{\mathrm E}(R_0)\}.
\]
The remaining factors are
\[
q_\alpha=\prod_i\Normal(\mu_{\alpha i},s_{\alpha i}^2),
\quad
q_\beta=\prod_j\Normal(\mu_{\beta j},s_{\beta j}^2),
\quad
q_\lambda=\TN_{[0,\infty)}(\mu_\lambda,s_\lambda^2).
\]
Write
\[
\bar G_{ij}=\E_q(G_{ij}),
\quad
m_{G,ij}^{(2)}=\E_q(G_{ij}^2),
\quad
\bar\lambda_q=\E_q(\lambda),
\quad
m_{\lambda,q}^{(2)}=\E_q(\lambda^2).
\]
For each observed dyad, define $\mu_{ij}^{\eta}=\E_q(\eta_{ij})$ and $\nu_{ij}^{\eta}=\E_q(\eta_{ij}^2)$. The quantity $\nu_{ij}^{\eta}$ is the raw second moment. Under the mean-field factorization, these moments are given by Equations~\eqref{supp:predictor-moments}:
\begin{subequations}\label{supp:predictor-moments}
	\begin{align}
		\mu_{ij}^{\eta}
		&=\mu_{\alpha i}+\mu_{\beta j}+\bar\lambda_q\bar G_{ij},\\
		\nu_{ij}^{\eta}
		&=s_{\alpha i}^2+s_{\beta j}^2+(\mu_{\alpha i}+\mu_{\beta j})^2
		+2(\mu_{\alpha i}+\mu_{\beta j})\bar\lambda_q\bar G_{ij}
		+m_{\lambda,q}^{(2)}m_{G,ij}^{(2)}.
	\end{align}
\end{subequations}

\subsection{Local and scalar coordinate updates}

For the logistic link, define $\kappa_{ij}=Y_{ij}-1/2$. P\'olya-Gamma augmentation gives
\begin{equation}\label{supp:pg-update}
	q(\omega_{ij})=\PG(1,\xi_{ij}),
	\qquad
	\xi_{ij}=\sqrt{\nu_{ij}^{\eta}},
	\qquad
	\bar\omega_{ij}=\frac{\tanh(\xi_{ij}/2)}{2\xi_{ij}},
	\qquad
	\widetilde y_{ij}=\frac{\kappa_{ij}}{\bar\omega_{ij}}.
\end{equation}
Equation~\eqref{supp:pg-update} is the P\'olya-Gamma form of the Jaakkola-Jordan variational update \citep{jaakkola_2000_BayesianParameterEstimation,polson_2013_BayesianInferenceLogistic,durante_2019_ConditionallyConjugateMeanField}.

For the probit link, introduce $W_{ij}^{\rm pr}=\eta_{ij}+e_{ij}^{\rm pr}$, where $e_{ij}^{\rm pr}\sim\Normal(0,1)$ and $Y_{ij}=1\{W_{ij}^{\rm pr}>0\}$. The optimal local factor is
\begin{equation}\label{supp:probit-update}
	q(w_{ij}^{\rm pr})=\TN_{\{(2Y_{ij}-1)w>0\}}(\mu_{ij}^{\eta},1),
	\qquad
	\widetilde y_{ij}=\mu_{ij}^{\eta}+(2Y_{ij}-1)
	\frac{\phi_0\{(2Y_{ij}-1)\mu_{ij}^{\eta}\}}
	{\Phi_0\{(2Y_{ij}-1)\mu_{ij}^{\eta}\}}.
\end{equation}
Equation~\eqref{supp:probit-update} uses the standard normal density $\phi_0$ and distribution function $\Phi_0$. For probit, set $\bar\omega_{ij}=1$.

Let $\Omega_i=\{j:(i,j)\in\Omega\}$ and $\Omega^j=\{i:(i,j)\in\Omega\}$. The scalar factors have exact coordinate updates
\begin{subequations}\label{supp:scalar-updates}
	\begin{align}
		s_{\alpha i}^{-2}
		&=\sigma_\alpha^{-2}+\sum_{j\in\Omega_i}\bar\omega_{ij},
		&
		\mu_{\alpha i}
		&=s_{\alpha i}^2\sum_{j\in\Omega_i}\bar\omega_{ij}
		\{\widetilde y_{ij}-\mu_{\beta j}-\bar\lambda_q\bar G_{ij}\},\\
		s_{\beta j}^{-2}
		&=\sigma_\beta^{-2}+\sum_{i\in\Omega^j}\bar\omega_{ij},
		&
		\mu_{\beta j}
		&=s_{\beta j}^2\sum_{i\in\Omega^j}\bar\omega_{ij}
		\{\widetilde y_{ij}-\mu_{\alpha i}-\bar\lambda_q\bar G_{ij}\},\\
		s_\lambda^{-2}
		&=\sigma_\lambda^{-2}+\sum_{(i,j)\in\Omega}\bar\omega_{ij}m_{G,ij}^{(2)},
		&
		\mu_\lambda
		&=s_\lambda^2\sum_{(i,j)\in\Omega}\bar\omega_{ij}\bar G_{ij}
		\{\widetilde y_{ij}-\mu_{\alpha i}-\mu_{\beta j}\}.
	\end{align}
\end{subequations}
After the scalar updates in Equations~\eqref{supp:scalar-updates}, the implementation recomputes $\bar\lambda_q$ and $m_{\lambda,q}^{(2)}$ from the positive-truncated factor rather than its untruncated Gaussian kernel.

\subsection{Geometry update and stopping rules}

The truncated tangent factors are nonconjugate. Let $\vartheta_t$ collect their mean and covariance parameters, and let $\mathcal F$ denote the feasible parameter set. We estimate the gradient of the augmented evidence lower bound by the score-function estimator $\widehat\nabla_\vartheta\mathcal L_t$, with control variates. One projected geometry step per outer sweep is
\begin{equation}\label{supp:tangent-update}
	\vartheta_{t+1}
	=\Pi_{\mathcal F}\!\left\{\vartheta_t+a_t
	\widehat\nabla_\vartheta\mathcal L_t\right\},
	\qquad
	a_t=a_0(t+t_0)^{-\zeta},
	\qquad \zeta\in(1/2,1].
\end{equation}
Here $\Pi_{\mathcal F}$ restricts tangent means to $B_{\mathrm E}(R_0-\delta_R)$ for a small numerical margin. It also restricts covariance eigenvalues to a prespecified interval $[v_{\min},v_{\max}]$. Pathwise gradients may replace score gradients when a differentiable truncated-Gaussian sampler is available.

The common-shift convention and invariant convergence summaries are updated after each sweep. Let $\mathcal L_t^{\rm avg}$ be the running-average or periodically refined objective, and let $\bP_t$ and $\mathcal H_t$ be the current fitted probability matrix and identified hierarchy signal. Then
\begin{subequations}\label{supp:centering-stopping}
	\begin{align}
		\bar\mu_{\alpha,t}
		&=n^{-1}\sum_{i=1}^n\mu_{\alpha i,t},
		&
		\mu_{\alpha i,t}&\leftarrow\mu_{\alpha i,t}-\bar\mu_{\alpha,t},
		&
		\mu_{\beta j,t}&\leftarrow\mu_{\beta j,t}+\bar\mu_{\alpha,t},\label{supp:centering-update}\\
		\mathcal C_t
		&=\left(
		\mathcal L_t^{\rm avg},
		\frac{\|\bP_t-\bP_{t-1}\|_F}{\sqrt{nm}},
		\frac{\|\mathcal H_t-\mathcal H_{t-1}\|_F}{\sqrt{nm}}
		\right).\label{supp:stopping-summary}
	\end{align}
\end{subequations}
The first line leaves every predictor unchanged. The second line monitors the objective together with two invariant fitted quantities.

The scalar updates are monotone for the augmented objective conditional on fixed geometry moments. Monte Carlo noise and the stochastic geometry step prevent exact coordinate-ascent monotonicity for the complete sweep. Stopping rules therefore use a running average or periodic high-accuracy evaluation. The reported solution is the independent restart with the largest refined objective estimate.

\section{Empirical design and complete results}
\label{supp:empirical}

We used the master seed \textsf{20260804} throughout this example. Each example was divided by task index into eight independent shards, and every worker used one numerical thread. All 3,697 scheduled tasks completed without an error record. 

The log and Brier scores are strictly proper scoring rules \citep{gneiting_2007_StrictlyProperScoring}. AUC, RMSE, and ECE denote area under the receiver operating characteristic curve, root mean squared error, and expected calibration error. AME and 2PL denote additive-and-multiplicative effects and two-parameter logistic.

\begin{table}[H]
	\centering
	\footnotesize
	\caption{Combined execution audit for the three experiment profiles.}
	\label{tab:task-audit}
	\begin{tabular}{lrrr}
		\toprule
		Experiment & Design units & Replicates, masks, or chains & Tasks \\
		\midrule
		Balanced hierarchy & 14 cells & 100 & 1,400 \\
		Collapse and weak branch & 7 spreads & 100 & 700 \\
		Complementary generators & 6 generator-size cells & 100 & 600 \\
		Analyst-choice sensitivity & 9 dimension-bound cells & 100 & 900 \\
		Junyi correctness & 4 windows by 8 models & 1 & 32 \\
		Voteview roll calls & 5 masks by 8 models & 1 & 40 \\
		Retailrocket conversion & 1 split by 8 models & 1 & 8 \\
		Junyi attempt process & 1 model & 1 & 1 \\
		Simulation and Junyi MCMC & 2 targets & 4 chains & 8 \\
		Voteview MCMC & 1 target & 8 chains & 8 \\
		\midrule
		Total & & & 3,697 \\
		\bottomrule
	\end{tabular}
	\TableNote{Design units are prespecified simulation cells, empirical splits, or MCMC targets. Tasks count independent model fits or chains; all 3,697 scheduled tasks completed.}
\end{table}
\subsection{Balanced hierarchy simulation}
\label{supp:balanced}

The full factorial contains 162 combinations of branching factor, depth, side length, hierarchy coefficient, and target event rate. A deterministic greedy design selected 14 cells that cover every factor level and every pair of factor levels. Table~\ref{tab:balanced-design} lists the selected cells. Each cell has 100 replicates and uses latent dimension four, depth bound five, and an 80\% training split.

\begin{table}[H]
	\centering
	\small
	\caption{Pairwise-covering cells in the balanced hierarchy experiment.}
	\label{tab:balanced-design}
	\begin{tabular}{rrrr@{\hspace{1.2em}}r@{\hspace{2.4em}}rrrr@{\hspace{1.2em}}r}
		\toprule
		$b$ & $h$ & Size & $\lambda$ & Rate & $b$ & $h$ & Size & $\lambda$ & Rate \\
		\midrule
		3 & 4 & 100 & 0.50 & 0.10 & 4 & 4 & 200 & 1.00 & 0.30 \\
		4 & 3 & 100 & 0.25 & 0.10 & 4 & 4 & 200 & 0.25 & 0.50 \\
		2 & 4 & 100 & 1.00 & 0.30 & 3 & 3 & 400 & 0.50 & 0.10 \\
		2 & 3 & 100 & 0.25 & 0.50 & 3 & 3 & 400 & 0.25 & 0.30 \\
		2 & 3 & 200 & 0.25 & 0.10 & 4 & 4 & 400 & 0.50 & 0.30 \\
		3 & 3 & 200 & 0.50 & 0.10 & 2 & 3 & 400 & 0.50 & 0.50 \\
		3 & 3 & 200 & 1.00 & 0.10 & 3 & 4 & 400 & 1.00 & 0.50 \\
		\bottomrule
	\end{tabular}
	\TableNote{$b$ is branching factor, $h$ is tree depth, Size is $n=m$, $\lambda$ is the hierarchy coefficient, and Rate is the target event probability. Each cell contains 100 replicates with $p=4$, $R_0=5$, and an 80\% training split.}
\end{table}

Table~\ref{tab:balanced-size} reports means after averaging equally over replicates in the cells assigned to each size. The size-specific compositions differ, so the rows do not estimate a controlled sample-size curve. Higher log score and AUC are better, while the remaining metrics are better when smaller.

\begingroup
\small
\setlength{\tabcolsep}{4pt}
\begin{longtable}{@{}r@{\hspace{1.2em}}lrrrrr@{}}
	\caption{Balanced hierarchy results by side length.}\label{tab:balanced-size}\\
	\toprule
	Size & Model & Log score & Brier & AUC & Probability RMSE & Signal RMSE \\
	\midrule
	\endfirsthead
	\toprule
	Size & Model & Log score & Brier & AUC & Probability RMSE & Signal RMSE \\
	\midrule
	\endhead
	100 & HELPI & -0.4708 & 0.1524 & 0.6376 & 0.1088 & 0.4592 \\
	100 & Main effects & -0.4788 & 0.1557 & 0.6198 & 0.1158 & 0.5071 \\
	100 & Euclidean & -0.4576 & 0.1465 & 0.6617 & 0.0920 & 0.3783 \\
	100 & Spherical & -0.4624 & 0.1482 & 0.6534 & 0.1010 & 0.4892 \\
	100 & AME & -0.4645 & 0.1495 & 0.6537 & 0.1020 & 0.4358 \\
	\addlinespace
	200 & HELPI & -0.4392 & 0.1396 & 0.6379 & 0.0849 & 0.4504 \\
	200 & Main effects & -0.4388 & 0.1395 & 0.6386 & 0.0840 & 0.4494 \\
	200 & Euclidean & -0.4352 & 0.1380 & 0.6592 & 0.0780 & 0.4055 \\
	200 & Spherical & -0.4381 & 0.1392 & 0.6526 & 0.0863 & 0.4809 \\
	200 & AME & -0.4387 & 0.1395 & 0.6408 & 0.0842 & 0.4486 \\
	\addlinespace
	400 & HELPI & -0.5487 & 0.1854 & 0.6673 & 0.0798 & 0.3754 \\
	400 & Main effects & -0.5573 & 0.1894 & 0.6517 & 0.0951 & 0.4558 \\
	400 & Euclidean & -0.5469 & 0.1845 & 0.6711 & 0.0751 & 0.3501 \\
	400 & Spherical & -0.5517 & 0.1863 & 0.6626 & 0.0869 & 0.4283 \\
	400 & AME & -0.5511 & 0.1863 & 0.6663 & 0.0835 & 0.3979 \\
	\bottomrule
\end{longtable}
\TableNote{Entries are replicate means, averaged equally over the selected design cells assigned to each side length. AUC is area under the receiver operating characteristic curve; RMSE is root mean squared error. Higher log score and AUC are better, while lower values are better for the remaining metrics.}
\endgroup

Table~\ref{tab:balanced-paired} gives paired HELPI-minus-baseline differences across all 1,400 datasets. Monte Carlo standard errors use the replicate differences. Positive differences favor HELPI for log score and AUC, while negative differences favor HELPI for the three error metrics.

\begingroup
\small
\setlength{\tabcolsep}{5pt}
\begin{longtable}{@{}llrrl@{}}
	\caption{Paired balanced-hierarchy contrasts.}\label{tab:balanced-paired}\\
	\toprule
	Metric & Baseline & Difference & MCSE & 95\% Monte Carlo interval \\
	\midrule
	\endfirsthead
	\toprule
	Metric & Baseline & Difference & MCSE & 95\% Monte Carlo interval \\
	\midrule
	\endhead
	Log score & Main effects & 0.00521 & 0.00037 & $[0.00449,0.00594]$ \\
	Log score & Euclidean & -0.00585 & 0.00042 & $[-0.00668,-0.00502]$ \\
	Log score & Spherical & -0.00176 & 0.00046 & $[-0.00267,-0.00085]$ \\
	Log score & AME & -0.00113 & 0.00028 & $[-0.00168,-0.00058]$ \\
	\addlinespace
	Brier & Main effects & -0.00231 & 0.00016 & $[-0.00263,-0.00200]$ \\
	Brier & Euclidean & 0.00258 & 0.00018 & $[0.00223,0.00292]$ \\
	Brier & Spherical & 0.00100 & 0.00019 & $[0.00063,0.00137]$ \\
	Brier & AME & 0.00056 & 0.00012 & $[0.00033,0.00080]$ \\
	\addlinespace
	AUC & Main effects & 0.01039 & 0.00076 & $[0.00889,0.01189]$ \\
	AUC & Euclidean & -0.01588 & 0.00090 & $[-0.01765,-0.01411]$ \\
	AUC & Spherical & -0.00809 & 0.00097 & $[-0.01000,-0.00618]$ \\
	AUC & AME & -0.00529 & 0.00061 & $[-0.00648,-0.00410]$ \\
	\addlinespace
	Probability RMSE & Main effects & -0.00713 & 0.00051 & $[-0.00813,-0.00613]$ \\
	Probability RMSE & Euclidean & 0.00897 & 0.00049 & $[0.00800,0.00993]$ \\
	Probability RMSE & Spherical & -0.00078 & 0.00060 & $[-0.00195,0.00038]$ \\
	Probability RMSE & AME & 0.00089 & 0.00032 & $[0.00026,0.00152]$ \\
	\addlinespace
	Signal RMSE & Main effects & -0.04208 & 0.00260 & $[-0.04717,-0.03699]$ \\
	Signal RMSE & Euclidean & 0.04820 & 0.00305 & $[0.04222,0.05419]$ \\
	Signal RMSE & Spherical & -0.03836 & 0.00399 & $[-0.04617,-0.03055]$ \\
	Signal RMSE & AME & -0.00070 & 0.00184 & $[-0.00430,0.00290]$ \\
	\bottomrule
\end{longtable}
\TableNote{Each difference is HELPI minus the named baseline over 1,400 paired datasets. MCSE denotes the Monte Carlo standard error of the paired mean. Positive values favor HELPI for log score and AUC; negative values favor HELPI for the error metrics.}
\endgroup

HELPI improves all five summaries relative to main effects. The Euclidean model has the best design-averaged prediction and signal recovery in these selected cells. AME has slightly better predictive averages and nearly identical signal RMSE, while the spherical model predicts more accurately but has larger signal RMSE than HELPI. Thus HELPI consistently captures nonadditive structure beyond main effects, although relative performance depends on the interaction geometry.

\subsection{Collapse and weak-branch results}
\label{supp:collapse-results}

The collapse experiment uses $80\times80$ arrays, hierarchy coefficient $0.75$, event rate $0.30$, and 100 replicates at each branch spread. Table~\ref{tab:collapse} reports the HELPI results. The population quotient singular value is zero at collapse, up to floating-point error, and increases with branch spread. The fitted quotient singular value remains nearly constant across the grid.

\begin{table}[H]
	\centering
	\scriptsize
	\caption{Collapse and weak-branch results. Values are replicate means.}
	\label{tab:collapse}
	\begin{tabular}{rrrrrrr}
		\toprule
		Spread & True $\sigma_{\min}$ & Fitted $\sigma_{\min}$ & Restart SD($\lambda$) & Log score & Probability RMSE & Signal RMSE \\
		\midrule
		0.00 & $6.63\times10^{-15}$ & 0.00419 & 0.01139 & -0.5934 & 0.0838 & 0.0299 \\
		0.01 & $1.10\times10^{-8}$ & 0.00432 & 0.01535 & -0.5883 & 0.0817 & 0.0300 \\
		0.03 & $1.00\times10^{-6}$ & 0.00438 & 0.01197 & -0.5911 & 0.0840 & 0.0300 \\
		0.10 & $9.45\times10^{-5}$ & 0.00407 & 0.01269 & -0.5888 & 0.0826 & 0.0350 \\
		0.30 & 0.003261 & 0.00433 & 0.01594 & -0.5941 & 0.0857 & 0.0925 \\
		0.60 & 0.008126 & 0.00433 & 0.01240 & -0.5921 & 0.0891 & 0.1661 \\
		1.00 & 0.014645 & 0.00426 & 0.01045 & -0.5921 & 0.0937 & 0.2156 \\
		\bottomrule
	\end{tabular}
	\TableNote{Each row averages 100 replicates. The quotient singular values diagnose local branch resolvability; Restart SD($\lambda$) denotes the standard deviation across independent starts. RMSE is root mean squared error, and lower values are better for the reported predictive errors.}
\end{table}

Across all 700 replicates, the Spearman correlation between fitted $\sigma_{\min}$ and restart SD($\lambda$) is $0.5308$. A 2,000-sample percentile bootstrap gives interval $[0.4664,0.5930]$. The observed positive direction differs from the prespecified negative ordering. This empirical relationship concerns the finite-sample fitted diagnostic and does not alter the exact population collapse result.

\subsection{Simulation 3: Complementary interaction structures}
\label{supp:misspec-results}

Let $u_i,v_j\stackrel{\mathrm{ind}}\sim\Normal_4(0,I_4)$, independently of Gaussian row and column main effects generated as in Simulation~1. Responses are generated from
\[
\logit(P_{ij})=\mu_{0.30}+\alpha_i+\beta_j+S_{ij},
\qquad
S_{ij}=
\begin{cases}
	-0.7\|u_i-v_j\|, & \text{Euclidean generator},\\
	0.9u_i^\top v_j/\sqrt p, & \text{AME generator},
\end{cases}
\]
where $\mu_{0.30}$ is chosen to give mean event probability $0.30$. Each generator uses $n=m\in\{100,200,400\}$ and 100 replicates per side length. Competing methods receive the same training and test dyads within each replicate. The goal is to assess whether HELPI captures nonadditive dependence when the generating interaction is metric or bilinear rather than hierarchical. Table~\ref{tab:misspecification} averages over side length. Parentheses contain Monte Carlo standard errors across 300 datasets for each generator-model pair.

\begin{table}[H]
	\centering
	\scriptsize
	\caption{Complementary-generator results averaged over side length.}
	\label{tab:misspecification}
	\begin{tabular}{llrrrr}
		\toprule
		Generator & Model & Log score & Brier & AUC & Probability RMSE \\
		\midrule
		Euclidean & HELPI & -0.5708 (0.0006) & 0.1933 (0.0002) & 0.6755 (0.0008) & 0.1073 (0.0004) \\
		Euclidean & Main effects & -0.5701 (0.0006) & 0.1929 (0.0002) & 0.6763 (0.0008) & 0.1058 (0.0004) \\
		Euclidean & Euclidean & -0.5708 (0.0006) & 0.1931 (0.0002) & 0.6756 (0.0008) & 0.1066 (0.0004) \\
		Euclidean & Spherical & -0.5775 (0.0006) & 0.1958 (0.0003) & 0.6679 (0.0009) & 0.1184 (0.0005) \\
		Euclidean & AME & -0.5703 (0.0006) & 0.1930 (0.0002) & 0.6760 (0.0008) & 0.1062 (0.0005) \\
		\addlinespace
		AME & HELPI & -0.5842 (0.0009) & 0.1989 (0.0004) & 0.6445 (0.0016) & 0.1600 (0.0011) \\
		AME & Main effects & -0.5887 (0.0005) & 0.2007 (0.0002) & 0.6346 (0.0007) & 0.1665 (0.0004) \\
		AME & Euclidean & -0.5698 (0.0012) & 0.1923 (0.0005) & 0.6814 (0.0022) & 0.1360 (0.0017) \\
		AME & Spherical & -0.5697 (0.0011) & 0.1922 (0.0004) & 0.6826 (0.0018) & 0.1366 (0.0014) \\
		AME & AME & -0.5725 (0.0013) & 0.1934 (0.0006) & 0.6752 (0.0026) & 0.1391 (0.0020) \\
		\bottomrule
	\end{tabular}
	\TableNote{Parentheses contain Monte Carlo standard errors across 300 datasets for each generator-model pair. AUC is area under the receiver operating characteristic curve, and RMSE is root mean squared error. Higher log score and AUC are better; lower Brier score and RMSE are better.}
\end{table}

Under Euclidean generation, HELPI, main effects, Euclidean, and AME differ by less than $0.00075$ in mean log score. Their predictive performance is therefore similar in this design. Under AME generation, HELPI improves on main effects in log and Brier score. Models that directly accommodate a bilinear signal have the best averages. HELPI is therefore competitive under Euclidean generation and captures nonadditive dependence under AME generation. Relative rankings still depend on the generating interaction.

\subsection{Simulation 4: Analyst-choice sensitivity}
\label{supp:sensitivity-results}

This simulation returns to the balanced hierarchy generator with branching factor $b=3$, depth $d=4$, side length $n=m=200$, hierarchy coefficient $\lambda=0.5$, and mean event probability $\rho=0.30$. The nine design cells cross
\[
p\in\{2,4,8\}
\qquad\text{and}\qquad
R_0\in\{3,5,7\},
\]
with 100 independent replicates per cell.
\begin{samepage}
	For every fitted probability matrix, reporting thresholds $B_{\rm clip}\in\{4,6,8\}$ define
	\[
	\widehat{\mathcal H}_{B_{\rm clip}}
	=\projA\!\left[
	\clip\{\logit(\widehat{\bP}),[-B_{\rm clip},B_{\rm clip}]\}
	\right].
	\]
\end{samepage}
Thus, $p$ and $R_0$ are fitting choices, whereas $B_{\rm clip}$ is a reporting threshold applied to each fitted model. The goal is to assess prediction, hierarchy-signal recovery, clipping, and boundary behavior across the prespecified choices. Table~\ref{tab:sensitivity} reports HELPI means, with Monte Carlo standard errors in parentheses.

\begin{table}[H]
	\centering
	\scriptsize
	\caption{HELPI sensitivity to latent dimension and depth bound.}
	\label{tab:sensitivity}
	\begin{tabular}{rrrrrrr}
		\toprule
		$p$ & $R_0$ & Log score & Brier & AUC & Signal RMSE & Boundary fraction \\
		\midrule
		2 & 3 & -0.5836 (0.0005) & 0.1986 (0.0002) & 0.6460 (0.0008) & 0.41411 (0.00005) & 0 \\
		2 & 5 & -0.5841 (0.0005) & 0.1987 (0.0002) & 0.6456 (0.0008) & 0.41477 (0.00005) & 0 \\
		2 & 7 & -0.5845 (0.0006) & 0.1988 (0.0003) & 0.6439 (0.0010) & 0.41548 (0.00006) & 0 \\
		4 & 3 & -0.5829 (0.0006) & 0.1983 (0.0002) & 0.6463 (0.0009) & 0.41371 (0.00005) & 0 \\
		4 & 5 & -0.5831 (0.0005) & 0.1983 (0.0002) & 0.6464 (0.0008) & 0.41427 (0.00004) & 0 \\
		4 & 7 & -0.5830 (0.0006) & 0.1983 (0.0003) & 0.6463 (0.0008) & 0.41472 (0.00005) & 0 \\
		8 & 3 & -0.5840 (0.0005) & 0.1987 (0.0002) & 0.6458 (0.0009) & 0.41366 (0.00003) & 0 \\
		8 & 5 & -0.5840 (0.0006) & 0.1987 (0.0003) & 0.6443 (0.0008) & 0.41395 (0.00002) & 0 \\
		8 & 7 & -0.5841 (0.0006) & 0.1987 (0.0002) & 0.6464 (0.0009) & 0.41425 (0.00002) & 0 \\
		\bottomrule
	\end{tabular}
	\TableNote{Each cell contains 100 replicates. Parentheses contain Monte Carlo standard errors. $p$ is latent dimension, $R_0$ is the depth bound, and Boundary fraction is the proportion of fitted tangent means within 0.05 of that bound.}
\end{table}

The ranges across the nine cells are narrow for log score, Brier score, AUC, and hierarchy-signal RMSE. The maximum cell-mean clipped fraction at $B_{\rm clip}=4$ is $2.3\times10^{-5}$. At levels six and eight, all cell-mean clipped fractions are zero. No tangent mean lies within $0.05$ of its depth bound. Prediction and hierarchy-signal recovery are therefore stable over this grid, and the finite-depth constraint is inactive in these settings.

\subsection{Junyi preprocessing and primary analysis}
\label{supp:junyi-results}

The preprocessing pass streamed 25,925,992 source rows into a disk-backed first-attempt table. It retained 2,289,789 unique student-exercise pairs before the deterministic core and size limits were applied. The resulting table treats unattempted pairs as missing. Within each student, the earliest 80\% of retained attempts form training data and the later attempts form test data. No test pair was excluded because its exercise was absent from training.

\begin{table}[H]
	\centering
	\scriptsize
	\caption{Junyi analysis subsets after preprocessing.}
	\label{tab:junyi-subsets}
	\begin{tabular}{lrrrrrrr}
		\toprule
		Window & Students & Exercises & Observed pairs & Training & Test & Train rate & Test rate \\
		\midrule
		All days & 2,000 & 500 & 317,091 & 252,888 & 64,203 & 0.6979 & 0.6333 \\
		30 days & 484 & 409 & 18,811 & 14,848 & 3,963 & 0.6977 & 0.6528 \\
		90 days & 990 & 489 & 76,034 & 60,439 & 15,595 & 0.6907 & 0.6334 \\
		180 days & 1,351 & 498 & 131,473 & 104,641 & 26,832 & 0.6954 & 0.6382 \\
		\bottomrule
	\end{tabular}
	\TableNote{Rates are correctness proportions conditional on an observed attempt. Within each student, the earliest 80\% of retained attempts form the training set and later attempts form the test set.}
\end{table}

The deterministic anchor hierarchy has three levels: area, topic, and exercise. The primary subset contains eight areas, 38 area-topic pairs, and 477 exercises with prerequisite metadata. The prerequisite graph is not forced into a unique tree because an exercise may have multiple prerequisite relations. Benchmarks include the Rasch and two-parameter logistic item-response models \citep{birnbaum_1968_LatentTraitModels,rasch_1980_ProbabilisticModelsIntelligence}.

Table~\ref{tab:junyi-primary} reports the primary temporal holdout. Higher log score and AUC are better, while lower Brier score and expected calibration error are better. Calibration intercept zero and slope one are the reference values.

\begin{table}[H]
	\centering
	\scriptsize
	\caption{Primary Junyi correctness results conditional on an observed attempt.}
	\label{tab:junyi-primary}
	\begin{tabular}{lrrrrrr}
		\toprule
		Model & Log score & Brier & AUC & Calibration intercept & Calibration slope & ECE \\
		\midrule
		2PL & -0.5090 & 0.1686 & 0.8075 & 0.1463 & 0.8744 & 0.0220 \\
		HELPI soft & -0.5125 & 0.1701 & 0.8038 & 0.1250 & 0.9264 & 0.0223 \\
		HELPI unanchored & -0.5126 & 0.1702 & 0.8032 & 0.0939 & 0.9608 & 0.0175 \\
		Rasch & -0.5127 & 0.1702 & 0.8037 & 0.1251 & 0.9248 & 0.0224 \\
		HELPI fixed & -0.5130 & 0.1703 & 0.8033 & 0.1236 & 0.9303 & 0.0220 \\
		AME & -0.5165 & 0.1720 & 0.7987 & -0.0064 & 1.0368 & 0.0090 \\
		Euclidean & -0.5173 & 0.1723 & 0.8028 & 0.2604 & 0.8825 & 0.0418 \\
		Spherical & -0.5199 & 0.1731 & 0.7994 & 0.2033 & 0.8558 & 0.0343 \\
		\bottomrule
	\end{tabular}
	\TableNote{The primary split contains 2,000 students, 500 exercises, 252,888 training pairs, and 64,203 test pairs. AUC is area under the receiver operating characteristic curve, and ECE is expected calibration error. Higher log score and AUC are better; lower Brier score and ECE are better. Calibration intercept zero and slope one are reference values.}
\end{table}

Softly anchored HELPI has the best results among the HELPI variants. Across the three variants, log score differs from Rasch by at most $0.0003$, Brier score by at most $0.0001$, and AUC by at most $0.0005$. The 2PL fit has the best primary log score, Brier score, and AUC, while AME has the smallest expected calibration error. HELPI therefore provides predictions close to established item-response models together with a hierarchy-based summary.

Figure~\ref{fig:junyi-model-comparison} displays the primary held-out metrics for every fitted model.

\begin{figure}[H]
	\centering
	\includegraphics[width=\textwidth]{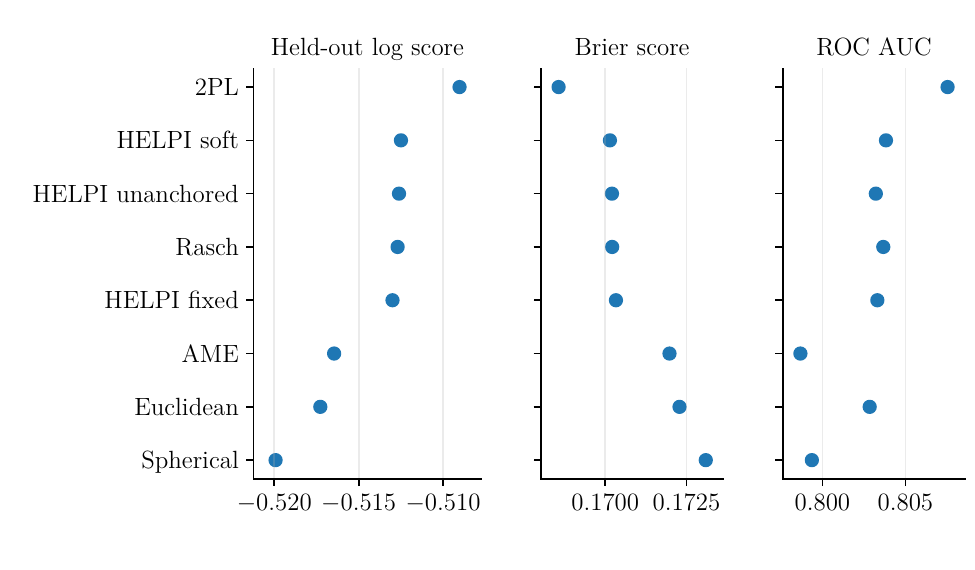}
	\caption{Primary Junyi temporal-holdout comparison. Three aligned dot plots compare eight models by held-out log score, Brier score, and area under the receiver operating characteristic curve under the common split. Each row represents one fitted model. Higher log score and AUC are better, while lower Brier score is better.}
	\label{fig:junyi-model-comparison}
\end{figure}

Figure~\ref{fig:junyi-calibration} shows decile calibration for the primary split. The curves agree with the expected calibration errors in Table~\ref{tab:junyi-primary}. AME follows the diagonal most closely overall, while several models show larger departures in the highest-probability bins.

\begin{figure}[H]
	\centering
	\includegraphics[width=0.78\textwidth]{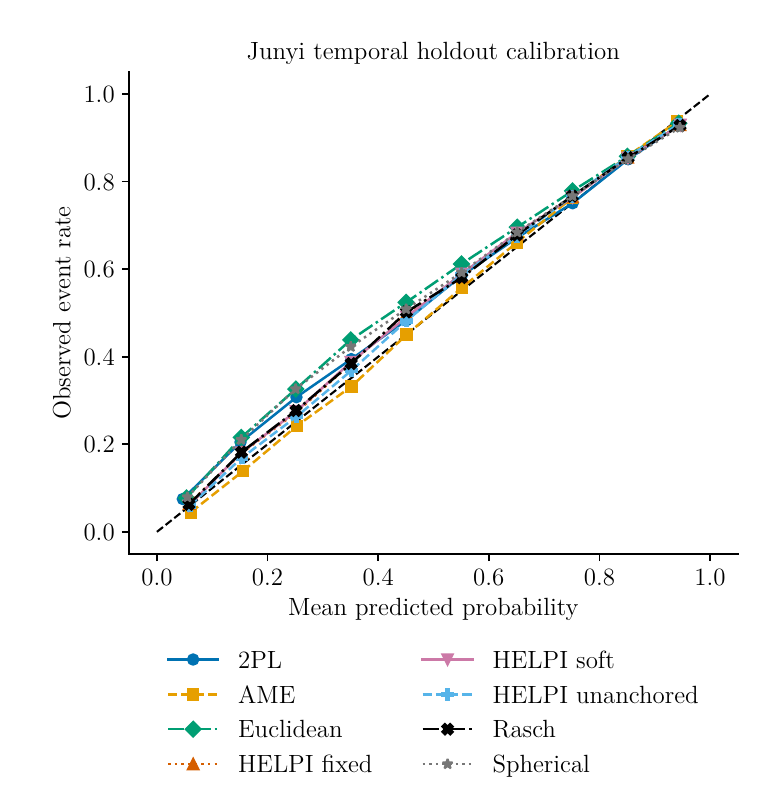}
	\caption{Primary Junyi temporal-holdout calibration. Eight curves plot observed event rates against mean predicted probabilities across ten bins. Points are observed event rates within bins defined by predicted probability. The dashed diagonal denotes exact calibration, and several curves depart from it in the highest-probability bins.}
	\label{fig:junyi-calibration}
\end{figure}

Table~\ref{tab:junyi-diagnostics} reports the primary HELPI identification and boundary diagnostics. Each local diagnostic matrix has 114 columns and rank 110. The full-rank condition in the local branch-identification theorem therefore fails for every anchor specification. We consequently do not interpret the hierarchy coefficient separately.

\begin{table}[H]
	\centering
	\scriptsize
	\caption{Primary HELPI diagnostics and hierarchy-signal sensitivity.}
	\label{tab:junyi-diagnostics}
	\begin{tabular}{lrrrrrrr}
		\toprule
		Model & Rank & Quotient $\sigma_{\min}$ & Restart SD($\lambda$) & Posterior SD($\lambda$) & Boundary & $\|\mathcal H_6\|$ & Clip fraction \\
		\midrule
		Fixed & 110/114 & $1.02\times10^{-10}$ & 0.09380 & 0.01055 & 0 & 0.02429 & 0.000705 \\
		Soft & 110/114 & $4.01\times10^{-5}$ & 0.01676 & 0.00547 & 0 & 0.02691 & 0.000889 \\
		Unanchored & 110/114 & 0.00238 & 0.00383 & 0.02764 & 0 & 0.02340 & 0.000460 \\
		\bottomrule
	\end{tabular}
	\TableNote{Rank reports the local diagnostic rank out of 114 columns. SD denotes standard deviation across restarts or under the fitted posterior. $\mathcal H_6$ uses predictor clipping at six; Clip fraction is the fraction of fitted predictors beyond that threshold.}
\end{table}

At clipping level four, the three fitted hierarchy norms range from $0.09155$ to $0.10856$, with clipped fractions from $0.02069$ to $0.02631$. At level eight, the norms range from $0.01790$ to $0.01970$, and every clipped fraction is below $2.4\times10^{-5}$. This dependence on clipping confirms that the reported functional must include its clip level.

\subsection{Junyi curricular-window sensitivity}

Table~\ref{tab:junyi-windows} reports all model results for the 30-day, 90-day, and 180-day subsets. Rasch has the best log score at 30 and 90 days, while 2PL is best at 180 days. The log-score difference between the best HELPI variant and the best model is $0.0067$, $0.0013$, and $0.0016$ across these windows.

\begin{longtable}{rlrrrr}
	\caption{Junyi curricular-window sensitivity.}\label{tab:junyi-windows}\\
	\toprule
	Days & Model & Log score & Brier & AUC & ECE \\
	\midrule
	\endfirsthead
	\toprule
	Days & Model & Log score & Brier & AUC & ECE \\
	\midrule
	\endhead
	30 & Rasch & -0.5110 & 0.1679 & 0.8027 & 0.0360 \\
	30 & 2PL & -0.5133 & 0.1690 & 0.7991 & 0.0316 \\
	30 & Spherical & -0.5172 & 0.1704 & 0.7969 & 0.0293 \\
	30 & HELPI fixed & -0.5177 & 0.1711 & 0.7953 & 0.0337 \\
	30 & Euclidean & -0.5177 & 0.1706 & 0.7957 & 0.0316 \\
	30 & AME & -0.5183 & 0.1714 & 0.7952 & 0.0363 \\
	30 & HELPI unanchored & -0.5197 & 0.1717 & 0.7937 & 0.0401 \\
	30 & HELPI soft & -0.5201 & 0.1717 & 0.7943 & 0.0394 \\
	\addlinespace
	90 & Rasch & -0.5058 & 0.1671 & 0.8103 & 0.0128 \\
	90 & 2PL & -0.5060 & 0.1671 & 0.8100 & 0.0112 \\
	90 & HELPI fixed & -0.5071 & 0.1677 & 0.8090 & 0.0102 \\
	90 & AME & -0.5085 & 0.1684 & 0.8081 & 0.0162 \\
	90 & HELPI soft & -0.5091 & 0.1684 & 0.8075 & 0.0108 \\
	90 & HELPI unanchored & -0.5093 & 0.1686 & 0.8068 & 0.0102 \\
	90 & Euclidean & -0.5098 & 0.1688 & 0.8069 & 0.0114 \\
	90 & Spherical & -0.5169 & 0.1716 & 0.8010 & 0.0216 \\
	\addlinespace
	180 & 2PL & -0.5059 & 0.1673 & 0.8081 & 0.0199 \\
	180 & HELPI soft & -0.5075 & 0.1683 & 0.8058 & 0.0153 \\
	180 & Rasch & -0.5083 & 0.1683 & 0.8062 & 0.0179 \\
	180 & HELPI fixed & -0.5088 & 0.1687 & 0.8048 & 0.0142 \\
	180 & HELPI unanchored & -0.5089 & 0.1687 & 0.8047 & 0.0137 \\
	180 & Euclidean & -0.5116 & 0.1698 & 0.8036 & 0.0271 \\
	180 & AME & -0.5118 & 0.1699 & 0.8016 & 0.0103 \\
	180 & Spherical & -0.5150 & 0.1711 & 0.8004 & 0.0244 \\
	\bottomrule
\end{longtable}
\TableNote{Each entry is computed on the temporal holdout for the stated curricular window. AUC is area under the receiver operating characteristic curve, and ECE is expected calibration error. Higher log score and AUC are better; lower Brier score and ECE are better.}

The Junyi observation-process check pairs each observed dyad with two reproducibly sampled unobserved dyads and fits AME.
\label{supp:exposure-results}
The holdout contains 190,274 case-control observations with event fraction $0.3335$. The model has AUC $0.9248$, Brier score $0.1009$, log score $-0.3223$, and expected calibration error $0.0052$. Its calibration intercept is $-0.0255$, and its slope is $0.9598$.

The AUC shows that observed and sampled unobserved pairs differ systematically. These results do not estimate population attempt probabilities because the control prevalence was fixed by design \citep{prentice_1979_LogisticDiseaseIncidence}. This observation-process check therefore supports retaining the conditional-on-attempt interpretation for the Junyi correctness analysis.

\subsection{Additional real-data screening and analysis}
\label{supp:additional-realdata}

\subsubsection{Candidate screening}

Candidate selection emphasized public binary bipartite arrays with reproducible acquisition, meaningful row and column roles, and external hierarchical metadata.
We screened the Retailrocket event log, Amazon Reviews 2023 Digital Music, and Voteview congressional roll calls \citep{zykov_2022_RetailrocketRecommenderSystem,hou_2024_BridgingLanguageItems,lewis_2026_VoteviewCongressionalRollCall}.
The goal was to identify applications that add substantive interpretation and test HELPI beyond the educational setting.

Voteview provides policy and bill metadata for a repeated binary voting array.
Retailrocket links product views and transactions to a category tree.
Both datasets therefore support prespecified column anchors and held-out predictive analysis.

The Amazon screen contained 130,434 reviews from 100,952 users on 70,511 products.
Only seven of 70,537 Digital Music metadata records supplied a nonempty category path.
The available release therefore lacked sufficient hierarchy coverage for the planned item analysis.

Voteview was selected for the main manuscript because it supports repeated masks, substantive policy interpretation, and direct visualization of party and policy structure.
Retailrocket was retained as a supplementary temporal and rare-event stress test.
The Amazon decision concerns metadata suitability, not fitted HELPI performance.

\subsubsection{Voteview complete results}

The Voteview analysis uses 49,591 recorded yea or nay votes from 432 members on 118 policy-annotated roll calls in the 117th House.
Five seeded masks assign 80\% of observed dyads to training while retaining every active row and column.
Column anchors follow policy area, bill type, and roll-call identity.
The goal is to compare predictive structure and assess how anchoring balances official policy labels with outcome-defined coalitions.

Table~\ref{tab:voteview-full} includes all eight models, including fixed-anchor HELPI.
Figure~\ref{fig:voteview-model-comparison} shows every mask-specific AUC and Brier score.

\begin{table}[H]
	\centering
	\small
	\caption{Complete Voteview held-out comparison across five independent masks. Entries are means with standard deviations in parentheses.}
	\label{tab:voteview-full}
	\begin{tabular}{lrrr}
\toprule
Model & AUC & Brier score & Log score \\
\midrule
HELPI, soft anchors & $0.9902\;(0.0003)$ & $0.0308\;(0.0004)$ & $-0.1145\;(0.0008)$ \\
HELPI, unanchored & $0.9920\;(0.0006)$ & $0.0289\;(0.0008)$ & $-0.1060\;(0.0019)$ \\
HELPI, fixed anchors & $0.9314\;(0.0016)$ & $0.0942\;(0.0010)$ & $-0.3030\;(0.0033)$ \\
Main effects & $0.9093\;(0.0035)$ & $0.1108\;(0.0013)$ & $-0.3478\;(0.0030)$ \\
2PL & $0.9708\;(0.0013)$ & $0.0545\;(0.0008)$ & $-0.1661\;(0.0028)$ \\
Euclidean & $0.9925\;(0.0006)$ & $0.0278\;(0.0011)$ & $-0.0995\;(0.0024)$ \\
Spherical & $0.9915\;(0.0006)$ & $0.0289\;(0.0008)$ & $-0.1013\;(0.0023)$ \\
AME & $0.9930\;(0.0002)$ & $0.0276\;(0.0005)$ & $-0.0946\;(0.0011)$ \\
\bottomrule
\end{tabular}

	\TableNote{Entries are means over five independent 80\% training masks, with standard deviations in parentheses. AUC is area under the receiver operating characteristic curve. Higher log score and AUC are better; lower Brier score is better.}
\end{table}

\begin{figure}[H]
	\centering
	\includegraphics[width=\textwidth]{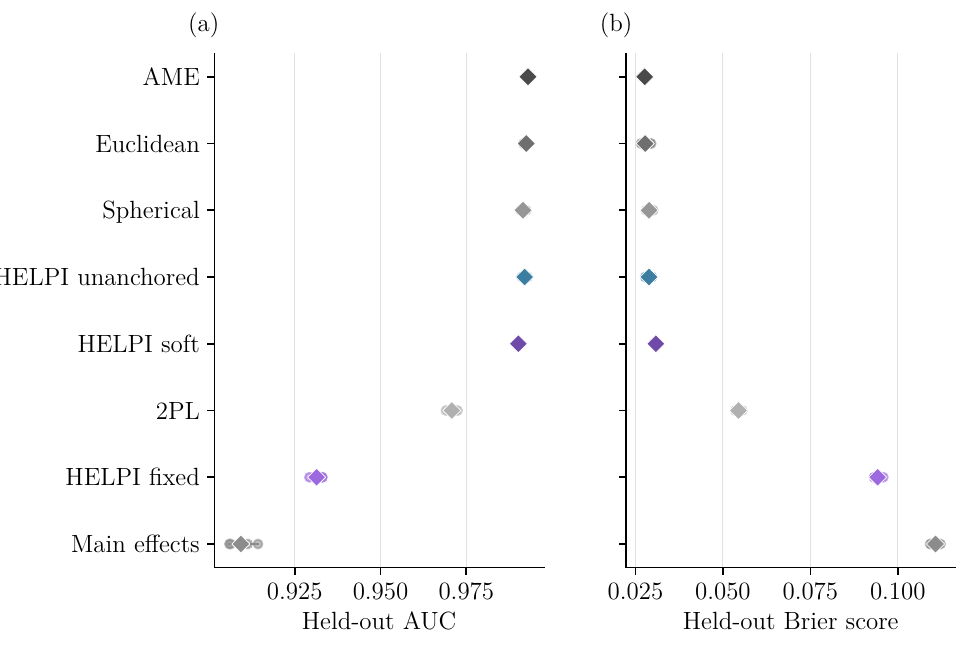}
	\caption{Voteview held-out performance across five masks. Two horizontal dot plots compare eight models by area under the receiver operating characteristic curve in panel (a) and Brier score in panel (b). Points show mask results, and bars show mask means. Higher AUC and lower Brier score are better.}
	\label{fig:voteview-model-comparison}
\end{figure}

Soft-anchor HELPI averages $0.9902$ in AUC and $0.0308$ in Brier score.
Relative to main effects, its mean differences are $0.0809$ in AUC and $-0.0800$ in Brier score.
Relative to 2PL, the corresponding differences are $0.0194$ and $-0.0236$.
Unanchored HELPI reaches mean AUC $0.9920$ and Brier score $0.0289$.
Fixed anchors also improve on main effects, with mean AUC $0.9314$ and Brier score $0.0942$.

Each HELPI specification captures predictive voting structure beyond additive member and roll-call propensities.
Soft anchors preserve policy organization while accommodating coalitions that are not fully encoded by the official labels.
Relative to the AME and Euclidean fits, unanchored HELPI differs by at most $0.0010$ in mean AUC and $0.0013$ in mean Brier score.
The fixed-anchor improvement over main effects shows that the external hierarchy contains predictive information, while the flexible variants capture additional organization.

\subsubsection{Retailrocket temporal conversion analysis}

The Retailrocket analysis begins with each retained visitor-product first view and records whether a later transaction occurs for that pair.
A within-visitor temporal split assigns earlier views to training and later views to testing.
The retained array contains 20,834 pairs from 1,443 visitors and 800 products.
It includes 16,078 training pairs and 4,756 test pairs.
Column anchors follow root category, local parent, leaf category, and product.
The goal is to evaluate rare-event conversion prediction under a realistic temporal shift.

Training purchase prevalence is $0.0624$, compared with $0.0467$ in testing.
Table~\ref{tab:retailrocket-summary} reports the temporal holdout, and Figure~\ref{fig:retailrocket-performance} adds calibration curves.

\begin{table}[H]
	\centering
	\small
	\caption{Retailrocket temporal-holdout performance for purchase after an observed product view.}
	\label{tab:retailrocket-summary}
	\begin{tabular}{lrrr}
\toprule
Model & AUC & Brier score & Log score \\
\midrule
HELPI, soft anchors & $0.7715$ & $0.0462$ & $-0.1823$ \\
HELPI, unanchored & $0.7693$ & $0.0466$ & $-0.1835$ \\
HELPI, fixed anchors & $0.7657$ & $0.0469$ & $-0.1848$ \\
Main effects & $0.7939$ & $0.0470$ & $-0.1777$ \\
2PL & $0.7648$ & $0.0528$ & $-0.2019$ \\
Euclidean & $0.7673$ & $0.0442$ & $-0.1719$ \\
Spherical & $0.7719$ & $0.0446$ & $-0.1725$ \\
AME & $0.7676$ & $0.0480$ & $-0.1850$ \\
\bottomrule
\end{tabular}

	\TableNote{Entries use one within-visitor temporal split with 16,078 training pairs and 4,756 test pairs. AUC is area under the receiver operating characteristic curve. Higher log score and AUC are better; lower Brier score is better.}
\end{table}

\begin{figure}[H]
	\centering
	\includegraphics[width=\textwidth]{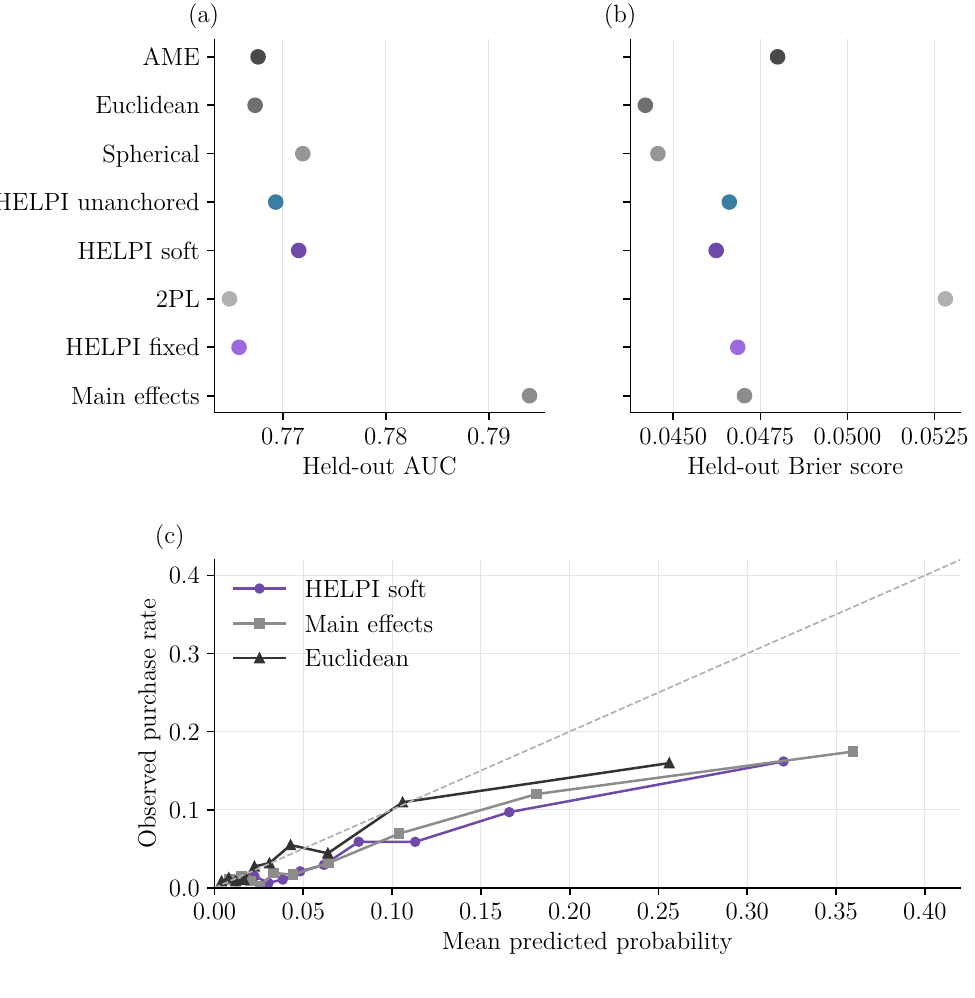}
	\caption{Retailrocket temporal-holdout results. Panels (a) and (b) are horizontal dot plots comparing eight models by AUC and Brier score. Panel (c) plots calibration curves for soft-anchor HELPI, main effects, and Euclidean interaction against a diagonal reference under a temporal split with lower test prevalence.}
	\label{fig:retailrocket-performance}
\end{figure}

Soft-anchor HELPI has the highest AUC among HELPI variants at $0.7715$.
Its Brier score is $0.0462$, compared with $0.0470$ for main effects and $0.0528$ for 2PL.
Euclidean interaction gives the lowest Brier and best log score, while main effects gives the highest AUC.
The calibration curves also show that predictions fitted under the higher training prevalence tend to exceed observed test rates.

The Retailrocket results show that HELPI captures predictive structure in a sparse conversion setting with a category tree and temporal prevalence shift.
Soft anchoring has the best AUC among HELPI variants and improves the Brier score over the additive benchmark.
Because the analysis uses one temporal split, it does not quantify variation across splits and is treated as a complementary stress test.
\subsection{MCMC convergence audit}
\label{supp:mcmc-results}

The computational audit covers one $35\times35$ balanced simulation and two connected real-data subsets.
The Junyi subset contains 50 students and 35 exercises.
The Voteview subset contains 50 members and 35 roll calls, with 1,589 training votes and 153 held-out votes.
Moderate subsets keep the per-iteration calculation auditable while preserving both row and column heterogeneity.

The simulation and Junyi targets use four independent chains each. Voteview uses eight independent chains, one on each workstation core. Every chain has 8,000 warmup iterations and 32,000 retained draws, with proposal adaptation ending after warmup. Earlier short runs for the simulation and Junyi targets are archived, and their extensions were determined only from convergence diagnostics. Voteview adopted the resulting schedule before its chains were run.

Figure~\ref{fig:mcmc-diagnostics} and Table~\ref{tab:mcmc-diagnostics} report rank-normalized split $\widehat R$ and effective sample sizes (ESS) \citep{vehtari_2021_RankNormalizationFoldingLocalization}. For each target, the six monitored summaries are invariant to root-fixing rotations. They include selected probabilities, log score, hierarchy norm, hierarchy contrast, and $\lambda$.

\begin{figure}[H]
	\centering
	\includegraphics[width=\textwidth]{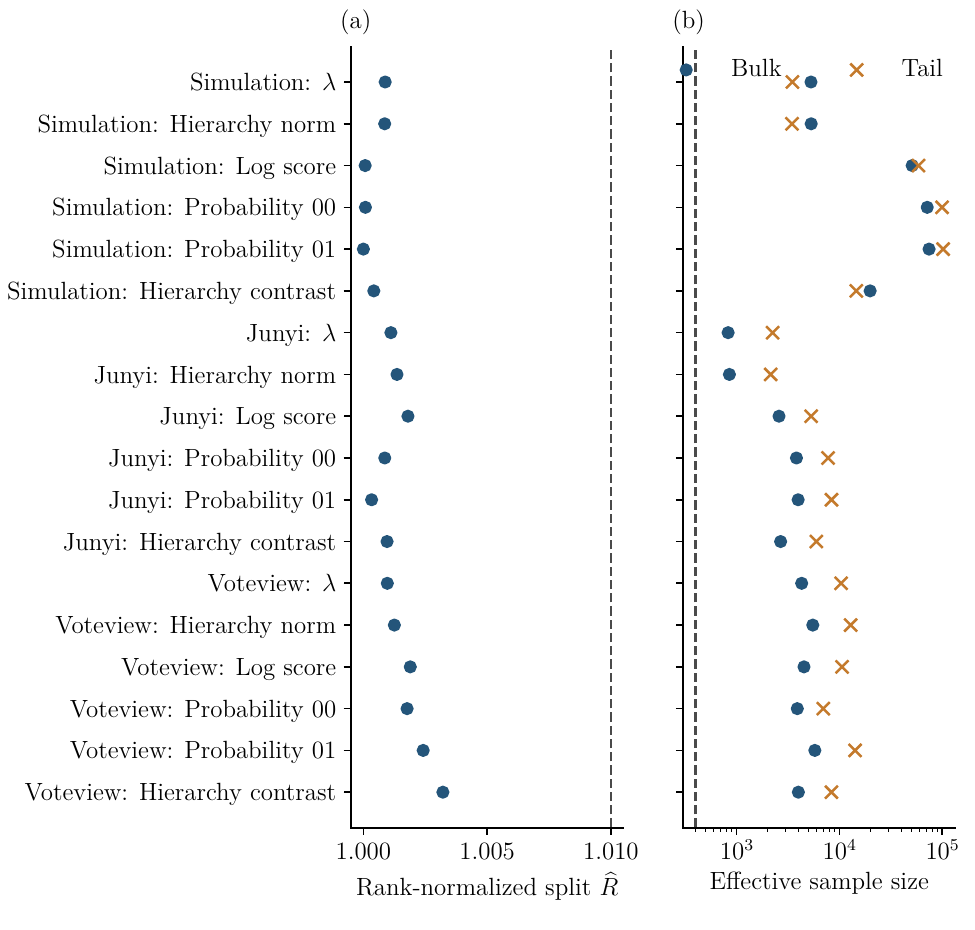}
	\caption{MCMC convergence diagnostics for the simulation, Junyi, and Voteview targets. Panel (a) plots rank-normalized split $\widehat R$ values, while panel (b) plots bulk and tail effective sample sizes on a logarithmic scale. Dashed vertical lines mark the prespecified thresholds of $1.01$ and 400, respectively.}
	\label{fig:mcmc-diagnostics}
\end{figure}

\clearpage
\begingroup
\small
\setlength{\tabcolsep}{4pt}
\begin{longtable}{@{}l@{\hspace{1em}}lrrrrr@{}}
\caption{Rank-normalized MCMC diagnostics.}\label{tab:mcmc-diagnostics}\\
\toprule
Target & Summary & Mean & SD & $\widehat R$ & Bulk ESS & Tail ESS \\
\midrule
\endfirsthead
\toprule
Target & Summary & Mean & SD & $\widehat R$ & Bulk ESS & Tail ESS \\
\midrule
\endhead
Simulation & $\lambda$ & 0.3139 & 0.3137 & 1.0009 & 5,300.1 & 3,498.7 \\
Simulation & Hierarchy norm & 0.0827 & 0.0829 & 1.0009 & 5,324.7 & 3,464.2 \\
Simulation & Log score & -0.6689 & 0.0239 & 1.0001 & 51,095.5 & 58,849.0 \\
Simulation & Probability 00 & 0.1874 & 0.0919 & 1.0001 & 71,570.0 & 99,774.3 \\
Simulation & Probability 01 & 0.2703 & 0.1194 & 1.0000 & 74,532.1 & 102,154.2 \\
Simulation & Hierarchy contrast & -0.0032 & 0.1654 & 1.0004 & 19,925.8 & 14,626.0 \\
\addlinespace
Junyi & $\lambda$ & 6.9754 & 0.9494 & 1.0011 & 828.8 & 2,245.5 \\
Junyi & Hierarchy norm & 1.8674 & 0.2569 & 1.0014 & 853.1 & 2,150.1 \\
Junyi & Log score & -0.1640 & 0.2211 & 1.0018 & 2,590.5 & 5,317.9 \\
Junyi & Probability 00 & 0.9835 & 0.0320 & 1.0009 & 3,828.9 & 7,767.6 \\
Junyi & Probability 01 & 0.9925 & 0.0167 & 1.0003 & 3,976.1 & 8,390.3 \\
Junyi & Hierarchy contrast & -0.1903 & 2.7987 & 1.0010 & 2,686.1 & 5,974.8 \\
\addlinespace
Voteview & $\lambda$ & 12.4986 & 0.8462 & 1.0010 & 4,304.4 & 10,399.1 \\
Voteview & Hierarchy norm & 4.2355 & 0.2898 & 1.0013 & 5,518.3 & 12,888.6 \\
Voteview & Log score & -0.1699 & 0.0341 & 1.0019 & 4,533.6 & 10,633.9 \\
Voteview & Probability 00 & 0.9859 & 0.0387 & 1.0018 & 3,900.9 & 6,986.0 \\
Voteview & Probability 01 & 0.9759 & 0.0680 & 1.0024 & 5,778.3 & 14,229.1 \\
Voteview & Hierarchy contrast & -5.9706 & 4.5704 & 1.0032 & 3,995.8 & 8,379.2 \\
\bottomrule
\end{longtable}

\TableNote{Mean and SD summarize retained posterior draws. ESS is effective sample size. The audit uses four chains each for the simulation and Junyi targets and eight chains for Voteview; the prespecified thresholds are $\widehat R\le1.01$ and bulk and tail ESS at least 400.}
\endgroup

All 18 diagnostics pass the prespecified thresholds.
The combined maximum $\widehat R$ is $1.0032$, the minimum bulk ESS is $828.8$, and the minimum tail ESS is $2,150.1$.
Within Voteview, the maximum $\widehat R$ is $1.0032$.
Its minimum bulk and tail effective sample sizes are $3,900.9$ and $6,986.0$, respectively.

The diagnostics support convergence for the moderate computational targets. They do not replace the full-data predictive comparisons or establish variational interval coverage. A coverage study would require common invariant functionals and intervals from both procedures.

\bibliography{references-supplement}

\end{document}